\newcommand{\rmnum}[1]{\romannumeral #1}
\newcommand{\Rmnum}[1]{\expandafter\@slowromancap\romannumeral #1@}
\newif\if@borderstar
\def\bordermatrix{\@ifnextchar*{%
  \@borderstartrue\@bordermatrix@i}{\@borderstarfalse\@bordermatrix@i*}%
}
\def\@bordermatrix@i*{\@ifnextchar[{\@bordermatrix@ii}{\@bordermatrix@ii[()]}}
\def\@bordermatrix@ii[#1]#2{%
\begingroup
  \m@th\@tempdima8.75\p@\setbox\z@\vbox{%
    \def\cr{\crcr\noalign{\kern 2\p@\global\let\cr\endline }}%
    \ialign {$##$\hfil\kern 2\p@\kern\@tempdima & \thinspace %
    \hfil $##$\hfil && \quad\hfil $##$\hfil\crcr\omit\strut %
    \hfil\crcr\noalign{\kern -\baselineskip}#2\crcr\omit %
    \strut\cr}}%
  \setbox\tw@\vbox{\unvcopy\z@\global\setbox\@ne\lastbox}%
  \setbox\tw@\hbox{\unhbox\@ne\unskip\global\setbox\@ne\lastbox}%
  \setbox\tw@\hbox{%
    $\kern\wd\@ne\kern -\@tempdima\left\@firstoftwo#1%
    \if@borderstar\kern2pt\else\kern -\wd\@ne\fi%
    \global\setbox\@ne\vbox{\box\@ne\if@borderstar\else\kern 2\p@\fi}%
    \vcenter{\if@borderstar\else\kern -\ht\@ne\fi%
    \unvbox\z@\kern -\if@borderstar2\fi\baselineskip}%
    \if@borderstar\kern -2\@tempdima\kern2\p@\else\,\fi\right\@secondoftwo#1 $%
  }\null \;\vbox{\kern\ht\@ne\box\tw@}%
\endgroup
}
\newtheorem{theorem}{Theorem}
\newtheorem{lemma}{Lemma}
\newtheorem{cor}{Corollary}
\newtheorem{example}{Example}
\newtheorem{definition}{Definition}
\renewcommand{\arraystretch}{0.93}
\newcommand{\Rank}{{\mathrm{Rank}}}
\begin{document}

\title{Distributed Source Coding\\ for Compressing Vector-Linear Functions}

\author{Xuan~Guang,~~Xiufang~Sun,~~and~~Ruze~Zhang}

\maketitle

{
\begin{abstract}
Inspired by mobile satellite communication systems and the important and prevalent applications of computational tasks, we consider a distributed source coding model for compressing vector-linear functions, which consists of multiple sources, multiple encoders and a decoder linked to all the encoders. In the model, each encoder has access to a certain subset of the sources and the decoder is required to compute with zero error a vector-linear function, corresponding to a matrix~$T$, of the source information. The connectivity state between the sources and the encoders and the vector-linear function are all arbitrary.
From the information-theoretic point of view, we are interested in the function-compression capacity, which is defined by the minimum average number of times that the system is used for computing with zero error the vector-linear function once. This notion  measures the efficiency of using the system. For the nontrivial models with $1\!<\!\textup{Rank}(T)\!<\!s$, the explicit characterization of the function-compression capacity  in general is overwhelmingly difficult.
In the current paper, we first present a general lower bound on the function-compression capacity applicable to arbitrary connectivity states and vector-linear functions. Next, we confine to the nontrivial models with only three sources and no more than three encoders. We prove that all the $3\times2$ column-full-rank matrices $T$ can be divided into two types $T_1$ and~$T_2$, for which  the function-compression capacities  are identical if the matrices $T$ have the same type. We further introduce model isomorphism and prove that isomorphic models are of the same function-compression capacity. We explicitly characterize the capacities for two most nontrivial models associated with $T_2$ by a novel approach of both upper bounding and lower bounding the size of image sets of encoding functions. This shows that the lower bound thus obtained  is not always tight. Rather, by completely characterizing their function-compression capacities, the lower bound is tight for all the models associated with $T_1$ and  all the models associated with $T_2$ except for the two most nontrivial models. Furthermore, we apply the function-compression capacities for the two most nontrivial models to network function computation, and show that the best known upper bound proved by Guang \emph{et. al.} (2019) on computing capacity in network function computation is in general not tight for computing vector-linear functions which answers the open problem that whether this bound is always tight.
\end{abstract}

\section{Introduction}

In a generic mobile satellite communication system, a transmitter (source) can broadcast information to all the satellites (encoders) within its line of sight simultaneously; a satellite can combine and encode the information it receives from all the transmitters it covers and then broadcast the encoded information; and a receiver (decoder) can decode the encoded information it receives from all the satellites within the line of sight (cf. Fig.\,\ref{satellite}).
\begin{figure}[t]
    \centering
    \includegraphics[width=0.55\textwidth]{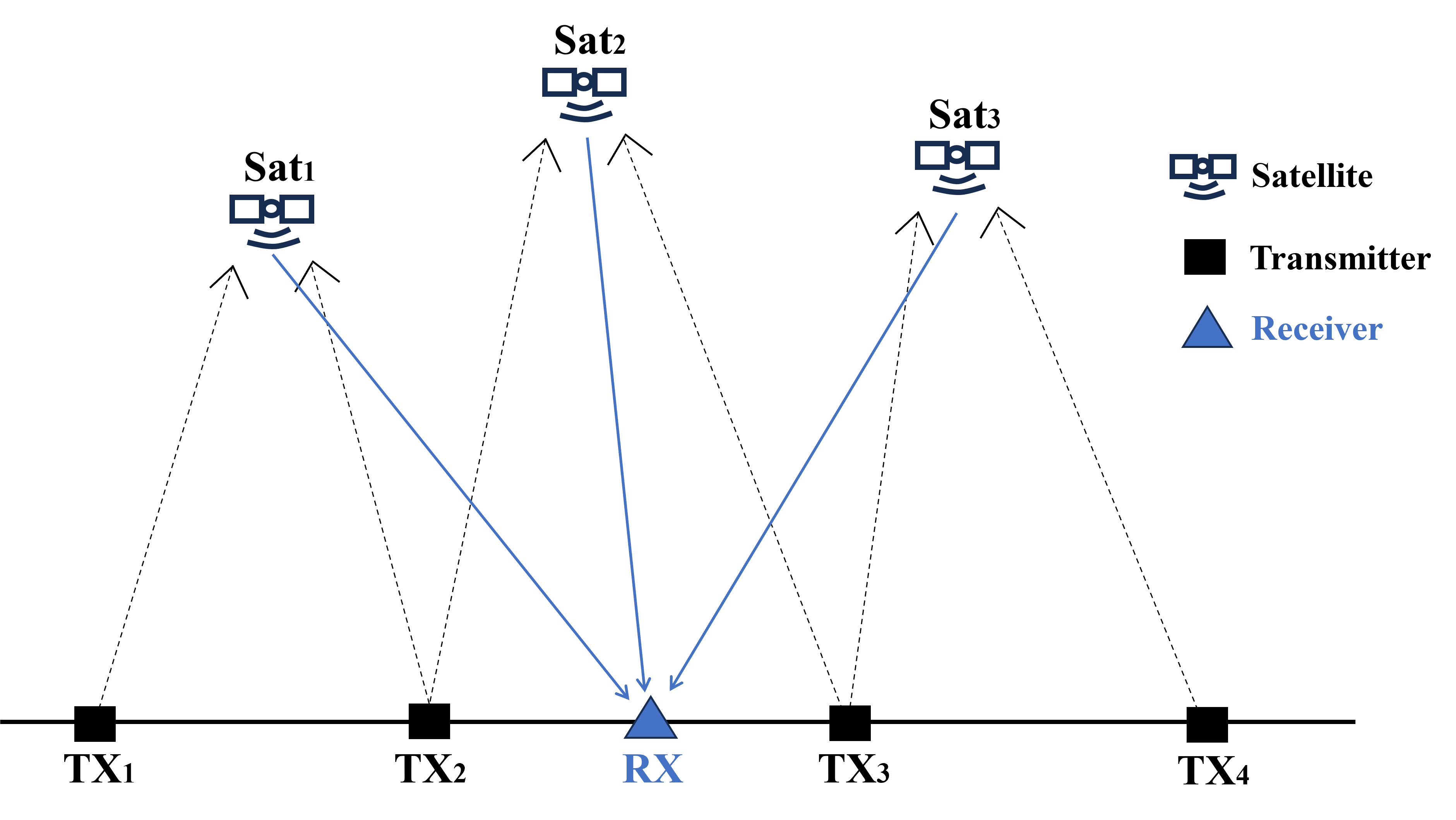}
    \caption{The satellite communication system.}
    \label{satellite}
    \vspace{-2em}
\end{figure}
Inspired by the system and the important and prevalent applications of computational tasks in the era of big data and artificial intelligence, we consider a new problem called the \emph{distributed source coding problem for function compression}.
A distributed source coding model associated with a single decoder for function compression consists of multiple sources, multiple encoders and a decoder linked to all the encoders, where each encoder has access to a certain subset of the sources, and the decoder is required to compute a target function of the source information.
Here, the bandwidth constraints of upload links between the sources and the encoders and the download links between the encoders and the decoder are asymmetric. We assume that the bandwidth of the upload links is unlimited and the bandwidth of the download links is limited because, for instance, transmitters can be equipped with high-power uplink amplifiers, and satellite platforms are in general power-limited.


\vspace{-0.5em}

\subsection{Related Works}

Inspired by the mobile satellite communication systems, Yeung and Zhang \cite{Yeung_distributed_source_coding} put forward the distributed source coding problem with multiple sources, multiple encoders and multiple decoders, where each decoder is required to reconstruct a certain subset of the sources. This work, together with the class of multilevel diversity coding problems \cite{Roche_symmetrical,Yeung_distortion,Yeung_symmetrical}, was subsequently generalized to the class of \emph{network coding} problems, e.g., \cite{Network_information_flow,linear,alg,Zhang-book,yeung08b},
which launches a new direction in multiterminal source coding.
Another related line of research is function computation/compression, e.g., \cite{Korner-Marton-IT73,Doshi_fun_comp_graph_color_sch,Feizi-Medard,
Orlitsky-Roche_general_side_inf_model_rat_reg,Witsenhausen-IT-76,Alon-Orlitsky_source_cod_graph_entropies,
Guang_Zhang_Arithmetic_sum_TIT,Appuswamy11,Huang_Comment_cut_set_bound,Guang_Improved_upper_bound,
Appuswamy-lin-func-lin-code,Li_Xu_vector_linear_diamond,Guang_Zhang_Arithmetic_sum_Sel_Areas}.
To our knowledge, the first non-identity function compression problem is the K\"{o}rner-Marton problem \cite{Korner-Marton-IT73}, in which the decoder requires to compute the modulo $2$ sum of two correlated sources. Doshi \emph{et al.} \cite{Doshi_fun_comp_graph_color_sch} generalized the K\"{o}rner-Marton model by requiring the decoder to compress with asymptotically zero error an arbitrary function of two correlated sources. Feizi and M{\'e}dard \cite{Feizi-Medard} further investigated the function compression over a tree network. Orlitsy and Roche \cite{Orlitsky-Roche_general_side_inf_model_rat_reg} considered an asymptotically-zero-error function compression problem with side information at the decoder.
The zero-error source coding problem with side information at the decoder was considered by Witsenhausen \cite{Witsenhausen-IT-76} and further developed by Alon and Orlitsky \cite{Alon-Orlitsky_source_cod_graph_entropies}. Guang and Zhang \cite{Guang_Zhang_Arithmetic_sum_TIT} investigated the zero-error distributed compression problem of binary arithmetic sum.
In the network function computation,
a single sink node is required to compute a function of the source messages generated by multiple source nodes over a directed acyclic network.
Appuswamy \emph{et al.} \cite{Appuswamy11} investigated the fundamental \emph{computing capacity}.
Motivated by the cut-set based upper bound obtained in \cite{Appuswamy11}, Huang \emph{et al.} \cite{Huang_Comment_cut_set_bound} obtained a general upper bound on the computing capacity, where ``general'' means that the upper bound can be applied for arbitrary functions and arbitrary network topologies. Subsequently, Guang \emph{et al.} \cite{Guang_Improved_upper_bound} proved an improved general upper bound by using a novel approach of the cut-set strong partition, which is not only a strict improvement over the previous upper bounds but also tight for all the considered network function computation problems previous to~\cite{Guang_Improved_upper_bound} whose computing capacities are known.

\vspace{-0.5em}

\subsection{Contributions and Organization of the Paper}
In this paper, we confine our discussion to distributed source coding model for compressing vector-linear functions denoted by $(s,m,\Omega,T)$, where $s$ and $m$ are the number of sources and the number of encoders, respectively; $\Omega$ is an arbitrary connectivity state between the sources and the encoders; and $T$ is the matrix corresponding to an arbitrary vector-linear function over a finite field.
From the information-theoretic point of view, we are interested in the \emph{function-compression capacity} for the model $(s,m,\Omega,T)$, which is defined by the minimum average number of times that the system is used for computing with zero error the target function once. This notion measures the efficiency of using the system,
rather than the notion of compression capacity considered in most previously studied multiterminal source coding models in which how to efficiently establish a system is investigated, e.g., lossless source coding models \cite{Korner-Marton-IT73,Orlitsky-Roche_general_side_inf_model_rat_reg,
Doshi_fun_comp_graph_color_sch,Feizi-Medard,Slepian-Wolf-IT73}, zero-error source coding models \cite{Witsenhausen-IT-76,Alon-Orlitsky_source_cod_graph_entropies,Koulgi_zero-error-cod_cor_inf_sour}, and lossy source coding models \cite{Wyner-Ziv_rat_distortion_fun_sid_inf,Yamamoto_rat_distortion_gener_fun_sid_inf,
Berger-Yeung_multi_source_cod_dist_cri}. 
The main contributions and organization of the paper are given as follows.
\begin{itemize}
\item  In Section \ref{DSC-FC}, we formally present the distributed source coding model for compressing vector-linear functions and define the function-compression capacity for the model from the viewpoint of  system usage. Further, we show that each specified model can be transformed into an equivalent model of network function computation. This equivalence is useful to characterize the function-compression capacity in the rest of the paper.
\item  Some preparatory results are given in Section \ref{pre-result}. We first characterize the function-compression capacities for two special classes of models $(s,m,\Omega,T)$ with $\textup{Rank}(T)=1$ and $\textup{Rank}(T)=s$, which correspond to compressing the scalar-linear function and the identity function, respectively. For the other nontrivial models $(s,m,\Omega,T)$ with $1\!<\!\textup{Rank}(T)\!<\!s$, we present a lower bound on the function-compression capacity by applying the equivalence to the model of network function computation and  the best known upper bound proved by Guang~\emph{et al.} \cite{Guang_Improved_upper_bound} on the computing capacity in network function computation. Next, we focus on the nontrivial models $(3,m,\Omega,T)$ with only three sources (i.e., $s=3$) considered in the rest of the paper and divide all the $3\times2$ column-full-rank matrices $T$ into two types $T_1$ and $T_2$, for which we prove that the function-compression capacities for the model $(3,m,\Omega,T)$ are identical if the matrices $T$ have the same type. We further introduce a notion of model isomorphism, and then prove that isomorphic models are of the same function-compression capacity.
\item  Section \ref{section-not-tight} is devoted to the capacity characterization for two most nontrivial models associated with~$T_2$, which shows that the lower bound thus obtained on the function-compression capacity is not always tight. Rather, the  lower bound is tight for each model $\big(3,m,\Omega,T_1\big)$ with  arbitrary connectivity states~$\Omega$ for $1\!\leq\!m\!\leq\!3$ and each model $\big(3,m,\Omega,T_2\big)$ with  arbitrary connectivity states~$\Omega$ for $1\!\leq\!m\!\leq\!3$ except for the two most nontrivial models, which are discussed in Sections~\ref{section-3-m-T1} and \ref{section-3-m-T2}, respectively.
    Following an intuitive explanation why the lower bound is not tight, we characterize the function-compression capacities for the two models, in which we not only upper bound but also lower bound the size of image sets of encoding functions by a novel approach to obtain an improved converse proof. An important application of the function-compression capacity for the two models is in the tightness of the best known upper bound on the computing capacity in network function computation, where whether this upper bound is in general tight or not was given as an open problem in \cite{Guang_Improved_upper_bound}. The function-compression capacity for the two models implies that the best known upper bound is in general not tight for computing vector-linear functions, rather than computing scalar-linear functions of which the computing capacities over arbitrary network topologies are  characterized by this upper bound.
\item  In Sections \ref{section-3-m-T1} and \ref{section-3-m-T2}, we characterize the function-compression capacities, respectively, for all the models $(3,m,\Omega,T_1)$ with arbitrary connectivity states~$\Omega$ for $1\leq m\leq3$ and all the models $(3,m,\Omega,T_2)$ with arbitrary connectivity states~$\Omega$ for $1\leq m\leq3$ except for the above two most nontrivial models. We first specify the obtained lower bound for each model therein and then prove that the specified lower bound is  tight  by designing optimal source codes for isomorphic models.
\item  In Section \ref{conclusion}, we conclude with a summary of our results and a remark on future research.
\end{itemize}
}

\section{Distributed Source Coding for Function Compression}\label{DSC-FC}

\subsection{Model}\label{Model}

We consider a \emph{distributed source coding model for function compression} as depicted in Fig.\,\ref{DSCM-FC}, in which there are $s$ \emph{sources} $\sigma_1, \sigma_2, \cdots, \sigma_s$, $m$ \emph{encoders} $v_1, v_2, \cdots, v_m$, and a single \emph{decoder} $\rho$ linked to all the~$m$ encoders. We further let $S=\{\sigma_1, \sigma_2, \cdots, \sigma_s\}$ and $V=\{v_1,v_2,\cdots,v_m\}$. Each source $\sigma_i\in S$ generates a sequence of symbols in a finite alphabet $\mathcal{A}$ and transmits the sequence to the encoders in a given subset~$\Gamma_{\sigma_i}$ of $V$. Here, we use $\Gamma_{\sigma_i}$ to denote the set of the encoders that are able to receive the sequence of symbols generated by $\sigma_i$, i.e.,
\begin{equation}\label{Gamma-sigma-i}
\Gamma_{\sigma_i}=\big\{ v\in V:~ \sigma_i\rightarrow v \big\},
\end{equation}
where we use $\sigma_i\rightarrow v$ to represent that the encoder $v$ can receive the sequence of symbols generated by~$\sigma_i$. Further, we let $\Omega\triangleq\big(\Gamma_{\sigma_1},\Gamma_{\sigma_2},\cdots,\Gamma_{\sigma_s}\big)$, which can be regarded as the \emph{state of connectivity} between the sources and the encoders. Here, we assume without loss of generality that $\Gamma_{\sigma_i}\neq\emptyset$ for each $\sigma_i\in S$ and $\bigcup\limits_{i=1}^s \Gamma_{\sigma_i}=V$. Dually, for each $v_j\in V$, $1\leq j\leq m$, we let
\begin{equation*}\label{Lambda-j}
\Theta(v_j)\triangleq\big\{ \sigma\in S:~ \sigma\rightarrow v_j \big\}=\big\{ \sigma\in S:~ v_j\in\Gamma_{\sigma} \big\}, \footnote{Here, the assumption $\bigcup_{i=1}^s \Gamma_{\sigma_i}=V$ implies that $\Theta(v_j)\neq\emptyset$ for each $v_j\in V$.}
\end{equation*}
which is the set of the sources whose sequences can be received by the encoder $v$. We can readily see that $\big(\Gamma_{\sigma_1},\Gamma_{\sigma_2},\cdots,\Gamma_{\sigma_s}\big)$ and $\big(\Theta(v_1),\Theta(v_2),\cdots,\Theta(v_m)\big)$ are one-to-one corresponding, and thus in the rest of the paper, we write either $\Omega=\big(\Gamma_{\sigma_1},\Gamma_{\sigma_2},\cdots,\Gamma_{\sigma_s}\big)$ or $\Omega=\big(\Theta(v_1),\Theta(v_2),\cdots,\Theta(v_m)\big)$ according to the convenience of discussion. Furthermore, for each use of the link $e_j\triangleq(v_j,\rho)$ connecting an encoder~$v_j$ to the decoder $\rho$, a symbol in $\mathcal{A}$ can be reliably transmitted from $v_j$ to $\rho$, i.e., we take the capacity of each link to be $1$ with respect to the alphabet $\mathcal{A}$. Next, consider a nonconstant function $f:~ \mathcal{A}^s\rightarrow \textup{Im}\,f$,\footnote{In this paper, we use $\textup{Im}\,f$ to denote the image set of a function $f$.} called the \emph{target function}, that is needed to be computed with zero error at the decoder $\rho$. Without loss of generality, we assume that the $i$th argument of the target function $f$ is generated at the~$i$th source $\sigma_i$ for $1\leq i\leq s$. We have completed the specification of the distributed source coding model for function compression, which is denoted by $(s,m,\Omega,f)$.
\begin{figure}[http]
\vspace{-0.5em}
	\centering
\tikzstyle{source}=[draw,circle,fill=gray!20, minimum size=6pt, inner sep=0pt]
\tikzstyle{encoder}=[draw,circle,fill=gray!20, minimum size=6pt, inner sep=0pt]
\tikzstyle{user}=[draw,circle,fill=gray!20, minimum size=6pt, inner sep=0pt]
\tikzstyle{blackdot}=[circle, fill, inner sep=1pt]

	\begin{tikzpicture}


        \node at (-5.8,6) {sources};
		\node[source](a1)at(-4,6){};
        \node at (-4,6.4) {$\sigma_1$};
		\node[source](a2)at(-1.6,6){};
        \node at (-1.6,6.4) {$\sigma_2$};
        \node[blackdot](a3) at (-0.3,6){};
        \node[blackdot](a4) at (0,6){};
        \node[blackdot](a5) at (0.3,6){};
		\node[source](a6)at(1.8,6){};
        \node at (1.8,6.4) {$\sigma_s$};

        \draw[->,>=latex,thick] (a1) -- (-4.5,4.8) node {};
        \draw (-3.95,5.2) node  {$\cdots$};
        \draw[->,>=latex,thick] (a1) -- (-3.5,4.8) node {};
        \node at (-4,4.5) {$\Gamma_{\sigma_1}$};

        \draw[->,>=latex,thick] (a2) -- (-2.1,4.8) node {};
        \draw (-1.55,5.2) node  {$\cdots$};
        \draw[->,>=latex,thick] (a2) -- (-1.1,4.8) node {};
        \node at (-1.6,4.5) {$\Gamma_{\sigma_2}$};

        \draw[->,>=latex,thick] (a6) -- (1.3,4.8) node {};
        \draw (1.85,5.2) node  {$\cdots$};
        \draw[->,>=latex,thick] (a6) -- (2.3,4.8) node {};
        \node at (1.8,4.5) {$\Gamma_{\sigma_s}$};

        \node at (-5.8,3.7) {encoders};
		\node[encoder](r1)at(-4,3.7){};
        \node at (-4.4,3.7) {$v_1$};
		\node[encoder](r2)at(-1.6,3.7){};
        \node at (-2,3.7) {$v_2$};
        \node[blackdot](r3) at (-0.3,3.7){};
        \node[blackdot](r4) at (0,3.7){};
        \node[blackdot](r5) at (0.3,3.7){};
		\node[encoder](rm)at(1.8,3.7){};
        \node at (2.2,3.7) {$v_m$};

		\node at (-5.8,1.5) {decoder};
        \node[user](p)at(-1,1.5){};
        \node at (-1,1.1) {$\rho$};

		\draw[->,>=latex,thick](r1)--(p)  node[midway, auto,swap, left=0mm] {};
		\draw[->,>=latex,thick](r2)--(p)  node[midway, auto,swap, right=0mm] {};
		\draw[->,>=latex,thick](rm)--(p)  node[midway, auto,swap, right=0mm] {};
	\end{tikzpicture}
\vspace{-0.5em}
	\caption{The distributed source coding model for function compression.}
\label{DSCM-FC}
\vspace{-1em}
\end{figure}
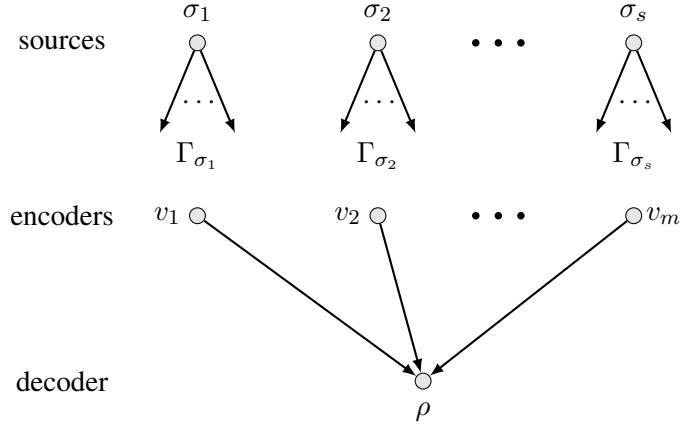

We consider computing the target function $f$ multiple times over the model $(s,m,\Omega,f)$. To be specific, let $k$ be a positive integer. Each source $\sigma_i$, $1\leq i\leq s$ generates a sequence of $k$ symbols $\boldsymbol{x}_i\triangleq(x_{i,1}, x_{i,2}, \cdots, x_{i,k})\in\mathcal{A}^k$, which is called the \emph{source message} generated by $\sigma_i$. At the single decoder~$\rho$, the $k$ values of the target function $f$
\[
f(\boldsymbol{x}_1,\boldsymbol{x}_2,\cdots,\boldsymbol{x}_s)\triangleq\big( f(x_{1,\ell}, x_{2,\ell}, \cdots, x_{s,\ell}):~ \ell=1,2,\cdots,k \big)
\]
are required to be computed with zero error. Toward this end, we define a \emph{$k$-shot (function-compression) source code} $\mathbf{C}$ for $(s,m,\Omega,f)$, which consists of
\begin{itemize}
\item an \emph{encoding function} for each encoder $v_j$, $1\leq j\leq m:$
\begin{equation*}\label{enc-fun}
\varphi_j:~ \mathcal{A}^{k\cdot|\Theta(v_j)|}\rightarrow \textup{Im}\, \varphi_j,
\end{equation*}
which is used to compress the source messages received by $v_j$;

\item a \emph{decoding function} at the decoder $\rho:$
\begin{equation*}\label{dec-fun}
\psi:~ \prod\limits_{j=1}^m \textup{Im}\, \varphi_j \rightarrow (\textup{Im}\,f)^k,
\end{equation*}
which is used to compute $k$ values of $f$ with zero error at $\rho$.
\end{itemize}
Such a $k$-shot source code $\mathbf{C}=\{\varphi_j:1\leq j\leq m;~ \psi\}$ for the model $(s,m,\Omega,f)$ is \emph{admissible} if the~$k$ values of the target function $f$ can be computed with zero error at the decoder $\rho$ for all the source messages $\boldsymbol{x}_i\in\mathcal{A}^k$, $1\leq i\leq s$, namely that
\[
\psi\Big( \varphi_j\big(\boldsymbol{x}_{\Theta(v_j)}\big):~ 1\leq j\leq m \Big)=f(\boldsymbol{x}_1,\boldsymbol{x}_2,\cdots,\boldsymbol{x}_s),\quad \forall~ \boldsymbol{x}_i\in \mathcal{A}^k,~1\leq i\leq s,
\]
where we let $\boldsymbol{x}_{\Theta(v_j)}\triangleq\big(\boldsymbol{x}_i:~ \sigma_i\in\Theta(v_j)\big)$. For such an admissible $k$-shot source code $\mathbf{C}=\big\{\varphi_j:1\leq j\leq m;~ \psi\big\}$, we let
\[
n_j(\mathbf{C})\triangleq \left\lceil \log_{|\mathcal{A}|} |\textup{Im}\, \varphi_j| \right\rceil
\]
for each $1\leq j\leq m$, which is the number of times that the link $(v_j,\rho)$ is used to transmit the encoded message $\varphi_j\big(\boldsymbol{x}_{\Theta(v_j)}\big)$ using the code $\mathbf{C}$. We further let
\[
R_j(\mathbf{C})\triangleq \frac{~n_j(\mathbf{C})~}{~k~},
\]
which is the average number of times that the link $(v_j,\rho)$ is used to compute $f$ once by using the code~$\mathbf{C}$. Then $R_j(\mathbf{C})$ can be regarded as the \emph{compression rate} of the encoder $v_j$. The \emph{coding rate} of the code~$\mathbf{C}$, denoted by $R(\mathbf{C})$, is defined as the maximum (worst) compression rate for all the encoders, i.e.,
\[
R(\mathbf{C})\triangleq\max_{1\leq j\leq m} R_j(\mathbf{C})=\frac{~\max\limits_{1\leq j\leq m} n_j(\mathbf{C})~}{~k~}=\frac{~n(\mathbf{C})~}{~k~},
\]
where we let $n(\mathbf{C})\triangleq\max\limits_{1\leq j\leq m} n_j(\mathbf{C})$. This coding rate $R(\mathbf{C})$ can be regarded as the average ``cost'' of the code $\mathbf{C}$ for computing $f$ once on the model $(s,m,\Omega,f)$.

Next, we say that a nonnegative real number $R$ is \emph{achievable} for the model $(s,m,\Omega,f)$ if $\forall~ \epsilon>0$, there exists an admissible $k$-shot code $\mathbf{C}$ for some positive integer $k$ such that $R(\mathbf{C})<R+\epsilon$. Consequently, the \emph{function-compression capacity} for $(s,m,\Omega,f)$ is defined as
\begin{equation}\label{def-capacity}
\mathcal{C}(s,m,\Omega,f)\triangleq \inf \big\{ R:~ R~ \textup{is achievable for}~ (s,m,\Omega,f) \big\}.
\end{equation}

In this paper, we consider the target function to be a \emph{vector-linear function} over a finite field $\mathbb{F}_q$, where~$q$ is a prime power. More precisely, let
\begin{equation}\label{vec-lin-func}
f(x_1,x_2,\cdots,x_s)=(x_1,x_2,\cdots,x_s)\cdot T,\quad \forall~ x_i\in\mathbb{F}_q,~ 1\leq i\leq s,
\end{equation}
where $T$ is an $\mathbb{F}_q$-valued column-full-rank matrix of size $s\times r$, i.e., $\textup{Rank}(T)=r$ (which implies $r\leq s$); and we assume without loss of generality that $T$ has no all-zero rows. In the rest of the paper, we will write the model $(s,m,\Omega,f)$ as $(s,m,\Omega,T)$ for computing such a vector-linear function $f$ in \eqref{vec-lin-func}.

\vspace{-0.5em}

\subsection{Equivalence to the Model of Network Function Computation}\label{NFC}

In this subsection, we will show that each specified model $(s,m,\Omega,T)$ can be transformed into an equivalent model of \emph{network function computation}. For a specified model $(s,m,\Omega,T)$, we let $\mathcal{G}=(\mathcal{V},\mathcal{E})$ be a directed acyclic graph with the node set $\mathcal{V}=S\cup V\cup \{\rho\}$ and the edge set $\mathcal{E}$ which will be clear later. For each \emph{source node} $\sigma_i$, $1\leq i\leq s$, we set $\ell$ parallel edges from $\sigma_i$ to each $v_j\in\Gamma_{\sigma_i}$, denoted by $d_{i,j}^{(1)},\,d_{i,j}^{(2)},\cdots,d_{i,j}^{(\ell)}$, where
\begin{equation}\label{def-ell}
\ell\triangleq \left\lceil \frac{m}{r} \right\rceil.
\end{equation}
We let $\mathcal{E}(\sigma_i,v_j)\triangleq\big\{d_{i,j}^{(1)},\,d_{i,j}^{(2)},\cdots,d_{i,j}^{(\ell)}\big\}$ for notational simplicity. Accordingly, we let
\[
\mathcal{E}\triangleq \quad\quad \bigcup\limits_{ \mathclap{\qquad\quad~ \textup{all pairs}\, (\sigma_i,v_j)\, \textup{with}\, \sigma_i\rightarrow v_j } } \quad\mathcal{E}(\sigma_i,v_j)~~ \bigcup~~ \{e_1,e_2,\cdots,e_m\},
\]
where we recall that $e_j=(v_j,\rho)$, the edge from the intermediate node $v_j\in V$ to $\rho$, $1\leq j\leq m$.
We further assume that each edge in $\mathcal{E}$ has the unit capacity with respect to the finite field $\mathbb{F}_q$. For an edge $e\in\mathcal{E}$, the \emph{tail} node and \emph{head} node of $e$ are denoted by tail$(e)$ and head$(e)$, respectively. For a node~$u$ in $\mathcal{V}$, we let $\textup{In}(u)=\{ e \in \mathcal{E}:\textup{head}(e)=u \}$ and $\textup{Out}(u)=\{ e \in \mathcal{E}:\textup{tail}(e)=u \}$, the set of \emph{input edges} of $u$ and the set of \emph{output edges} of $u$, respectively. The graph $\mathcal{G}$, together with $S$ and $\rho$, forms a \emph{network}~$\mathcal{N}$, i.e., $\mathcal{N}=(\mathcal{G},S,\rho)$. On the network $\mathcal{N}$, we consider computing with zero error the target function $f(x_S)=x_S\cdot T$ multiple times, where we let $x_S\triangleq(x_1,x_2,\cdots,x_s)$ for notational simplicity. We have specified the network function computation model induced by $(s,m,\Omega,T)$, and denote the model by $(\mathcal{N},T)$.

For a positive integer $k$, a \emph{$k$-shot (function-computing) network code} $\widehat{\mathbf{C}}=\big\{\theta_e:e\in\mathcal{E};~ \phi\big\}$ for $(\mathcal{N},T)$ consists of
\begin{itemize}
\item a \emph{local encoding function} $\theta_e$ for each edge $e\in\mathcal{E}$ such that
\begin{equation}\label{def:loc-encod-fun}
  \theta_{e}:
  \begin{cases}
    \qquad \mathbb{F}_q^k \rightarrow \textup{Im}\,\theta_e, & \textup{if}~\textup{tail}(e)=\sigma_{i}~\textup{for some}~ 1\leq i\leq s, \\
    \prod\limits_{d\in \textup{In}(v_j)} \textup{Im}\,\theta_d \rightarrow
    \textup{Im}\,\theta_e, & \textup{if}~\textup{tail}(e)=v_j~\textup{for some}~ 1\leq j\leq m;
  \end{cases}
\end{equation}

\item a \emph{decoding function} at the \emph{sink node} $\rho$ given by
\[
\phi:~\prod_{j=1}^m \textup{Im}\,\theta_{e_j} \rightarrow \mathbb{F}_q^{k\times r},
\]
which is used to compute the $k$ function values $f(\boldsymbol{x}_S)$ with zero error at $\rho$, where
\[
f(\boldsymbol{x}_S)=\boldsymbol{x}_S\cdot T=(\boldsymbol{x}_1,\boldsymbol{x}_2,\cdots,\boldsymbol{x}_s)\cdot T=
    \begin{bmatrix} (x_{1,1},x_{2,1},\cdots,x_{s,1})\cdot T\\ (x_{1,2},x_{2,2},\cdots,x_{s,2})\cdot T \vspace{-0.5em}\\ \vdots \vspace{-0.7em} \\ (x_{1,k},x_{2,k},\cdots,x_{s,k})\cdot T
    \end{bmatrix}
\]
with $\boldsymbol{x}_S=(\boldsymbol{x}_1,\boldsymbol{x}_2,\cdots,\boldsymbol{x}_s)$ and $\boldsymbol{x}_i=(x_{i,1},x_{i,2},\cdots,x_{i,k})^\top$ being the source message generated by~$\sigma_i$ for $1\leq i\leq s$.
\end{itemize}

With the causality of the encoding mechanism as specified in \eqref{def:loc-encod-fun}, we see that the message transmitted on the edge $e_j,1\leq j\leq m$ is a function of $\boldsymbol{x}_{\Theta(v_j)}$ (where $\boldsymbol{x}_{\Theta(v_j)}=\big(\boldsymbol{x}_i:\,\sigma_i\in\Theta(v_j)\big)$), denoted by $\widehat{\theta}_{e_j}\big(\boldsymbol{x}_{\Theta(v_j)}\big)$. More precisely,
\[
\widehat{\theta}_{e_j}\big(\boldsymbol{x}_{\Theta(v_j)}\big)=\theta_{e_j}\big( \theta_{\mathcal{E}(\sigma_i,v_j)}(\boldsymbol{x}_i):\, \sigma_i\in\Theta(v_j) \big),
\]
where $\theta_{\mathcal{E}(\sigma_i,v_j)}(\boldsymbol{x}_i)\triangleq\big( \theta_{d_{i,j}^{(1)}}(\boldsymbol{x}_i),\, \theta_{d_{i,j}^{(2)}}(\boldsymbol{x}_i),\cdots,\theta_{d_{i,j}^{(\ell)}}(\boldsymbol{x}_i) \big)$, and we call $\widehat{\theta}_{e_j}$ the \emph{global encoding function} for the edge $e_j$.

A $k$-shot network code $\widehat{\mathbf{C}}=\big\{\theta_e:e\in\mathcal{E};~ \phi\big\}$ for the model $(\mathcal{N},T)$ is called \emph{admissible} if the $k$ values $f(\boldsymbol{x}_S)=\boldsymbol{x}_S\cdot T$ can be computed with zero error at the sink node $\rho$ for all possible source messages $\boldsymbol{x}_S=(\boldsymbol{x}_1,\boldsymbol{x}_2,\cdots,\boldsymbol{x}_s)\in\mathbb{F}_q^{k\times s}$, $1\leq i\leq s$, namely that
\[
\phi\left( \widehat{\theta}_{e_j}\big(\boldsymbol{x}_{\Theta(v_j)}\big):~ 1\leq j\leq m \right)=\boldsymbol{x}_S\cdot T,\quad \forall~ \boldsymbol{x}_i\in \mathbb{F}_q^k,~1\leq i\leq s.
\]
For such an admissible $k$-shot network code $\widehat{\mathbf{C}}$, we let
\[
n_e\big(\widehat{\mathbf{C}}\big)\triangleq\Big\lceil\log_{q}|\textup{Im}\,\theta_{e}|\Big\rceil
\]
for each $e\in\mathcal{E}$, which is the number of times that the edge $e$ is used to transmit $\widehat{\theta}_{e_j}\big(\boldsymbol{x}_{\Theta(v_j)}\big)$ by using the code $\widehat{\mathbf{C}}$. The \emph{computing rate} of the code $\widehat{\mathbf{C}}$, denoted by $\mathcal{R}\big(\widehat{\mathbf{C}}\big)$, is defined as the average number of times that the vector-linear function of $T$ can be computed with zero error for one use of the network~$\mathcal{N}$ by using the code $\mathbf{C}$, i.e.,
\[
\mathcal{R}\big(\widehat{\mathbf{C}}\big)\triangleq \frac{~k~}{~\max\limits_{e\in\mathcal{E}}\,n_e\big(\widehat{\mathbf{C}}\big)~}
=\frac{~k~}{~n\big(\widehat{\mathbf{C}}\big)~},
\]
where $n\big(\widehat{\mathbf{C}}\big)\triangleq\max\limits_{e\in\mathcal{E}}\,n_e\big(\widehat{\mathbf{C}}\big)$ can be regarded as the number of times that the network $\mathcal{N}$ is used to compute $f(x_S)=x_S\cdot T$, the target function, $k$ times by using the code $\widehat{\mathbf{C}}$. The \emph{computing capacity} for $(\mathcal{N},T)$ is defined as
\begin{align}\label{def-computing-capacity}
\mathcal{C}(\mathcal{N},T)\triangleq  \sup \Big\{ \mathcal{R}\big(\widehat{\mathbf{C}}\big):~ \textup{all admissible}~ k\textup{-shot network codes}~ \widehat{\mathbf{C}}~ \textup{for}~ (\mathcal{N},T) \Big\}.
\end{align}

For the above models $(s,m,\Omega,T)$ and $(\mathcal{N},T)$, we claim that
\begin{equation}\label{correlation}
\mathcal{C}(s,m,\Omega,T)=\frac{~1~}{~\mathcal{C}(\mathcal{N},T)~}.
\end{equation}
We first show that for the model $(\mathcal{N},T)$, each admissible $k$-shot network code $\widehat{\mathbf{C}}=\big\{\theta_e:e\in\mathcal{E};~ \phi\big\}$ can be transformed to another admissible $k$-shot network code such that each $v_j,1\leq j\leq m$ receives all the source messages $\boldsymbol{x}_i=(x_{i,1},x_{i,2},\cdots,x_{i,k})^\top$ for $\sigma_i\in\Theta(v_j)$, where the latter is also an admissible $k$-shot source code for $(s,m,\Omega,T)$. To see this, for the $k$-shot network code $\widehat{\mathbf{C}}=\big\{\theta_e:e\in\mathcal{E};~ \phi\big\}$, we consider the edge subset $\textup{In}(\rho)=\{e_j:\,1\leq j\leq m\}$, which is a cut set separating $\rho$ from all the source nodes $\sigma_i,1\leq i\leq s$.\footnote{We say that $\textup{In}(\rho)$ is a cut set separating $\rho$ from all the source nodes $\sigma_i,1\leq i\leq s$ if there exists no directed path from each source node $\sigma_i$ for $1\leq i\leq s$ to $\rho$ upon deleting the edges in $\textup{In}(\rho)$ from $\mathcal{E}$.} Thus we have
\begin{equation}\label{code-equiva-geq}
\begin{split}
q^{m\cdot n}\geq \prod_{j=1}^m \big|\textup{Im}\,\theta_{e_j}\big| &\geq \#\left\{ \left( \widehat{\theta}_{e_j}\big(\boldsymbol{x}_{\Theta(v_j)}\big):\,1\leq j\leq m \right):~ \forall~ \boldsymbol{x}_i\in\mathbb{F}_q^k,\,1\leq i\leq s \right\}\geq q^{k\cdot r},
\end{split}
\end{equation}
where we let $n\triangleq n\big(\widehat{\mathbf{C}}\big)$ and $r=\textup{Rank}(T)$ for notational simplicity. Here, the first inequality in \eqref{code-equiva-geq} follows from the fact that $q^n\geq\big|\textup{Im}\,\theta_{e_j}\big|$ for each $1\leq j\leq m$, and the last inequality in \eqref{code-equiva-geq} holds because the code $\widehat{\mathbf{C}}$, which can compute the $k$ values of the vector-linear function, has to distinguish all the images $\boldsymbol{x}_S\cdot T\in\big(\mathbb{F}_q^r\big){^k}=\mathbb{F}_q^{k\times r}$ on the edge subset $\textup{In}(\rho)$. By \eqref{code-equiva-geq}, we immediately obtain that
\begin{equation}\label{n-geq-k/ell}
n\geq \frac{~kr~}{~m~}\geq \frac{~k~}{~\ell~},
\end{equation}
where the second inequality in \eqref{n-geq-k/ell} follows from $\ell=\lceil \frac{m}{r} \rceil$ by \eqref{def-ell}. We have thus obtained that for any admissible $k$-shot network code, at least $k/\ell$ symbols in $\mathbb{F}_q$ can be transmitted on each edge. In particular, we consider $\ell$ parallel edges $d_{i,j}^{(1)},\,d_{i,j}^{(2)},\cdots,d_{i,j}^{(\ell)}$ from $\sigma_i$ to $v_j$ for each pair $(\sigma_i,v_j)$ satisfying $\sigma_i\rightarrow v_j$. Through the $\ell$ edges, at least $k$ symbols in $\mathbb{F}_q$ can be transmitted from $\sigma_i$ to $v_j$. Now, for each pair $(\sigma_i,v_j)$ with $\sigma_i\rightarrow v_j$, we are able to modify $\theta_{d_{i,j}^{(1)}},\,\theta_{d_{i,j}^{(2)}},\cdots,\theta_{d_{i,j}^{(\ell)}}$ such that $\sigma_i$ directly transmits all the $k$ source symbols $\boldsymbol{x}_i$ to $v_j$ through $d_{i,j}^{(1)},\,d_{i,j}^{(2)},\cdots,d_{i,j}^{(\ell)}$, and then modify $\theta_{e_j}$ to $\theta_j\triangleq\theta_{e_j}\circ\big( \theta_{\mathcal{E}(\sigma_i,v_j)}:\,\sigma_i\in\Theta(v_j) \big)$ for $1\leq j\leq m$, namely that
\[
\theta_j\big( \boldsymbol{x}_{\Theta(v_j)} \big)=\theta_{e_j}\big( \theta_{\mathcal{E}(\sigma_i,v_j)}(\boldsymbol{x}_i):\, \sigma_i\in\Theta(v_j) \big),~ 1\leq j\leq m.
\]
We can readily see that the above modified code for $(\mathcal{N},T)$ is still admissible and has the same computing rate $k/n$ as the original code $\widehat{\mathbf{C}}$, and thus it is an admissible $k$-shot source node for the model $(s,m,\Omega,T)$ with the coding rate $n/k$.
On the other hand, it is easy to see that an admissible $k$-shot source code for $(s,m,\Omega,T)$ with the coding rate $n/k$ is also an admissible $k$-shot network code for $(\mathcal{N},T)$ with the computing rate $k/n$. Hence, by the definitions of \eqref{def-capacity} and \eqref{def-computing-capacity}, we immediately prove \eqref{correlation}.

\section{Preparatory Results}\label{pre-result}

\subsection{Capacity Characterization for $\textup{Rank}(T)=1$ and $\textup{Rank}(T)=s$}\label{two-simple cases}
In this subsection, we characterize the capacities for two simple cases of the model $(s,m,\Omega,T)$ with $\textup{Rank}(T)=1$ and $\textup{Rank}(T)=s$. For the case of $\textup{Rank}(T)=1$, we need to compute a scalar-linear function over a finite field $\mathbb{F}_q$:
\begin{equation}\label{scalar}
f(x_S)=\sum_{i=1}^s a_i\cdot x_i,\quad  a_i\in\mathbb{F}_q \text{ for all } 1\leq i\leq s,
\end{equation}
where we let $x_S=(x_1,x_2,\cdots,x_s)$. Without loss of generality, it suffices to consider the scalar-linear function $f(x_S)$ in \eqref{scalar} with $a_i\neq 0$ for all $1\leq i\leq s$. Further, we let $y_i=a_i\cdot x_i$ for all $1\leq i\leq s$, and it is equivalent to considering the algebraic sum $g(y_1,y_2,\cdots,y_s)=\sum\limits_{i=1}^s y_i$ over $\mathbb{F}_q$. Thus, for the case of $\textup{Rank}(T)=1$, it is sufficient to consider the algebraic sum
\begin{equation*}\label{alg}
f(x_S)=\sum_{i=1}^s x_i=x_S\cdot T_{\textup{sum}},
\end{equation*}
where $T_{\textup{sum}}$ is an all-one column $s$-vector. For the case of $\textup{Rank}(T)=s$ (namely that $T$ is an $s\times s$ invertible matrix over $\mathbb{F}_q$), we readily see that computing $x_S\cdot T$ is equivalent to decoding the original source messages $x_S$, or equivalently, computing the identity function $x_S\cdot T_{\textup{id}}$, where $T_{\textup{id}}$ is an $s\times s$ identity matrix. This case is in fact a model of multi-source single-sink network coding.

In order to characterize the capacities for the above two cases $(s,m,\Omega,T_{\textup{sum}})$ and $(s,m,\Omega,T_{\textup{id}})$, by~\eqref{correlation} it suffices to characterize the computing capacities for the associated models $(\mathcal{N},T_{\textup{sum}})$ and $(\mathcal{N},T_{\textup{id}})$ of network function computation, and the computing capacities can be characterized by the existing results in the literature (cf.\,\cite{Appuswamy11,Huang_Comment_cut_set_bound,Guang_Improved_upper_bound}). Before specifying the capacities, we first present some graph-theoretic notations as follows. For two nodes $u$ and $v$ in $\mathcal{V}$, if there exists no directed path from $u$ to $v$, we say that $v$ is \emph{separated} from $u$. Given a set of edges $C\subseteq\mathcal{E}$, we define two subsets of the source nodes
\begin{align*}
K_C&=\big\{ \sigma\in S:~ \exists~ e\in C~ \textup{s.t. there exists a directed path from}~ \sigma~ \textup{to}~ \textup{tail}(e) \big\}, \notag \\
I_C&=\big\{ \sigma\in S:~ \rho~ \textup{is separated from}~ \sigma~ \textup{upon deleting the edges in}~ C~ \textup{from}~ \mathcal{E}  \big\}. 
\end{align*}
Further, an edge set $C$ is said to be a \emph{cut set} if $I_C\neq\emptyset$, and we let $\Lambda(\mathcal{N})$ be the family of all the cut sets in the network $\mathcal{N}$, i.e.,
$\Lambda(\mathcal{N})=\big\{C\subseteq\mathcal{E}:~ I_C\neq\emptyset\big\}$.

We now present the computing capacities $\mathcal{C}(\mathcal{N},T_{\textup{sum}})$ and $\mathcal{C}(\mathcal{N},T_{\textup{id}})$ in the following lemma (cf. the part of Special Target Functions in \cite[Section III.B]{Huang_Comment_cut_set_bound} and \cite[Theorems~III.1 and III.2]{Appuswamy11}).

\begin{lemma}\label{cap-alg-id}
The computing capacities for $(\mathcal{N}, T_{\textup{sum}})$ and $(\mathcal{N},T_{\textup{id}})$ are given by
\[
\mathcal{C}(\mathcal{N}, T_{\textup{sum}})=\min\limits_{C\in\Lambda(\mathcal{N})} |C| \quad \textup{and}~ \quad \mathcal{C}(\mathcal{N}, T_{\textup{id}})=\min\limits_{C\in\Lambda(\mathcal{N})} \frac{~|C|~}{~|I_C|~}.
\]
\end{lemma}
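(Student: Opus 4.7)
The plan is to derive each equality by invoking a known cut-set-type capacity theorem for network function computation and verifying achievability on the specific network $\mathcal{N}$ constructed in Section~\ref{NFC}. The two cases can be handled in parallel because both target functions are linear and the cut-set machinery applies uniformly, although the cut-size $|C|$ and the separated-source count $|I_C|$ play different roles in the two formulas.

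For the identity case, computing $x_S \cdot T_{\textup{id}} = x_S$ at $\rho$ is exactly the multi-source single-sink linear network coding problem over $\mathcal{N}$. The converse is the standard cut-set bound: for any admissible $k$-shot network code $\widehat{\mathbf{C}}$ and any cut set $C \in \Lambda(\mathcal{N})$, the $k|I_C|$ source symbols carried by $\{\boldsymbol{x}_i:\sigma_i\in I_C\}$ must be determined by the at most $n(\widehat{\mathbf{C}})\cdot |C|$ edge symbols transmitted across $C$, so that $k/n(\widehat{\mathbf{C}}) \leq |C|/|I_C|$. Achievability is the Ahlswede--Cai--Li--Yeung max-flow min-cut theorem in the multi-source form: after routing source copies through the parallel edges $d_{i,j}^{(t)}$, it reduces to an algebraic linear network code achieving the min-cut rate, as in \cite[Theorem III.2]{Appuswamy11}.

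For the sum case, the converse is again a cut-set bound specialised to scalar-linear functions. For any cut $C$ with $I_C \neq \emptyset$, fix the source messages of $S\setminus I_C$ to zero and vary those of $I_C$ over $\mathbb{F}_q^{k|I_C|}$; the $k$ target sums range over all of $\mathbb{F}_q^k$, since the sum is surjective in each argument, and these $q^k$ distinct outcomes must be distinguished by the at most $q^{n(\widehat{\mathbf{C}})|C|}$ edge patterns across $C$, whence $k \leq n(\widehat{\mathbf{C}})|C|$ and $\mathcal{R}(\widehat{\mathbf{C}}) \leq |C|$. Minimising over $C$ yields the upper bound. For achievability, construct a scalar linear code in which each encoder $v_j$ forwards a linear combination of the sources in $\Theta(v_j)$, with the coefficients selected so that $\rho$ recovers $\sum_i x_i$ by a single $\mathbb{F}_q$-linear combination of the received partial sums; this is the construction of \cite[Theorem III.1]{Appuswamy11}, also treated in \cite[Section III.B]{Huang_Comment_cut_set_bound}.

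The main obstacle I would anticipate is achievability for $T_{\textup{sum}}$: the partial-sum scheme must be designed so that sources appearing in multiple sets $\Theta(v_j)$ are not multiply counted at $\rho$, which requires coefficients whose existence is guaranteed only once $q$ is sufficiently large. This is exactly the technical core of \cite{Appuswamy11} and is resolved by a general-position choice of $\mathbb{F}_q$-coefficients, combined with routine bookkeeping of cut sizes in $\mathcal{N}$. Everything else --- characterising $\Lambda(\mathcal{N})$ and $I_C$ in terms of $\Omega$ and matching the two capacity formulas against the layered structure of $\mathcal{N}$ --- is straightforward.
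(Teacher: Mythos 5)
Your proposal is correct and matches the paper's treatment: the paper does not prove Lemma~\ref{cap-alg-id} at all but simply cites \cite[Theorems~III.1 and III.2]{Appuswamy11} and the Special Target Functions part of \cite[Section~III.B]{Huang_Comment_cut_set_bound}, which contain exactly the cut-set converses and achievability constructions you outline. Your additional remark about large-$q$ coefficient choices is the standard caveat resolved in those references (via block coding over extension fields), so nothing further is needed.
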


Together with \eqref{correlation}, we specify the capacities for the models $(s,m,\Omega,T_{\textup{sum}})$ and $(s,m,\Omega,T_{\textup{id}})$ below.
\begin{theorem}\label{cap-r=1-r=s}
Consider the models $(s,m,\Omega,T_{\textup{sum}})$ and $(s,m,\Omega,T_{\textup{id}})$. Then,
\[
\mathcal{C}(s,m,\Omega,T_{\textup{sum}})=\max_{\sigma_i\in S} \frac{~1~}{~|\Gamma_{\sigma_i}|~} \quad \textup{and}~ \quad \mathcal{C}(s,m,\Omega,T_{\textup{id}})=\max\limits_{\Gamma\subseteq V} \frac{~|I_{\Gamma}|~}{~\big|\Gamma\big|~}, \footnotemark
\]
\footnotetext{We assume $\Gamma\neq\emptyset$ when it appears in the denominator throughout the paper.}
where for a subset of encoders $\Gamma\subseteq V$, let
\begin{equation}\label{id-def-I-Gamma}
I_{\Gamma}\triangleq\big\{ \sigma\in S: \rho~ \text{is separated from}~ \sigma~ \text{upon deleting the edges}~ (v,\rho)~ \textup{for all}~ v\in\Gamma  \big\}. \footnotemark
\end{equation}
\end{theorem}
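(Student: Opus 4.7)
The plan is to apply the equivalence $\mathcal{C}(s,m,\Omega,T)=1/\mathcal{C}(\mathcal{N},T)$ established in the preceding subsection together with Lemma~\ref{cap-alg-id}, which reduces the theorem to evaluating $\min_{C\in\Lambda(\mathcal{N})}|C|$ in the $T_{\textup{sum}}$ case and $\min_{C\in\Lambda(\mathcal{N})}|C|/|I_C|$ in the $T_{\textup{id}}$ case. Recall that $\mathcal{N}$ has $\ell=\lceil m/r\rceil$ parallel edges from each $\sigma_i$ to each $v_j\in\Gamma_{\sigma_i}$ and a single edge $e_j=(v_j,\rho)$ from each $v_j$ to $\rho$.

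For $T_{\textup{sum}}$ (where $r=1$ and so $\ell=m$), any cut set $C$ must separate at least one source $\sigma_i$, whence $|C|$ is bounded below by the $\sigma_i$--$\rho$ min cut in $\mathcal{N}$. By max-flow/min-cut, this min cut equals $|\Gamma_{\sigma_i}|$: the unique bottleneck is the layer of single unit-capacity edges $\{e_j:v_j\in\Gamma_{\sigma_i}\}$, since the $m\cdot|\Gamma_{\sigma_i}|$ parallel source-to-encoder edges provide abundant upstream capacity, and equality is attained by cutting exactly those $|\Gamma_{\sigma_i}|$ edges. Minimizing over $\sigma_i\in S$ gives $\min_{C}|C|=\min_{\sigma_i\in S}|\Gamma_{\sigma_i}|$, and inverting yields the claimed $\max_{\sigma_i\in S}1/|\Gamma_{\sigma_i}|$.

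For $T_{\textup{id}}$ (where $r=s$ and so $\ell=\lceil m/s\rceil$), I will show that $\min_{C\in\Lambda(\mathcal{N})}|C|/|I_C|=\min_{\Gamma\subseteq V:\,I_\Gamma\neq\emptyset}|\Gamma|/|I_\Gamma|$. The direction $\leq$ is immediate: for each $\Gamma$ with $I_\Gamma\neq\emptyset$, the edge set $C_\Gamma=\{e_j:v_j\in\Gamma\}$ is a cut in $\Lambda(\mathcal{N})$ with $I_{C_\Gamma}=I_\Gamma$ exactly, since the source-to-encoder edges all survive so a source $\sigma$ becomes disconnected from $\rho$ iff every $v_j\in\Gamma_\sigma$ has its out-edge $e_j$ cut, equivalently $\Gamma_\sigma\subseteq\Gamma$. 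For the reverse direction, I fix an arbitrary cut $C\in\Lambda(\mathcal{N})$ and set $\Gamma'=\{v_j:e_j\in C\}$. Automatically $I_{\Gamma'}\subseteq I_C$, and for each $\sigma\in I_C\setminus I_{\Gamma'}$ there exists some $v_j\in\Gamma_\sigma\setminus\Gamma'$ with $e_j\notin C$, forcing all $\ell$ parallel edges from $\sigma$ to $v_j$ to lie in $C$. Writing $b\triangleq|I_C\setminus I_{\Gamma'}|$, this gives $|C|\geq|\Gamma'|+\ell\,b$ and $|I_C|=|I_{\Gamma'}|+b$, so $|C|/|I_C|$ equals a weighted average of $|\Gamma'|/|I_{\Gamma'}|$ and $\ell$ (with weights $|I_{\Gamma'}|$ and $b$), hence is at least $\min(|\Gamma'|/|I_{\Gamma'}|,\ell)$. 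Since $\Gamma=V$ yields $I_V=S$ and ratio $m/s\leq\ell$, I obtain $\ell\geq\min_\Gamma|\Gamma|/|I_\Gamma|$, so $|C|/|I_C|\geq\min_\Gamma|\Gamma|/|I_\Gamma|$. Inverting then produces $\mathcal{C}(s,m,\Omega,T_{\textup{id}})=\max_\Gamma|I_\Gamma|/|\Gamma|$.

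The principal technical obstacle lies in the $\geq$ direction for $T_{\textup{id}}$, namely ruling out that a cut set $C$ which mixes parallel source-to-encoder edges with encoder-to-sink edges could undercut every pure $\Gamma$-cut. The resolution rests on two observations: each ``extra'' source in $I_C\setminus I_{\Gamma'}$ necessitates at least $\ell$ additional edges in $C$, and the trivial cut $\Gamma=V$ already achieves a ratio $m/s\leq\ell$, which together pin the weighted-average lower bound down to $\min_\Gamma|\Gamma|/|I_\Gamma|$.
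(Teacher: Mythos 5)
Your proof is correct, and the overall skeleton (invoke the equivalence $\mathcal{C}(s,m,\Omega,T)=1/\mathcal{C}(\mathcal{N},T)$, apply Lemma~\ref{cap-alg-id}, and then reduce the optimization over all cut sets to cuts supported on $\textup{In}(\rho)$) matches the paper's. The $T_{\textup{sum}}$ half is essentially identical to the paper's argument. Where you diverge is in the key reduction step for $T_{\textup{id}}$: the paper handles an arbitrary cut $C\in\Lambda(\mathcal{N})$ by replacing every edge $e\in C$ whose head is an encoder $v$ with the edge $(v,\rho)$, obtaining a cut $C'\subseteq\textup{In}(\rho)$ with $|C'|\leq|C|$ and $I_{C'}\supseteq I_C$ in one stroke, so that $|I_C|/|C|\leq|I_{C'}|/|C'|$ with no further bookkeeping. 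You instead keep only $\Gamma'=\{v_j:e_j\in C\}$, charge each source in $I_C\setminus I_{\Gamma'}$ at least $\ell$ parallel edges of $C$, and close the gap with a mediant/weighted-average bound plus the observation that $\ell=\lceil m/s\rceil\geq m/s=|V|/|I_V|$. Both routes are valid; the paper's replacement argument is shorter and needs no property of $\ell$, whereas your counting argument makes explicit exactly why a mixed cut cannot beat the best $\textup{In}(\rho)$-cut and quantifies the slack. Two cosmetic points: your ratio is only \emph{at least} the stated weighted average (since $|C|\geq|\Gamma'|+\ell b$, not equality), and the degenerate case $I_{\Gamma'}=\emptyset$ should be noted separately, where the bound collapses to $|C|/|I_C|\geq\ell\geq\min_{\Gamma}|\Gamma|/|I_\Gamma|$; neither affects the conclusion.
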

\footnotetext{We let $|I_{\Gamma}|=0$ if $I_{\Gamma}=\emptyset$.}
\begin{proof}
We first consider the model $(s,m,\Omega,T_{\textup{sum}})$. By Lemma \ref{cap-alg-id}, we have
\begin{equation}\label{cap-alg-simplify}
\mathcal{C}(\mathcal{N}, T_{\textup{sum}})=\min\limits_{C\in\Lambda(\mathcal{N})} |C|=\min\limits_{\sigma_i\in S} \textup{mincut}(\sigma_i, \rho),
\end{equation}
where $\textup{mincut}(\sigma_i, \rho)$ stands for the minimum cut capacity separating $\rho$ from $\sigma_i$ in $\mathcal{N}$. For the last equality in \eqref{cap-alg-simplify}, since each cut set $C\in\Lambda(\mathcal{N})$ separates $\rho$ from at least one source node $\sigma_i$, we immediately have $\min_{C\in\Lambda(\mathcal{N})} |C|\geq \min_{\sigma_i\in S}\textup{mincut}(\sigma_i, \rho)$; and on the other hand, each minimum cut separating $\rho$ from a source node $\sigma_i$ is a cut set in $\Lambda(\mathcal{N})$, and thus $\min_{C\in\Lambda(\mathcal{N})} |C|\leq \min_{\sigma_i\in S}\textup{mincut}(\sigma_i, \rho)$. Furthermore, we have
\begin{equation*}\label{alg-C-3}
\textup{mincut}(\sigma_i, \rho)=\big| \big\{e_j:\,v_j\in\Gamma_{\sigma_i}\big\} \big|=\big|\Gamma_{\sigma_i}\big|,
\end{equation*}
because the edge subset $\big\{e_j:~v_j\in\Gamma_{\sigma_i}\big\}$ (cf.~\eqref{Gamma-sigma-i} for $\Gamma_{\sigma_i}$) is a minimum cut separating $\rho$ from $\sigma_i$. Combining \eqref{cap-alg-simplify} and \eqref{correlation}, we obtain that
\[
\mathcal{C}(s,m,\Omega,T_{\textup{sum}})=\frac{~1~}{~\mathcal{C}(\mathcal{N}, T_{\textup{sum}})~}=\frac{~1~}{~\min\limits_{\sigma_i\in S} |\Gamma_{\sigma_i}|~}=\max\limits_{\sigma_i\in S} \frac{~1~}{~|\Gamma_{\sigma_i}|~}.
\]

Next, we consider the model $(s,m,\Omega,T_{\textup{id}})$. It follows from \eqref{correlation} and Lemma \ref{cap-alg-id} that
\begin{align}\label{s-m-Omega-T-id-C}
\mathcal{C}(s,m,\Omega,T_{\textup{id}})&=\frac{~1~}{~\mathcal{C}(\mathcal{N}, T_{\textup{id}})~}=\max_{C\in\Lambda(\mathcal{N})} \frac{~|I_C|~}{~|C|~}.
\end{align}
We first claim that
\begin{align}\label{id-max-I-C-C-eq-max-In-rho-I-C-C}
\max\limits_{C\in\Lambda(\mathcal{N})} \frac{~|I_C|~}{~|C|~}
=\max\limits_{C\in\Lambda(\mathcal{N})\,\textup{s.t.}\,C\subseteq\textup{In}(\rho)} \frac{~|I_C|~}{~|C|~}.
\end{align}
It is easy to see that
\begin{align}\label{id-max-I-C-C-geq-max-In-rho-I-C-C}
\max\limits_{C\in\Lambda(\mathcal{N})} \frac{~|I_C|~}{~|C|~}\geq
\max\limits_{C\in\Lambda(\mathcal{N})\,\textup{s.t.}\,C\subseteq\textup{In}(\rho)} \frac{~|I_C|~}{~|C|~}.
\end{align}
In order to prove this claim, it suffices to prove the other direction. Consider an arbitrary cut set $C\in\Lambda(\mathcal{N})$ and take the following operations to $C$. For each edge $e\in C$, if $\textup{head}(e)\in V$, say $v$, then replace $e$ by the edge from~$v$ to $\rho$, i.e., the edge $(v,\rho)$; and otherwise, keep $e$ unchanged. We denote by $C'$ the new edge subset thus obtained. We can readily see that $C'\subseteq\textup{In}(\rho)$, $|C'|\leq|C|$ and $C'$ is a cut separating all the edges in $C$ to $\rho$, namely that no path exists from each edge in $C$ to $\rho$ upon deleting all the edges in~$C'$. This implies that $C'$ is also a cut set in $\Lambda(\mathcal{N})$ and $I_{C'}\supseteq  I_{C}$, and hence $|I_{C'}|\geq|I_{C}|$. With this, we immediately have
\begin{align}\label{id-I-C-C-leq-max-In-rho-I-C-C}
\frac{~|I_C|~}{~|C|~}\leq\frac{~|I_{C'}|~}{~|C'|~}\leq
\max\limits_{C\in\Lambda(\mathcal{N})\,\textup{s.t.}\,C\subseteq\textup{In}(\rho)} \frac{~|I_C|~}{~|C|~}.
\end{align}
Note that the inequality \eqref{id-I-C-C-leq-max-In-rho-I-C-C} is true for each cut set $C\in\Lambda(\mathcal{N})$. Thus, we have proved that
\begin{align}\label{id-max-I-C-C-leq-max-In-rho-I-C-C}
\max\limits_{C\in\Lambda(\mathcal{N})} \frac{~|I_C|~}{~|C|~}\leq
\max\limits_{C\in\Lambda(\mathcal{N})\,\textup{s.t.}\,C\subseteq\textup{In}(\rho)} \frac{~|I_C|~}{~|C|~},
\end{align}
Combining \eqref{id-max-I-C-C-leq-max-In-rho-I-C-C} and \eqref{id-max-I-C-C-geq-max-In-rho-I-C-C}, the claim \eqref{id-max-I-C-C-eq-max-In-rho-I-C-C} is proved.

We further consider
\begin{align}\label{id-max-In-rho-I-C-C-eq}
\max\limits_{C\in\Lambda(\mathcal{N})\,\textup{s.t.}\,C\subseteq\textup{In}(\rho)} \frac{~|I_C|~}{~|C|~}=\max\limits_{C\in\Lambda(\mathcal{N})\,\textup{s.t.}\,C\subseteq\textup{In}(\rho)} \frac{~|I_C|~}{~\big|\{\textup{tail}(e):e\in C\}\big|~}=\max\limits_{\Gamma\subseteq V} \frac{~|I_{\Gamma}|~}{~\big|\Gamma\big|~},
\end{align}
where by the definition of $I_{\Gamma}$ in \eqref{id-def-I-Gamma}, we have $I_{\Gamma}=I_C$ if $C=\big\{(v,\rho):v\in\Gamma\big\}$. Combining \eqref{s-m-Omega-T-id-C}, \eqref{id-max-I-C-C-eq-max-In-rho-I-C-C} and \eqref{id-max-In-rho-I-C-C-eq}, we have proved that
\[
\mathcal{C}(s,m,\Omega,T_{\textup{id}})=\max_{C\in\Lambda(\mathcal{N})} \frac{~|I_C|~}{~|C|~}=\max\limits_{C\in\Lambda(\mathcal{N})\,\textup{s.t.}\,C\subseteq\textup{In}(\rho)} \frac{~|I_C|~}{~|C|~}=\max\limits_{\Gamma\subseteq V} \frac{~|I_{\Gamma}|~}{~\big|\Gamma\big|~}.
\]
The theorem is proved.
\end{proof}

\vspace{-0.5em}

\subsection{The Best Known Lower Bound on the Capacity for the Nontrivial Model $(s,m,\Omega,T)$}\label{best-known-lower-bound}

In the rest of the paper, we will consider the nontrivial cases of the model $(s,m,\Omega,T)$ with $1\!\!<\!\!\Rank(T)\!\!<\!\!s$. Generally speaking, the characterization of the function-compression capacity for a nontrivial case is difficult. In fact, its dual problem to characterize the computing capacity for the corresponding model $(\mathcal{N},T)$ of network function computation is difficult, e.g., \cite{Guang_Improved_upper_bound,Appuswamy-lin-func-lin-code,Li_Xu_vector_linear_diamond}. Appuswamy and Franceschetti~\cite{Appuswamy-lin-func-lin-code} investigated the solvability (rate-$1$ achievability) of linear (function-computing) network codes when the single sink node is required to compute a vector-linear function of the source messages over a network, where the used technique is rather complicated and relies on the use of some advanced algebraic tools. Consequently, Guang \textit{et al.} \cite{Guang_Improved_upper_bound} enhanced their results by applying an improved upper bound on the computing capacity which is obtained by using a novel cut-set strong partition approach. For the computing capacity characterization problem of $(\mathcal{N},T)$ considered here, only the computing capacity for a vector-linear function over the diamond network has been completely characterized \cite{Li_Xu_vector_linear_diamond}. So in the paper we focus on the simplest nontrivial cases of the model $(s,m,\Omega,T)$ with three sources (i.e., $s=3$) so that $\textup{Rank}(T)=2$, the number of encoders $m\leq3$ and arbitrary connectivity states $\Omega$. We will see in the rest of the paper that the capacity characterization even for the simplest cases are nontrivial.

We recall the equivalence of the model $\big(3,m,\Omega,T\big)$ and the corresponding model of network function computation as discussed in Section \ref{NFC}. By~\eqref{correlation} we can lower bound the function-compression capacities for all the models by applying upper bounds on the computing capacity in network function computation. In network function computation, several upper bounds on the computing capacity have been obtained \cite{Appuswamy11,Huang_Comment_cut_set_bound,Guang_Improved_upper_bound}, which are applicable to arbitrary network and arbitrary target function. Here, the best known upper bound is the one proved by  Guang~\emph{et al.}~\cite{Guang_Improved_upper_bound} in using the approach of the cut-set strong partition. With this, we can obtain a lower bound on the function-compression capacity for each model $(3,m,\Omega,T)$. To be specific, the model $(3,m,\Omega,T)$ can be transformed to a model of network function computation, which we denote by $(\mathcal{N},T)$. Before specifying the best known upper bound proved by Guang~\emph{et~al.}~\cite{Guang_Improved_upper_bound} on the computing capacity for $\big(\mathcal{N},T\big)$, we present the definition of a strong partition of a cut set.

\begin{definition}[\!\!{\cite[Definition~2]{Guang_Improved_upper_bound} and \cite[Definition~3]{Guang_Zhang_Arithmetic_sum_TIT}}]\label{def:strong_partition}
	Let $C\in\Lambda(\mathcal{N})$ be a cut set and $\mathcal{P}_{C}=\{C_{1}, C_{2},\cdots,C_{t}\}$
	be a partition of the cut set $C$. The partition $\mathcal{P}_{C}$ is said to be a strong
	partition of $C$ if the following two conditions are satisfied:
	\begin{enumerate}
		\item $I_{C_{i}}\neq\emptyset$,\quad $\forall~ 1\leq i\leq t;$
		\item $I_{C_{i}}\cap K_{C_{j}}=\emptyset$,\quad$\forall~ 1\leq i,j\leq t $ and $i\neq j$.\footnote{There is a typo in the original definition of strong partition {\cite[Definition~2]{Guang_Improved_upper_bound}}, where in 2), ``$I_{C_i}\cap I_{C_j}=\emptyset$'' in~{\cite[Definition~2]{Guang_Improved_upper_bound}} should be ``$I_{C_{i}}\cap K_{C_{j}}=\emptyset$'' as stated in \cite[Definition~3]{Guang_Zhang_Arithmetic_sum_TIT}.}
	\end{enumerate}
\end{definition}
We note that a cut set $C\in\Lambda(\mathcal{N})$ is a trivial strong partition of itself. We now specify the upper bound in \cite{Guang_Improved_upper_bound} on the computing capacity for $\big(\mathcal{N},T\big)$ as follows:
\begin{align*}
\mathcal{C}(\mathcal{N},T)\leq \min_{ C\in\Lambda(\mathcal{N})}\, \min_{ \substack{\textup{all strong partitions}\\ \mathcal{P}_C\,\textup{of}\,C} } \frac{|C|}{\textbf{\textup{rank}}_{\mathcal{P}_C}(T)},
\end{align*}
where for a strong partition $\mathcal{P}_{C}\triangleq\{C_{1}, C_{2},\cdots,C_{t}\}$ of $C$, we define
\begin{align*}
\textbf{\textup{rank}}_{\mathcal{P}_C}(T)\triangleq \sum_{i=1}^t \textup{Rank}\big(T[I_{C_i}]\big)+\textup{Rank}\big(T[I_{C}]\big)-\textup{Rank}\big(T[\cup_{i=1}^t I_{C_i}]\big)
\end{align*}
with $T[I]$ for a source subset $I\subseteq S$ representing the submatrix of $T$ containing the $i$th row if $\sigma_i\in I$.
Together with \eqref{correlation}, we obtain that
\begin{align}\label{3-m-Omega-T-geq-max-best-known-upper-bound}
\mathcal{C}\big(3,m,\Omega,T\big)=\frac{~1~}{~\mathcal{C}\big(\mathcal{N}, T\big)~}  \geq \max\limits_{C\in\Lambda(\mathcal{N})}\, \max_{ \substack{\textup{all strong partitions}\\ \mathcal{P}_C\,\textup{of}\,C} }  \frac{~\textbf{\textup{rank}}_{\mathcal{P}_C}(T)~}{~|C|~}. 
\end{align}
We remark that in fact, the above lower bound holds for all possible models $(s,m,\Omega,T)$ with arbitrary number of sources $s$, arbitrary number of encoders $m$, arbitrary connectivity states $\Omega$ and arbitrary matrices $T$ with $1<\Rank(T)<s$.

\vspace{-0.5em}

\subsection{Classification of Target Functions for the Nontrivial Model $(3,m,\Omega,T)$ with $\textup{Rank}(T)=2$}\label{nontrivial-cases}

In this subsection, we prove the following theorem which implies that all the $\mathbb{F}_q$-valued $3\times2$ column-full-rank matrices $T$ can be divided into two types, and the capacities for the model $(3,m,\Omega,T)$ are identical if the matrices $T$ have the same type.

\begin{theorem}\label{Type-I-II}
Let $T$ be an $\mathbb{F}_q$-valued $3\times2$ column-full-rank matrix.\footnote{All matrices $T$ in the rest of the paper are assumed to have no all-zero row.} Then
\begin{equation*}
\mathcal{C}(3,m,\Omega,T)=
  \begin{cases}
 \mathcal{C}(3,m,\Omega,T_1),  & \text{if any two rows in $T$ are linearly independent;} \bigskip\\
 \mathcal{C}(3,m,\Omega,T_2),  & \parbox{0.45\columnwidth}{ \text{if there exist two rows in $T$ to be linearly dependent,} \raggedright \text{assuming WLOG the first two rows}; }
  \end{cases}
\end{equation*}
where
\begin{equation}\label{T1-T2}
    T_1=\setlength{\arraycolsep}{3pt}
    \renewcommand{\arraystretch}{0.7}
    \begin{bmatrix} 1&0\\ 0&1\\ 1&1 \end{bmatrix}\quad \textup{and} \quad
     T_2=\setlength{\arraycolsep}{3pt}
    \renewcommand{\arraystretch}{0.7}
    \begin{bmatrix} 1&0\\ 1&0\\ 0&1 \end{bmatrix}.
    \end{equation}
\end{theorem}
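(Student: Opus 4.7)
The plan is to establish two invariance principles showing that the capacity $\mathcal{C}(3,m,\Omega,T)$ depends only on the equivalence class of $T$ under the transformation $T \mapsto DTM$ for invertible diagonal $D \in \mathbb{F}_q^{3\times 3}$ and invertible $M \in \mathbb{F}_q^{2\times 2}$, and then to exhibit explicit $D, M$ reducing each $T$ of the stipulated form to $T_1$ or $T_2$.

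First I would prove a \emph{column invariance} lemma: for any invertible $M \in \mathbb{F}_q^{2 \times 2}$, $\mathcal{C}(3,m,\Omega,T) = \mathcal{C}(3,m,\Omega,TM)$. This is immediate since any admissible $k$-shot source code computing $x_S \cdot T$ yields an admissible $k$-shot source code computing $x_S \cdot (TM)$ by post-composing the decoder with right-multiplication by $M$ — the encoders are unchanged, so the coding rate is preserved — and vice versa with $M^{-1}$. I would then prove a \emph{row scaling invariance} lemma: for any invertible diagonal $D = \mathrm{diag}(d_1,d_2,d_3)$, $\mathcal{C}(3,m,\Omega,T) = \mathcal{C}(3,m,\Omega,DT)$. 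Setting $y_i = d_i x_i$ (a bijection on $\mathbb{F}_q$ applied per-source), we have $x_S \cdot (DT) = y_S \cdot T$, so each source $\sigma_i$ applying $x_i \mapsto d_i x_i$ before any encoding yields a rate-preserving correspondence between admissible code families of the two models.

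Combining the two lemmas gives $\mathcal{C}(3,m,\Omega,T) = \mathcal{C}(3,m,\Omega,DTM)$ for every invertible diagonal $D$ and invertible $M$, so it suffices to exhibit $D, M$ putting $T$ into the requisite canonical form. Write the rows of $T$ as $r_1, r_2, r_3$. In Case 1 (every pair of rows linearly independent), $\{r_1, r_2\}$ is a basis of $\mathbb{F}_q^2$, so there exists invertible $M$ with $r_1 M = (1,0)$ and $r_2 M = (0,1)$; pairwise independence then forces $r_3 M = (a,b)$ with $a,b \neq 0$. A subsequent row scaling of row $3$ by $1/a$, then a column scaling of column $2$ by $a/b$, then a row scaling of row $2$ by $b/a$ produces $T_1$. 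In Case 2 (first two rows linearly dependent), write $r_2 = c r_1$ with $c \neq 0$; scaling row $2$ by $1/c$ equalizes rows $1$ and $2$, and since $\textup{Rank}(T)=2$ forces $r_3$ to be independent of $r_1$, there is an invertible $M$ with $r_1 M = (1,0)$ and $r_3 M = (0,1)$, which delivers $T_2$.

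The real obstacle, though minor, lies in properly formalizing the row scaling invariance. Column invariance is transparent because the decoder has full access to all incoming information and can perform any invertible linear post-transformation at will. Row scaling invariance, by contrast, is an operation at the source level, and one must verify that each encoding function $\varphi_j$ — which reads $\boldsymbol{x}_{\Theta(v_j)}$ as a concatenation of per-source messages — composes cleanly with the componentwise bijection $x_i \mapsto d_i x_i$ applied independently at each $\sigma_i$, unaffected by the connectivity state $\Omega$. Once this is checked, the remaining linear-algebra reductions to $T_1$ and $T_2$ are routine.
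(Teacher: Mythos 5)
Your proposal is correct and follows essentially the same route as the paper: the column invariance is the paper's Lemma on $\mathcal{C}(3,m,\Omega,T)=\mathcal{C}(3,m,\Omega,TQ)$, the row scaling invariance is precisely the paper's substitution $y_i=a_ix_i$ (justified there, as by you, through the per-source bijection commuting with the encoders), and the final linear-algebra reduction to $T_1$ and $T_2$ matches the paper's passage through the intermediate forms $\left[\begin{smallmatrix} a&0\\ 0&b\\ c&c \end{smallmatrix}\right]$ and $\left[\begin{smallmatrix} a&0\\ b&0\\ 0&c \end{smallmatrix}\right]$ up to an inessential reordering of the scalings.
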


To prove Theorem \ref{Type-I-II}, we first present the lemma below.

\begin{lemma}\label{cap-eq-Q}
Consider an $\mathbb{F}_q$-valued $3\times2$ column-full-rank matrix $T$ and an $\mathbb{F}_q$-valued $2\times2$ invertible matrix $Q$. Then
\[
\mathcal{C}(3,m,\Omega,T)=\mathcal{C}(3,m,\Omega,TQ).
\]
\end{lemma}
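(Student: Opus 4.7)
The plan is to establish both inequalities $\mathcal{C}(3,m,\Omega,T)\leq \mathcal{C}(3,m,\Omega,TQ)$ and $\mathcal{C}(3,m,\Omega,TQ)\leq\mathcal{C}(3,m,\Omega,T)$ by showing that any admissible $k$-shot source code for one model can be transformed into an admissible $k$-shot source code for the other with \emph{identical} coding rate, so that the set of achievable rates coincides. The core observation is that the encoding stage does not depend on $T$ at all---only on the source messages received by each encoder---so the transformation takes place entirely at the decoder via post-multiplication by the invertible matrix $Q$ (respectively $Q^{-1}$).

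More concretely, given an admissible $k$-shot source code $\mathbf{C}=\{\varphi_j:1\leq j\leq m;\,\psi\}$ for $(3,m,\Omega,T)$ with coding rate $R(\mathbf{C})=n(\mathbf{C})/k$, I would construct a new code $\mathbf{C}'=\{\varphi_j:1\leq j\leq m;\,\psi'\}$ for $(3,m,\Omega,TQ)$ by keeping all $m$ encoding functions $\varphi_j$ unchanged and defining the new decoding function
\[
\psi'\bigl(\varphi_j(\boldsymbol{x}_{\Theta(v_j)}):1\leq j\leq m\bigr)\triangleq \psi\bigl(\varphi_j(\boldsymbol{x}_{\Theta(v_j)}):1\leq j\leq m\bigr)\cdot Q.
\]
Since $\mathbf{C}$ is admissible, the inner $\psi$-output equals $\boldsymbol{x}_S\cdot T$ for every source realization, so $\psi'$ outputs $\boldsymbol{x}_S\cdot T\cdot Q=\boldsymbol{x}_S\cdot(TQ)$ with zero error; hence $\mathbf{C}'$ is an admissible $k$-shot source code for $(3,m,\Omega,TQ)$. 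Because the encoding functions (and thus $n_j(\mathbf{C}')=n_j(\mathbf{C})$ for each $j$) are identical, $R(\mathbf{C}')=R(\mathbf{C})$. Taking the infimum over all achievable rates yields $\mathcal{C}(3,m,\Omega,TQ)\leq\mathcal{C}(3,m,\Omega,T)$.

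For the reverse inequality I would apply the same construction verbatim with the roles of $T$ and $TQ$ swapped and with $Q$ replaced by $Q^{-1}$, which exists and is $\mathbb{F}_q$-valued $2\times 2$ invertible because $Q$ is invertible over $\mathbb{F}_q$; this gives $\mathcal{C}(3,m,\Omega,T)=\mathcal{C}(3,m,\Omega,(TQ)Q^{-1})\leq \mathcal{C}(3,m,\Omega,TQ)$. Combining the two inequalities proves the claim. There is essentially no technical obstacle here: the only subtlety worth verifying is that post-multiplication by $Q$ is a well-defined bijection $\mathbb{F}_q^{k\times 2}\to\mathbb{F}_q^{k\times 2}$ so that $\psi'$ is a legitimate decoding function into the required range, which is immediate from the invertibility of $Q$. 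This argument also generalizes at no extra cost to any pair of models $(s,m,\Omega,T)$ and $(s,m,\Omega,TQ)$ with $T$ of any column-full-rank size $s\times r$ and $Q$ any $r\times r$ invertible matrix, although only the $r=2$ case is needed in the sequel.
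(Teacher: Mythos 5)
Your proposal is correct and is essentially the paper's own argument: the paper's one-line proof simply observes that the decoder can compute $\boldsymbol{x}_S\cdot T$ with zero error if and only if it can compute $\boldsymbol{x}_S\cdot TQ$, which is precisely the post-multiplication by $Q$ (resp.\ $Q^{-1}$) at the decoder that you spell out. Your version just makes the rate-preservation and the bijection explicit.
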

\begin{IEEEproof}
An admissible $k$-shot source code for $(3,m,\Omega,T)$ is also admissible for $(3,m,\Omega,TQ)$, and vice versa, because the decoder $\rho$ can compute with zero error the $k$ function values $\boldsymbol{x}_S\cdot T$ if and only if the $k$ values $\boldsymbol{x}_S\cdot TQ$. This immediately implies that $\mathcal{C}(3,m,\Omega,T)=\mathcal{C}(3,m,\Omega,TQ)$.
\end{IEEEproof}

\begin{proof}[Proof of Theorem \ref{Type-I-II}]
We first consider the matrix $T$ in which any two rows of $T$ are linearly independent. It is not difficult to see that there exists an $\mathbb{F}_q$-valued $2\times 2$ invertible matrix $Q$ such that $T'\triangleq TQ$ of form $\left[\begin{smallmatrix} a&0\\ 0&b\\ c&c \end{smallmatrix}\right]$, where $a,\,b$ and $c$ are three nonzero elements in $\mathbb{F}_q$. Together with Lemma \ref{cap-eq-Q}, we have
\begin{equation}\label{cap-eq-Q-TypeI}
\mathcal{C}(3,m,\Omega,T)=\mathcal{C}\big(3,m,\Omega,T'\big).
\end{equation}

Next, we let $y_1=a x_1$, $y_2=b x_2$ and $y_3=c x_3$. Then
\[
(x_1,x_2,x_3)\cdot T'=(ax_1+bx_2,\,bx_2+cx_3)=(y_1+y_2,\,y_2+y_3)=(y_1,y_2,y_3)\cdot T_1.
\]
Since $a$, $b$ and $c$ are nonzero, an admissible $k$-shot source code for $\big(3,m,\Omega,T'\big)$ can be readily modified to an admissible $k$-shot source code for $(3,m,\Omega,T_1)$ by setting $y_1=a x_1$, $y_2=b x_2$, $y_3=c x_3$, and vice versa. Thus, we have
\begin{equation*}\label{cap-eq-T'-T1}
\mathcal{C}\big(3,m,\Omega,T'\big)=\mathcal{C}(3,m,\Omega,T_1).
\end{equation*}
Together with \eqref{cap-eq-Q-TypeI}, we have proved that $\mathcal{C}(3,m,\Omega,T)=\mathcal{C}(3,m,\Omega,T_1)$.

Next, we consider the other case that $T$ is an $\mathbb{F}_q$-valued $3\times 2$ column-full-rank matrix such that two rows of $T$ are linearly dependent, say, the first and second rows. We see that there exists an $\mathbb{F}_q$-valued $2\times 2$ invertible matrix $M$ such that $T''\triangleq TM$ of the form $\left[\begin{smallmatrix} a&0\\ b&0\\ 0&c \end{smallmatrix}\right]$, where $a$, $b$ and $c$ are also nonzero in~$\mathbb{F}_q$. By Lemma \ref{cap-eq-Q}, we have $\mathcal{C}(3,m,\Omega,T)=\mathcal{C}\big(3,m,\Omega,T''\big)$.
Using the same argument as discussed below~\eqref{cap-eq-Q-TypeI}, by setting $y_1=ax_1$, $y_2=b x_2$, $y_3=c x_3$, we can obtain that $\mathcal{C}\big(3,m,\Omega,T''\big)=\mathcal{C}(3,m,\Omega,T_2)$, and thus $\mathcal{C}(3,m,\Omega,T)=\mathcal{C}(3,m,\Omega,T_2)$.
\end{proof}

\vspace{-0.5em}

\subsection{Model Isomorphism}

We consider two connectivity states $\Omega$ and $\Omega'$ with the set of sources $S=\{\sigma_1,\sigma_2,\cdots,\sigma_s\}$ and the set of encoders $V=\{v_1,v_2,\cdots,v_m\}$. We say that $\Omega$ and $\Omega'$ are \emph{(topologically) isomorphic} if there exists a pair $(\pi,\tau)$ of a permutation $\pi$ on $[s]$ for $S$ and a permutation $\tau$ on $[m]$ for $V$ such that\footnote{Here, for a positive integer $z$, we use $[z]$ to denote $\{1,2,\cdots,z\}$ for notational simplicity.}
\[
\sigma_i\rightarrow v_j~ \textup{in}~ \Omega~ \textup{if and only if}~ \sigma_{\pi(i)}\rightarrow v_{\tau(j)}~ \textup{in}~ \Omega', \quad \forall~ i\in[s]~ \textup{and}~ j\in[m].
\]
We further write $\Omega'=\Omega\circ(\pi,\tau)$ and hence we have $\Omega=\Omega'\circ(\pi^{-1},\tau^{-1})$.

Next, we consider an arbitrary target function $f(x_S)=f(x_1,x_2,\cdots,x_s)$ from $\mathcal{A}^s$ to $\textup{Im}\,f$. With the permutation $\pi$ on $[s]$, we write
\[
f\circ\pi\,(x_S)=f\big(\pi(x_S)\big)=f\big(x_{\pi(1)},x_{\pi(2)},\cdots,x_{\pi(s)}\big).
\]
Here, we use $\pi(x_S)$ ($=\pi(x_1,x_2,\cdots,x_s)$) to represent $\big(x_{\pi(1)},x_{\pi(2)},\cdots,x_{\pi(s)}\big)$, and in the sequel use $\pi(S)$ to represent $\big\{\sigma_{\pi(1)},\sigma_{\pi(2)},\cdots,\sigma_{\pi(s)}\big\}$. This abuse of notation should cause no ambiguity and would greatly simplify the notation.

With the above discussion, we say two models $(s,m,\Omega,f)$ and $(s,m,\Omega',f')$ are \emph{isomorphic} if there exists such a pair $(\pi,\tau)$ of a permutation $\pi$ on $[s]$ for $S$ and a permutation $\tau$ on $[m]$ for $V$ such that $\Omega'=\Omega\circ(\pi,\tau)$ and $f'=f\circ\pi$. Now, we can present the following lemma which shows that the capacities of two isomorphic models are identical.

\begin{lemma}\label{lemma-cap=permutation}
Consider two isomorphic models $(s,m,\Omega,f)$ and $\big(s,\,m,\,\Omega\circ(\pi,\tau),\,f\circ\pi\big)$. Then
\begin{equation*}\label{eq-cap=permutation}
\mathcal{C}(s,m,\Omega,f)=\mathcal{C}\big(s,\,m,\,\Omega\circ(\pi,\tau),\,f\circ\pi\big).
\end{equation*}
\end{lemma}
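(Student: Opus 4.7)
The plan is to show the two inequalities $\mathcal{C}(s,m,\Omega,f)\geq\mathcal{C}(s,m,\Omega\circ(\pi,\tau),f\circ\pi)$ and the reverse by transferring admissible codes from one model to the other, at identical coding rates, via relabeling of sources according to $\pi$ and of encoders according to $\tau$. Since the construction is symmetric (we can always apply the inverse permutations $(\pi^{-1},\tau^{-1})$), the two inequalities combine to give equality. The whole argument is essentially bookkeeping, so the only thing requiring care is keeping the source and encoder index maps consistent on both sides.

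First I would unpack the isomorphism assumption to describe what each encoder in the new model ``sees''. Writing $\Theta'$ for the source-access map of the model $(s,m,\Omega\circ(\pi,\tau),f\circ\pi)$, the defining property $\sigma_i\to v_j\text{ in }\Omega\iff \sigma_{\pi(i)}\to v_{\tau(j)}\text{ in }\Omega\circ(\pi,\tau)$ immediately gives $\Theta'(v_{\tau(j)})=\pi\big(\Theta(v_j)\big)$ for every $j\in[m]$. Hence if $\boldsymbol{y}_1,\ldots,\boldsymbol{y}_s$ denote the source messages in the new model, then the tuple observed by encoder $v_{\tau(j)}$ in the new model is $(\boldsymbol{y}_{\pi(i)}:\sigma_i\in\Theta(v_j))$, which is exactly what encoder $v_j$ would observe in the old model if we set $\boldsymbol{x}_i\triangleq\boldsymbol{y}_{\pi(i)}$ for all $i$.

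Given an admissible $k$-shot source code $\mathbf{C}=\{\varphi_j:1\leq j\leq m;~\psi\}$ for $(s,m,\Omega,f)$, I would build a code $\mathbf{C}'=\{\varphi'_\ell:1\leq\ell\leq m;~\psi'\}$ for $(s,m,\Omega\circ(\pi,\tau),f\circ\pi)$ by setting
\[
\varphi'_{\tau(j)}\big(\boldsymbol{y}_{\Theta'(v_{\tau(j)})}\big)\triangleq \varphi_j\big(\boldsymbol{y}_{\pi(i)}:~\sigma_i\in\Theta(v_j)\big),\qquad 1\leq j\leq m,
\]
and $\psi'\big((u_\ell)_{\ell=1}^m\big)\triangleq \psi\big((u_{\tau(j)})_{j=1}^m\big)$. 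Admissibility of $\mathbf{C}'$ is then a direct calculation: substituting $\boldsymbol{x}_i=\boldsymbol{y}_{\pi(i)}$ into the admissibility equation of $\mathbf{C}$ yields
\[
\psi'\!\Big(\varphi'_\ell\big(\boldsymbol{y}_{\Theta'(v_\ell)}\big):1\leq\ell\leq m\Big)=\psi\!\Big(\varphi_j\big(\boldsymbol{y}_{\pi(i)}:\sigma_i\in\Theta(v_j)\big):1\leq j\leq m\Big)=f\big(\boldsymbol{y}_{\pi(1)},\ldots,\boldsymbol{y}_{\pi(s)}\big),
\]
and the right-hand side equals $(f\circ\pi)(\boldsymbol{y}_1,\ldots,\boldsymbol{y}_s)$ by the definition of $f\circ\pi$. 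Moreover, by construction $\textup{Im}\,\varphi'_{\tau(j)}=\textup{Im}\,\varphi_j$, so $n_{\tau(j)}(\mathbf{C}')=n_j(\mathbf{C})$ and hence $R(\mathbf{C}')=R(\mathbf{C})$.

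Finally, I would conclude by applying the achievability definition in~\eqref{def-capacity}: for every $\epsilon>0$, an admissible code $\mathbf{C}$ with $R(\mathbf{C})<\mathcal{C}(s,m,\Omega,f)+\epsilon$ yields an admissible code $\mathbf{C}'$ for $(s,m,\Omega\circ(\pi,\tau),f\circ\pi)$ with the same rate, which gives $\mathcal{C}(s,m,\Omega\circ(\pi,\tau),f\circ\pi)\leq\mathcal{C}(s,m,\Omega,f)$. Applying the identical construction to the pair $(\pi^{-1},\tau^{-1})$, together with the identities $\big(\Omega\circ(\pi,\tau)\big)\circ(\pi^{-1},\tau^{-1})=\Omega$ and $(f\circ\pi)\circ\pi^{-1}=f$, establishes the reverse inequality. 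As noted, there is no real obstacle; the proof is short once the index correspondence $\Theta'(v_{\tau(j)})=\pi(\Theta(v_j))$ is observed.
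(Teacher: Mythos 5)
Your proposal is correct and follows essentially the same route as the paper's proof: transfer the code by assigning $\varphi_j$ to encoder $v_{\tau(j)}$, use the correspondence $\Theta'(v_{\tau(j)})=\pi(\Theta(v_j))$ to verify admissibility for $f\circ\pi$ at the same rate, and invoke $(\pi^{-1},\tau^{-1})$ for the reverse inequality. Your explicit definition of $\psi'$ as a reindexing of $\psi$'s arguments is a minor (and welcome) clarification of what the paper leaves implicit.
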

\begin{proof}
Consider an arbitrary admissible $k$-shot source code $\mathbf{C}=\{\varphi_j:1\leq j\leq m;~ \psi\}$ for $(s,m,\Omega,f)$. Then for each $\boldsymbol{x}_S=(\boldsymbol{x}_1,\boldsymbol{x}_2,\cdots,\boldsymbol{x}_s)\in(\mathcal{A}^k)^s$, we have
\begin{align}\label{psi(varphi_j)}
f(\boldsymbol{x}_S)=\psi\big(\varphi_j(\boldsymbol{x}_{\Theta(v_j)}):~ 1\leq j\leq m\big)
=\psi\big(\varphi_j(\boldsymbol{x}_i:\sigma_i\rightarrow v_j~\textup{in}~\Omega):~ 1\leq j\leq m\big).
\end{align}

Next, we consider the other model $\big(s,\,m,\,\Omega\circ(\pi,\tau),\,f\circ\pi\big)$, in which we note that for each $1\leq j\leq m$, the encoder $v_{\tau(j)}$ receives $\boldsymbol{x}_{\pi(i)}$ if and only if $\sigma_i\rightarrow v_j~\textup{in}~\Omega$. We regard $\varphi_j$ as the encoding function for the encoder $v_{\tau(j)}$, denoted by $\phi_{\tau(j)}$, i.e., $\phi_{\tau(j)}=\varphi_j$; and still regard $\psi$ as the decoding function at $\rho$. Then for each $\boldsymbol{x}_S\in(\mathcal{A}^k)^s$, we have
\begin{align}
&\psi\big(\phi_{\tau(j)}\big(\boldsymbol{x}_{\pi(i)}:~ \sigma_{\pi(i)}\rightarrow v_{\tau(j)}~\textup{in}~ \Omega\circ(\pi,\tau) \big):~ 1\leq j\leq m\big) \notag \\
=&\psi\big(\varphi_j(\boldsymbol{x}_{\pi(i)}:~ \sigma_i\rightarrow v_j~\textup{in}~\Omega ):~ 1\leq j\leq m\big) \label{psi-varphi-in-Omega}\\
=&f\big(\boldsymbol{x}_{\pi(1)},\boldsymbol{x}_{\pi(2)},\cdots,\boldsymbol{x}_{\pi(s)}\big) \label{f-x-pi-1-pi-s}\\
=&f\circ\pi\,(\boldsymbol{x}_S), \notag
\end{align}
where the equality \eqref{psi-varphi-in-Omega} follows because $\Omega$ and $\Omega\circ(\pi,\tau)$ are isomorphic associated with the pair of permutations $(\pi,\tau)$, and the equality \eqref{f-x-pi-1-pi-s} follows from \eqref{psi(varphi_j)}. This thus implies that $\mathbf{C}'\triangleq\big\{\phi_{\tau(j)}:1\leq j\leq m;~ \psi\big\}$ is an admissible $k$-shot source code for $\big(s,\,m,\,\Omega\circ(\pi,\tau),\,f\circ\pi\big)$ preserving the same coding rate as the code $\mathbf{C}$ for $\big(s,m,\Omega,f\big)$.

On the other hand, an admissible $k$-shot source code for $\big(s,\,m,\,\Omega\circ(\pi,\tau),\,f\circ\pi\big)$ can be transformed to an admissible $k$-shot source code for $(s,m,\Omega,f)$ with the same coding rate in the same way by performing the pair of permutations $(\pi^{-1},\tau^{-1})$. We have thus proved the lemma.
\end{proof}

\medskip

For the considered model $(3,m,\Omega,T)$ with Rank$(T)=2$, the following corollary is a straightforward consequence of Lemma \ref{lemma-cap=permutation}.

\begin{cor}\label{cor-T-S=3-cap=permutation}
Consider two isomorphic models $(3,m,\Omega,T)$ and $\big(3,\,m,\,\Omega\circ(\pi,\tau),\,T\circ\pi\big)$. Then,
\begin{equation*}\label{T-eq-cap=permutation}
\mathcal{C}(3,m,\Omega,T)=\mathcal{C}\big(3,\,m,\,\Omega\circ(\pi,\tau),\,T\circ\pi\big),
\end{equation*}
where it is written as $x_S\cdot T\circ\pi=\pi(x_S)\cdot T$, or more precisely,
\[
(x_1,x_2,x_3)\cdot T\circ\pi=\big(x_{\pi(1)},x_{\pi(2)},x_{\pi(3)}\big)\cdot T. \footnotemark
\]
\end{cor}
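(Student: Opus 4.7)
The plan is to apply Lemma \ref{lemma-cap=permutation} directly, taking $f$ to be the vector-linear target function $f(x_S) = x_S \cdot T$. Since Lemma \ref{lemma-cap=permutation} already asserts capacity equality between any pair of isomorphic models $(s,m,\Omega,f)$ and $(s,m,\Omega\circ(\pi,\tau),f\circ\pi)$, the only work to do is to confirm that the composed function $f \circ \pi$ is again a vector-linear function and that its defining matrix is precisely the $T\circ\pi$ named in the corollary statement.

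To carry this out, I would first unpack the convention introduced just before Lemma \ref{lemma-cap=permutation}:
\[
f\circ\pi(x_S) = f\bigl(x_{\pi(1)},x_{\pi(2)},x_{\pi(3)}\bigr) = \bigl(x_{\pi(1)},x_{\pi(2)},x_{\pi(3)}\bigr)\cdot T = \pi(x_S)\cdot T.
\]
Writing the three rows of $T$ as $T^{(1)}, T^{(2)}, T^{(3)}$, a reindexing via $j = \pi(i)$, i.e., $i = \pi^{-1}(j)$, gives
\[
\pi(x_S)\cdot T = \sum_{i=1}^{3} x_{\pi(i)}\,T^{(i)} = \sum_{j=1}^{3} x_j\,T^{(\pi^{-1}(j))} = x_S\cdot T',
\]
where $T'$ is the $3\times 2$ matrix whose $j$-th row equals the $\pi^{-1}(j)$-th row of $T$. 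Defining $T\circ\pi \triangleq T'$ recovers the identity $x_S\cdot(T\circ\pi)=\pi(x_S)\cdot T$ stated in the corollary, and in particular exhibits $f\circ\pi$ as the vector-linear function associated with the matrix $T\circ\pi$.

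With this identification in hand, the conclusion follows in one line: applying Lemma \ref{lemma-cap=permutation} to the two isomorphic models $(3,m,\Omega,f)$ and $(3,m,\Omega\circ(\pi,\tau),f\circ\pi)$, and substituting $f(x_S) = x_S\cdot T$ and $f\circ\pi(x_S) = x_S\cdot(T\circ\pi)$ on both sides, yields $\mathcal{C}(3,m,\Omega,T)=\mathcal{C}\bigl(3,m,\Omega\circ(\pi,\tau),T\circ\pi\bigr)$. There is no substantive obstacle here; the entire content of the corollary is notational, converting a permutation of the source arguments of the target function into an explicit row-permutation of $T$, after which Lemma \ref{lemma-cap=permutation} does all the remaining work.
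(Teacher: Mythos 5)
Your proposal is correct and matches the paper's approach: the paper states Corollary \ref{cor-T-S=3-cap=permutation} as a straightforward consequence of Lemma \ref{lemma-cap=permutation}, which is exactly the one-line application you give. Your explicit reindexing $j=\pi(i)$ showing that $f\circ\pi$ is the vector-linear function of the matrix whose $j$-th row is the $\pi^{-1}(j)$-th row of $T$ just makes precise the row-permutation convention the paper relegates to a footnote, and it agrees with the paper's worked example.
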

\footnotetext{Also, $T\circ\pi$ can be regarded as the row permutation of $T$ by using $\pi$, e.g., $T\circ\pi=\setlength{\arraycolsep}{3pt}\renewcommand{\arraystretch}{0.7}
\begin{bmatrix} \textup{row}_2\\ \textup{row}_3\\ \textup{row}_1 \end{bmatrix}$ if $T=\setlength{\arraycolsep}{3pt}\renewcommand{\arraystretch}{0.7}
\begin{bmatrix} \textup{row}_1\\ \textup{row}_2\\ \textup{row}_3 \end{bmatrix}$ and $\pi=\setlength{\arraycolsep}{3pt}\renewcommand{\arraystretch}{0.7}
\begin{pmatrix} 1&2&3\\ 3&1&2\end{pmatrix}$, where we use $\textup{row}_i$ to denote the $i$th row vector of $T$ for $1\leq i\leq3$.}

The following example is given to illustrate the isomorphism of two models.

\begin{example}
Consider the model $(3,3,\Omega,T_1)$ as depicted in Fig.\,\ref{Omega}, and the two permutations $\pi$ and~$\tau$ for $S$ and $V$, respectively, as follows:
\[
\pi=\setlength{\arraycolsep}{3pt}\renewcommand{\arraystretch}{0.7}
\begin{pmatrix} 1&2&3\\ 3&1&2\end{pmatrix}
\quad\textup{and}\quad
\tau=\setlength{\arraycolsep}{3pt}\renewcommand{\arraystretch}{0.7}
\begin{pmatrix} 1&2&3\\ 2&3&1\end{pmatrix}.
\]
With the pair of the permutations $(\pi,\tau)$, we can readily depict the isomorphic connectivity state $\Omega\circ(\pi,\tau)$ (see Fig.\,\ref{Omega'}). Furthermore, for $T_1=
\left[\begin{smallmatrix} 1&0\\ 0&1\\ 1&1 \end{smallmatrix}\right]$ (cf.\,\eqref{T1-T2}), we have $T_1\circ\pi=\left[\begin{smallmatrix} 0&1\\ 1&1\\ 1&0 \end{smallmatrix}\right]$, or equivalently,
\[
(x_1,x_2,x_3)\cdot T_1\circ\pi=\big(x_{\pi(1)},x_{\pi(2)},x_{\pi(3)}\big)\cdot T_1=(x_3,x_1,x_2)\cdot T_1=(x_3+x_2,x_1+x_2).
\]
Comparing the two isomorphic models $(3,3,\Omega,T_1)$ and $\big(3,3,\Omega\circ(\pi,\tau),T_1\circ\pi\big)$ depicted in Fig.\,\ref{Omega} and Fig.\,\ref{Omega'}, respectively, we can readily see that the two models are the same in essence, and clearly $\mathcal{C}(3,3,\Omega,T_1)=\mathcal{C}\big(3,\,3,\,\Omega\circ(\pi,\tau),\,T_1\circ\pi\big)$.

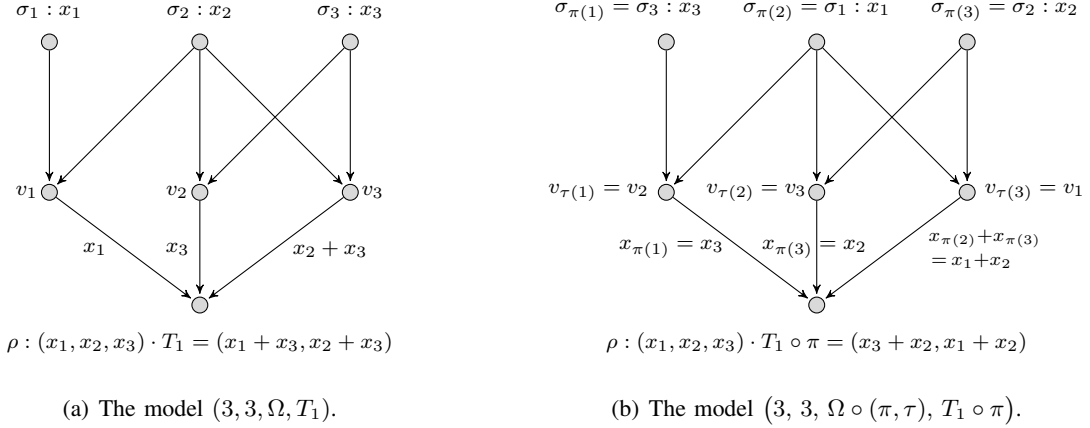
\begin{figure}[!h]
\vspace{-1em}
\centering
\subfigure[The model $(3,3,\Omega,T_1)$.]{
\begin{tikzpicture}[->,>=stealth',shorten >=1pt,auto,node distance=1.2cm]
  \tikzstyle{every state}=[fill=gray!30,draw=black,text=black,minimum size=6pt,inner sep=0pt]
  \tikzstyle{inode}=[draw,circle,fill=gray!30,minimum size=6pt, inner sep=0pt]

  \node[inode]         (a1) at (0,4) {};
  \node at (0,4.4) {\scriptsize $\sigma_1:x_1$};
  \node[inode]         (a2) at (2,4) {};
  \node at (2,4.4) {\scriptsize $\sigma_2:x_2$};
  \node[inode]         (a3) at (4,4) {};
  \node at (4,4.4) {\scriptsize $\sigma_3:x_3$};

  \node[inode]         (r1) at (0,2) {};
  \node at (-0.3,2) {\scriptsize $v_1$};
  \node[inode]         (r2) at (2,2) {};
  \node at (1.7,2) {\scriptsize $v_2$};
  \node[inode]         (r3) at (4,2) {};
  \node at (4.3,2) {\scriptsize $v_3$};

  \node[inode]         (r) at (2,0.5) {};
  \node at (2,0) {\scriptsize $\rho:(x_1,x_2,x_3)\cdot T_1=(x_1+x_3,x_2+x_3)$};

  \path
  (a1) edge (r1)
  (a2) edge (r1)
       edge (r2)
       edge (r3)
  (a3) edge (r2)
       edge (r3)
  (r1) edge node[pos=0.5, left=1mm] {\scriptsize $x_1$} (r)
  (r2) edge node[pos=0.5, left=0mm] {\scriptsize $x_3$} (r)
  (r3) edge node[pos=0.5, right=1mm] {\scriptsize $x_2+x_3$} (r);
\end{tikzpicture}
\label{Omega}
}
\hspace{0.5in}
\subfigure[The model $\big(3,\,3,\,\Omega\circ(\pi,\tau),\,T_1\circ\pi\big)$.]{
\begin{tikzpicture}[->,>=stealth',shorten >=1pt,auto,node distance=1.2cm]
  \tikzstyle{every state}=[fill=gray!30,draw=black,text=black,minimum size=6pt,inner sep=0pt]
  \tikzstyle{inode}=[draw,circle,fill=gray!30,minimum size=6pt, inner sep=0pt]

  \node[inode]         (a1) at (0,4) {};
  \node at (-0.5,4.4) {\scriptsize $\sigma_{\pi(1)}=\sigma_{3}:x_3$};
  \node[inode]         (a2) at (2,4) {};
  \node at (2,4.4) {\scriptsize $\sigma_{\pi(2)}=\sigma_{1}:x_1$};
  \node[inode]         (a3) at (4,4) {};
  \node at (4.5,4.4) {\scriptsize $\sigma_{\pi(3)}=\sigma_{2}:x_2$};

  \node[inode]         (r1) at (0,2) {};
  \node at (-0.9,2) {\scriptsize $v_{\tau(1)}=v_{2}$};
  \node[inode]         (r2) at (2,2) {};
  \node at (1.2,2) {\scriptsize $v_{\tau(2)}=v_{3}$};
  \node[inode]         (r3) at (4,2) {};
  \node at (4.9,2) {\scriptsize $v_{\tau(3)}=v_{1}$};

  \node[inode]         (r) at (2,0.5) {};
  \node at (2,0) {\scriptsize $\rho:(x_1,x_2,x_3)\cdot T_1\circ\pi=(x_3+x_2,x_1+x_2)$};

  \path
  (a1) edge (r1)
  (a2) edge (r1)
       edge (r2)
       edge (r3)
  (a3) edge (r2)
       edge (r3)
  (r1) edge node[pos=0.5, left=1mm] {\scriptsize $x_{\pi(1)}=x_3$} (r)
  (r2) edge node[pos=0.5, left=-8mm] {\scriptsize $x_{\pi(3)}=x_2$} (r)
  (r3) edge node[pos=0.5, right=0mm] {\small $\substack{\quad x_{\pi(2)}+x_{\pi(3)}\vspace{0.5mm}\\=\,x_1+x_2}$} (r);
\end{tikzpicture}
\label{Omega'}
}
\captionsetup{font=small}
\vspace{-1em}
\caption{Two isomorphic models.}
\label{omega-tau-omega-pi-Omega}
\vspace{-2em}
\end{figure}
\end{example}

We end this subsection by introducing a partial order ``$\preceq$'' on connectivity states. Continue to consider $S=\{\sigma_1,\sigma_2,\cdots,\sigma_s\}$ and $V=\{v_1,v_2,\cdots,v_m\}$. For two connectivity states $\Omega=(\Gamma_{\sigma_1}, \Gamma_{\sigma_2}, \cdots, \Gamma_{\sigma_s})$ and $\Omega'=(\Gamma'_{\sigma_1}, \Gamma'_{\sigma_2}, \cdots, \Gamma'_{\sigma_s})$, we write $\Omega\preceq\Omega'$ if $\Gamma_{\sigma_i}\subseteq \Gamma_{\sigma_i}'$ for each $1\leq i\leq s$. We can easily verify that the binary relation ``$\preceq$'' is a (non-strict) partial order. As such, we immediately obtain the following lemma for two connectivity states $\Omega$ and $\Omega'$ with $\Omega\preceq\Omega'$.

\begin{lemma}\label{lem-cap-omega'-geq-cap-omega}
Consider two models $(s, m, \Omega, f)$ and $(s, m, \Omega', f)$, where $\Omega\preceq\Omega'$. Then,
\[
\mathcal{C}(s, m, \Omega, f)\geq\mathcal{C}(s, m, \Omega', f).
\]
\end{lemma}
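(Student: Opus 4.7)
The plan is to show directly that any admissible code for the ``smaller'' model $(s,m,\Omega,f)$ induces an admissible code for the ``larger'' model $(s,m,\Omega',f)$ with exactly the same coding rate, whence every achievable rate for $\Omega$ is also achievable for $\Omega'$; by the infimum definition in \eqref{def-capacity} this yields $\mathcal{C}(s,m,\Omega',f)\leq\mathcal{C}(s,m,\Omega,f)$.

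More concretely, let $\mathbf{C}=\{\varphi_j:\,1\leq j\leq m;\;\psi\}$ be an arbitrary admissible $k$-shot source code for $(s,m,\Omega,f)$. Write $\Theta(v_j)$ and $\Theta'(v_j)$ for the sets of sources that reach $v_j$ under $\Omega$ and $\Omega'$, respectively. The hypothesis $\Omega\preceq\Omega'$ (i.e.\ $\Gamma_{\sigma_i}\subseteq\Gamma'_{\sigma_i}$ for every $i$) immediately gives $\Theta(v_j)\subseteq\Theta'(v_j)$ for every $j$. Hence under $\Omega'$, encoder $v_j$ observes everything it observed under $\Omega$ plus possibly more sources. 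Define $\mathbf{C}'=\{\varphi'_j:\,1\leq j\leq m;\;\psi\}$ by letting $\varphi'_j\big(\boldsymbol{x}_{\Theta'(v_j)}\big)\triangleq\varphi_j\big(\boldsymbol{x}_{\Theta(v_j)}\big)$, i.e.\ $v_j$ simply discards the source messages in $\Theta'(v_j)\setminus\Theta(v_j)$ and applies its old encoder; the decoder $\psi$ is unchanged.

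Admissibility of $\mathbf{C}'$ for $(s,m,\Omega',f)$ follows at once from that of $\mathbf{C}$ for $(s,m,\Omega,f)$: for any $\boldsymbol{x}_S\in(\mathcal{A}^k)^s$,
\[
\psi\Big(\varphi'_j\big(\boldsymbol{x}_{\Theta'(v_j)}\big):\,1\leq j\leq m\Big)
=\psi\Big(\varphi_j\big(\boldsymbol{x}_{\Theta(v_j)}\big):\,1\leq j\leq m\Big)
=f(\boldsymbol{x}_1,\boldsymbol{x}_2,\cdots,\boldsymbol{x}_s).
\]
Moreover $\textup{Im}\,\varphi'_j=\textup{Im}\,\varphi_j$, so $n_j(\mathbf{C}')=n_j(\mathbf{C})$ for every $j$ and therefore $R(\mathbf{C}')=R(\mathbf{C})$. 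Consequently, for every $\epsilon>0$, whenever there exists an admissible $k$-shot code for $(s,m,\Omega,f)$ with rate below $R+\epsilon$, the above construction yields an admissible $k$-shot code for $(s,m,\Omega',f)$ with rate below $R+\epsilon$. Thus every rate achievable for $(s,m,\Omega,f)$ is achievable for $(s,m,\Omega',f)$, and taking infima gives $\mathcal{C}(s,m,\Omega',f)\leq\mathcal{C}(s,m,\Omega,f)$.

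There is really no obstacle here: the lemma is a monotonicity statement whose content is entirely that adding information at the encoders can only help. The only mild subtlety is to be careful that the constructed encoders $\varphi'_j$ genuinely depend only on $\boldsymbol{x}_{\Theta'(v_j)}$ (which they do, since $\Theta(v_j)\subseteq\Theta'(v_j)$), and to notice that the image sets—and therefore the per-link costs $n_j(\cdot)$ and the overall rate $R(\cdot)$—are preserved exactly under this ``ignore the extra inputs'' reduction.
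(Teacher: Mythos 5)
Your proof is correct and is exactly the ``ignore the extra inputs'' monotonicity argument that the paper treats as immediate (the paper states this lemma without any written proof). Nothing further is needed.
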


\section{Two Most Nontrivial Models}\label{section-not-tight}

In order to characterize the function-compression capacities for all the nontrivial models $(3,m,\Omega,T)$ with $m\leq 3$, $\textup{Rank}(T)=2$, and arbitrary connectivity states $\Omega$, by Theorem \ref{Type-I-II} it suffices to consider two matrices $T_1$ and $T_2$ (cf.~\eqref{T1-T2}).
In Section \ref{best-known-lower-bound}, we have proved lower bounds on the function-compression capacities for all the models $(3,m,\Omega,T)$. Nevertheless, the lower bounds thus obtained are not always tight. To be specific, the obtained lower bounds are tight, or equivalently, identical to the function-compression capacities for all the models $\big(3,m,\Omega,T_1\big)$ with $1\leq m\leq3$ and arbitrary connectivity states~$\Omega$, which will be discussed in the next section. The obtained lower bounds are not tight for the two models $\big(3,3,\Omega_1,T_2\big)$ and $\big(3,3,\Omega_2,T_2\big)$ depicted in Figs.\,\ref{graph-T2-m=3-xi=2/3-not-tight-fig1} and \ref{graph-T2-m=3-xi=2/3-not-tight-fig2}, respectively, where
\begin{align*}
\Omega_1 & =\big(\Gamma_{\sigma_1}=\{v_1,v_2\},~ \Gamma_{\sigma_2}=\{v_1,v_3\},~ \Gamma_{\sigma_3}=\{v_2,v_3\}\big), 
\\
\Omega_2&=\big(\Gamma_{\sigma_1}=\{v_1,v_2\},~ \Gamma_{\sigma_2}=\{v_1,v_3\},~
\Gamma_{\sigma_3}=\{v_1,v_2,v_3\}\big). 
\end{align*}
In this section, we will completely characterize the function-compression capacities for the two most nontrivial models $\big(3,3,\Omega_1,T_2\big)$ and $\big(3,3,\Omega_2,T_2\big)$. For the remaining cases of the model $\big(3,m,\Omega,T_2\big)$, the obtained lower bounds are tight, which will be discussed in the following Section \ref{section-3-m-T2}.

\begin{figure*}[t]
\tikzstyle{vertex}=[draw,circle,fill=gray!30,minimum size=6pt, inner sep=0pt]
\tikzstyle{vertex1}=[draw,circle,fill=gray!80,minimum size=6pt, inner sep=0pt]
\centering
\begin{minipage}[b]{0.5\textwidth}
\centering
{
 \begin{tikzpicture}[x=0.6cm]
     \node[draw,circle,fill=gray!30,minimum size=6pt, inner sep=0pt](a1)at(1,4){};
        \node at (1,4.4) {$\sigma_1$};
        \node[draw,circle,fill=gray!30,minimum size=6pt, inner sep=0pt](a2)at(4,4){};
        \node at (4,4.4) {$\sigma_2$};
        \node[draw,circle,fill=gray!30,minimum size=6pt, inner sep=0pt](a3)at(7,4){};
        \node at (7,4.4) {$\sigma_3$};

        \node[draw,circle,fill=gray!30,minimum size=6pt, inner sep=0pt](r1)at(1,2.5){};
        \node[draw,circle,fill=gray!30,minimum size=6pt, inner sep=0pt](r2)at(4,2.5){};
         \node[draw,circle,fill=gray!30,minimum size=6pt, inner sep=0pt](r3)at(7,2.5){};
        \node at (0.4,2.4) {$v_1$};
        \node at (3.4,2.4) {$v_2$};
        \node at (7.6,2.4) {$v_3$};

        \node[draw,circle,fill=gray!30,minimum size=6pt, inner sep=0pt](r)at(4,1){};
        \node at (4,0.6) {$\rho$};

        \draw[->,>=latex](a1)--(r1);
        \draw[->,>=latex](a1)--(r2);
        \draw[->,>=latex](a2)--(r1);
        \draw[->,>=latex](a2)--(r3);
        \draw[->,>=latex](a3)--(r2);
        \draw[->,>=latex](a3)--(r3);
        \draw[->,>=latex](r1)--(r) node[midway, auto,swap, left=0mm] {$\varphi_1$};
        \draw[->,>=latex](r2)--(r) node[midway, auto,swap, left=0mm] {$\varphi_2$};
        \draw[->,>=latex](r3)--(r) node[midway, auto,swap, right=0mm] {$\varphi_3$};
    \end{tikzpicture}
}
 \vspace{-1em}
  \caption{The model $(3,3,\Omega_1,T_2)$.}
        \label{graph-T2-m=3-xi=2/3-not-tight-fig1}
\end{minipage}%
\centering
\begin{minipage}[b]{0.5\textwidth}
\centering
 \begin{tikzpicture}[x=0.6cm]
   \node[draw,circle,fill=gray!30,minimum size=6pt, inner sep=0pt](a1)at(1,4){};
        \node at (1,4.4) {$\sigma_1$};
        \node[draw,circle,fill=gray!30,minimum size=6pt, inner sep=0pt](a2)at(4,4){};
        \node at (4,4.4) {$\sigma_2$};
        \node[draw,circle,fill=gray!30,minimum size=6pt, inner sep=0pt](a3)at(7,4){};
        \node at (7,4.4) {$\sigma_3$};

        \node[draw,circle,fill=gray!30,minimum size=6pt, inner sep=0pt](r1)at(1,2.5){};
        \node[draw,circle,fill=gray!30,minimum size=6pt, inner sep=0pt](r2)at(4,2.5){};
         \node[draw,circle,fill=gray!30,minimum size=6pt, inner sep=0pt](r3)at(7,2.5){};
        \node at (0.4,2.4) {$v_1$};
        \node at (3.4,2.4) {$v_2$};
        \node at (7.6,2.4) {$v_3$};

        \node[draw,circle,fill=gray!30,minimum size=6pt, inner sep=0pt](r)at(4,1){};
        \node at (4,0.6) {$\rho$};

         \draw[->,>=latex](a1)--(r1);
        \draw[->,>=latex](a1)--(r2);
        \draw[->,>=latex](a2)--(r1);
        \draw[->,>=latex](a2)--(r3);
        \draw[->,>=latex](a3)--(r1);
        \draw[->,>=latex](a3)--(r2);
        \draw[->,>=latex](a3)--(r3);
        \draw[->,>=latex](r1)--(r) node[midway, auto,swap, left=0mm] {$\varphi_1$};
        \draw[->,>=latex](r2)--(r) node[midway, auto,swap, left=0mm] {$\varphi_2$};
        \draw[->,>=latex](r3)--(r) node[midway, auto,swap, right=0mm] {$\varphi_3$};
    \end{tikzpicture}
     \vspace{-1em}
      \caption{The model $(3,3,\Omega_2,T_2)$.}
   \label{graph-T2-m=3-xi=2/3-not-tight-fig2}
\end{minipage}
\vspace{-4em}
\end{figure*}

\vspace{-0.5em}

\subsection{An Intuitive Example}

Consider the two models $\big(3,3,\Omega_1,T_2\big)$ and $\big(3,3,\Omega_2,T_2\big)$ depicted in Figs.\,\ref{graph-T2-m=3-xi=2/3-not-tight-fig1} and \ref{graph-T2-m=3-xi=2/3-not-tight-fig2}, respectively.
We specify the lower bound \eqref{3-m-Omega-T-geq-max-best-known-upper-bound} 
for the models $\big(3,3,\Omega_1,T_2\big)$ and $\big(3,3,\Omega_2,T_2\big)$ and thus obtain the lower bound~$2/3$ on their function-compression capacities, namely that
\[
\mathcal{C}\big(3,3,\Omega_1,T_2\big)\geq\frac{2}{3}\quad\textup{and}
\quad\mathcal{C}\big(3,3,\Omega_2,T_2\big)\geq\frac{2}{3}.
\]
To briefly see this, we take the model $\big(3,3,\Omega_1,T_2\big)$ as an example. Consider an arbitrary $k$-shot (function-compression) source code $\mathbf{C}=\big\{\varphi_1,\varphi_2,\varphi_3;\, \psi\big\}$ for $\big(3,3,\Omega_1,T_2\big)$. The triple of the three encoding functions $(\varphi_1,\varphi_2,\varphi_3)$ has to distinguish all $q^{2k}$ function values, implying that $$|\textup{Im}\,\varphi_1|\cdot|\textup{Im}\,\varphi_2|\cdot|\textup{Im}\,\varphi_3|\geq q^{2k}.$$
Together with the definition of $n\triangleq n(\mathbf{C})$ (cf. the paragraph immediately above the equation \eqref{def-capacity}), i.e., $|\textup{Im}\,\varphi_i|\leq|\mathbb{F}_q|^n$ for $i=1,2,3$, we have
\[
|\mathbb{F}_q|^{3n}\geq |\textup{Im}\,\varphi_1|\cdot|\textup{Im}\,\varphi_2|\cdot|\textup{Im}\,\varphi_3|\geq q^{2k},
\]
or equivalently, $n/k\geq 2/3$ implying $\mathcal{C}\big(3,3,\Omega_1,T_2\big)\geq2/3$.

We now use the model $\big(3,3,\Omega_1,T_2\big)$ to give an intuitive (but not complete) explanation why the lower bound $2/3$ is not tight. Suppose that the lower bound $2/3$ for the model $\big(3,3,\Omega_1,T_2\big)$ is achievable by an admissible $k$-shot source code $\mathbf{C}=\big\{\varphi_1(\boldsymbol{x}_1,\boldsymbol{x}_2),\,
\varphi_2(\boldsymbol{x}_1,\boldsymbol{x}_3),\,\varphi_3(\boldsymbol{x}_2,\boldsymbol{x}_3);~ \psi\big\}$, where $n\triangleq n(\mathbf{C})$ with $n/k=2/3$. Let us for the time being assume that $k=3$ and $n=2$.

Let $\boldsymbol{x}_i\in\mathbb{F}_q^3$ be the source message generated by $\sigma_i,\,i=1,2,3$. Then we write
\begin{equation*}\label{eg-def-varphi}
\varphi(\boldsymbol{x}_1,\boldsymbol{x}_2,\boldsymbol{x}_3)\triangleq
\big( \varphi_1(\boldsymbol{x}_1,\boldsymbol{x}_2),\,\varphi_2(\boldsymbol{x}_1,\boldsymbol{x}_3),\,
\varphi_3(\boldsymbol{x}_2,\boldsymbol{x}_3) \big).
\end{equation*}
Since the code $\mathbf{C}$ can compute at $\rho$ with zero error the vector-linear function $x_S\cdot T_2=(x_1+x_2,x_3)$, it is necessary for it to distinguish all the $q^{2\cdot 3}=q^6$ function values $(\boldsymbol{x}_1+\boldsymbol{x}_2,\boldsymbol{x}_3)$ for all $\boldsymbol{x}_i\in\mathbb{F}_q^3,~i=1,2,3$, implying $\big|\textup{Im}\,\varphi\big|\geq q^6$. On the other hand, by the definition of $n$, we have
\[
\big|\textup{Im}\,\varphi\big|\leq \big|\textup{Im}\,\varphi_1\big|\cdot\big|\textup{Im}\,\varphi_2\big|\cdot\big|\textup{Im}\,\varphi_3\big|
\leq \big|\mathbb{F}_q^2\big| \cdot \big|\mathbb{F}_q^2\big| \cdot \big|\mathbb{F}_q^2\big|=q^6.
\]
Thus, we obtain that $\big|\textup{Im}\,\varphi\big|=q^6$.
This implies that for each function value $(\boldsymbol{a},\boldsymbol{b})\in\mathbb{F}_q^3\times\mathbb{F}_q^3$,
\begin{align}\label{eg-varphi-eq-1}
\#\Big\{ \varphi(\boldsymbol{x}_1,\boldsymbol{x}_2,\boldsymbol{x}_3):~ \textup{all}~ \boldsymbol{x}_1,\boldsymbol{x}_2,\boldsymbol{x}_3\in\mathbb{F}_q^3~ \textup{s.t.}~ (\boldsymbol{x}_1,\boldsymbol{x}_2,\boldsymbol{x}_3)\cdot T_2=(\boldsymbol{a},\boldsymbol{b}) \Big\}=1.
\end{align}

We further consider the set on the LHS of \eqref{eg-varphi-eq-1}:
\begin{align*}
&\#\Big\{ \varphi(\boldsymbol{x}_1,\boldsymbol{x}_2,\boldsymbol{x}_3):~ \textup{all}~ \boldsymbol{x}_1,\boldsymbol{x}_2,\boldsymbol{x}_3\in\mathbb{F}_q^3~ \textup{s.t.}~ (\boldsymbol{x}_1,\boldsymbol{x}_2,\boldsymbol{x}_3)\cdot T_2=(\boldsymbol{a},\boldsymbol{b}) \Big\} \notag  \\
&=\#\Big\{ \varphi(\boldsymbol{x}_1,\boldsymbol{x}_2,\boldsymbol{x}_3):~ \textup{all}~ \boldsymbol{x}_1,\boldsymbol{x}_2,\boldsymbol{x}_3\in\mathbb{F}_q^3~ \textup{s.t.}~ \boldsymbol{x}_1+\boldsymbol{x}_2=\boldsymbol{a}~\textup{and}~\boldsymbol{x}_3=\boldsymbol{b} \Big\} \notag  \\
&=\#\Big\{ \varphi(\boldsymbol{x}_1,\boldsymbol{a}-\boldsymbol{x}_1,\boldsymbol{b}):~ \textup{all}~ \boldsymbol{x}_1\in\mathbb{F}_q^3 \Big\} \notag  \\
&=\#\Big\{\big( \varphi_1(\boldsymbol{x}_1,\boldsymbol{a}-\boldsymbol{x}_1),\,
\varphi_2(\boldsymbol{x}_1,\boldsymbol{b}),\,
\varphi_3(\boldsymbol{a}-\boldsymbol{x}_1,\boldsymbol{b}) \big):~ \textup{all}~ \boldsymbol{x}_1\in\mathbb{F}_q^3 \Big\}. \notag
\end{align*}
Together with \eqref{eg-varphi-eq-1}, this implies that
\begin{align}\label{eg-varphi-2-eq-1}
\#\Big\{ \varphi_2(\boldsymbol{x}_1,\boldsymbol{b}):~  \textup{all}~ \boldsymbol{x}_1\in\mathbb{F}_q^3 \Big\}=1.
\end{align}

Furthermore, we consider an arbitrary $\boldsymbol{a}_2\in\mathbb{F}_q^3$ for $\boldsymbol{x}_2$. By the definition of $n$ and recalling $n=2$, we have
\[
\#\Big\{ \varphi_1(\boldsymbol{x}_1,\boldsymbol{a}_2):~  \textup{all}~ \boldsymbol{x}_1\in\mathbb{F}_q^3 \Big\}\leq \big|\textup{Im}\,\varphi_1\big|\leq \big|\mathbb{F}_q^2\big|=q^2.
\]
This implies that there exist two different source messages $\boldsymbol{a}_1,\boldsymbol{a}_1'\in\mathbb{F}_q^3$ for $\boldsymbol{x}_1$ such that
\begin{align}\label{eg-varphi-1-a1-eq-a1'}
\varphi_1(\boldsymbol{a}_1,\boldsymbol{a}_2)=\varphi_1(\boldsymbol{a}_1',\boldsymbol{a}_2).
\end{align}
Combining the above, we can see that
\begin{align}
\varphi(\boldsymbol{a}_1,\boldsymbol{a}_2,\boldsymbol{b})&=\big( \varphi_1(\boldsymbol{a}_1,\boldsymbol{a}_2),
\,\varphi_2(\boldsymbol{a}_1,\boldsymbol{b}),\,
\varphi_3(\boldsymbol{a}_2,\boldsymbol{b}) \big) \notag \\
&=\big( \varphi_1(\boldsymbol{a}_1',\boldsymbol{a}_2),
\,\varphi_2(\boldsymbol{a}_1',\boldsymbol{b}),\,
\varphi_3(\boldsymbol{a}_2,\boldsymbol{b}) \big) \label{eg-varphi-a1-eq-a1'-a2-b} \\
&=\varphi(\boldsymbol{a}_1',\boldsymbol{a}_2,\boldsymbol{b}), \notag
\end{align}
where the equality \eqref{eg-varphi-a1-eq-a1'-a2-b} follows from $\varphi_1(\boldsymbol{a}_1,\boldsymbol{a}_2)=\varphi_1(\boldsymbol{a}_1',\boldsymbol{a}_2)$ by \eqref{eg-varphi-1-a1-eq-a1'} and $\varphi_2(\boldsymbol{a}_1,\boldsymbol{b})=\varphi_2(\boldsymbol{a}_1',\boldsymbol{b})$ by~\eqref{eg-varphi-2-eq-1}. By the admissibility of the code $\mathbf{C}$, we immediately obtain that
\[
(\boldsymbol{a}_1+\boldsymbol{a}_2,\boldsymbol{b})=\psi\big( \varphi(\boldsymbol{x}_1=\boldsymbol{a}_1,\boldsymbol{x}_2=\boldsymbol{a}_2,\boldsymbol{x}_3=\boldsymbol{b}) \big)=\psi\big( \varphi(\boldsymbol{x}_1=\boldsymbol{a}_1',\boldsymbol{x}_2=\boldsymbol{a}_2,\boldsymbol{x}_3=\boldsymbol{b}) \big)=(\boldsymbol{a}_1'+\boldsymbol{a}_2,\boldsymbol{b}),
\]
a contradiction to the fact that $\boldsymbol{a}_1+\boldsymbol{a}_2\neq\boldsymbol{a}_1'+\boldsymbol{a}_2$. Hence, the lower bound $2/3$ is not tight for the model $(3,3,\Omega_1,T_2)$. This implies that the necessary condition that has been used to obtain the lower bound $2/3$ is not strong enough to be also sufficient. In the next subsection, we will prove that $\mathcal{C}(3,3,\Omega_1,T_2)=3/4$, which is considerably larger than the lower bound $2/3$.

\vspace{-0.5em}

\subsection{Capacity Characterization for Two Models $\big(3,3,\Omega_1,T_2\big)$ and $\big(3,3,\Omega_2,T_2\big)$}

We characterize the function-compression capacities for two models $\big(3,3,\Omega_1,T_2\big)$ and $\big(3,3,\Omega_2,T_2\big)$ in the theorem below.

\begin{theorem}\label{cap-T2-m=3-3/4}
Consider the two models $\big(3,3,\Omega_1,T_2\big)$ and $\big(3,3,\Omega_2,T_2\big)$, where
\begin{align*}
\Omega_1=\big(\Gamma_{\sigma_1}=\{v_1,v_2\},~ \Gamma_{\sigma_2}=\{v_1,v_3\},~
\Gamma_{\sigma_3}=\{v_2,v_3\}\big)
\end{align*}
and
\begin{align*}
\Omega_2=\big(\Gamma_{\sigma_1}=\{v_1,v_2\},~ \Gamma_{\sigma_2}=\{v_1,v_3\},~
\Gamma_{\sigma_3}=\{v_1,v_2,v_3\} \big)
\end{align*}
as depicted in Figs.\,\ref{graph-T2-m=3-xi=2/3-not-tight-fig1} and \ref{graph-T2-m=3-xi=2/3-not-tight-fig2}, respectively.
Then
\[
\mathcal{C}\big(3,3,\Omega_1,T_2\big)=\mathcal{C}\big(3,3,\Omega_2,T_2\big)=\frac{3}{4}.
\]
\end{theorem}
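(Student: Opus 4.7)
The plan is to establish $\mathcal{C}(3,3,\Omega_1,T_2) = \mathcal{C}(3,3,\Omega_2,T_2) = 3/4$ by combining a monotonicity reduction with a matching achievable code and a tight converse. Observe that $\Omega_1 \preceq \Omega_2$ in the partial order of Section~2.4, since the two differ only in that $\sigma_3$ additionally connects to $v_1$ in $\Omega_2$. Lemma~\ref{lem-cap-omega'-geq-cap-omega} then gives $\mathcal{C}(3,3,\Omega_1,T_2) \geq \mathcal{C}(3,3,\Omega_2,T_2)$, so it is enough to prove the two directions (i) $\mathcal{C}(3,3,\Omega_1,T_2) \leq 3/4$ (achievability on the \emph{smaller} state $\Omega_1$) and (ii) $\mathcal{C}(3,3,\Omega_2,T_2) \geq 3/4$ (converse on the \emph{larger} state $\Omega_2$). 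The two then force equality and sandwich both capacities at $3/4$.

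For (i), I would exhibit an explicit $k = 4$, $n = 3$ source code for $\Omega_1$. Writing $\boldsymbol{x}_i = (x_{i,1}, x_{i,2}, x_{i,3}, x_{i,4}) \in \mathbb{F}_q^4$, set
\[
\varphi_1(\boldsymbol{x}_1, \boldsymbol{x}_2) = \bigl(x_{1,\ell} + x_{2,\ell}\bigr)_{\ell=1}^{3}, \quad \varphi_2(\boldsymbol{x}_1, \boldsymbol{x}_3) = (x_{1,4}, x_{3,1}, x_{3,2}), \quad \varphi_3(\boldsymbol{x}_2, \boldsymbol{x}_3) = (x_{2,4}, x_{3,3}, x_{3,4}).
\]
Each encoder sends three symbols of $\mathbb{F}_q$, so $n(\mathbf{C}) = 3$ and $R(\mathbf{C}) = 3/4$. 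The decoder reads off the first three coordinates of $\boldsymbol{x}_1 + \boldsymbol{x}_2$ from $\varphi_1$, computes the fourth coordinate as $x_{1,4} + x_{2,4}$ from the leading entries of $\varphi_2$ and $\varphi_3$, and concatenates the trailing entries of $\varphi_2$ and $\varphi_3$ to reassemble $\boldsymbol{x}_3$. Admissibility is immediate, yielding $\mathcal{C}(3,3,\Omega_1,T_2) \leq 3/4$ and thus $\mathcal{C}(3,3,\Omega_2,T_2) \leq 3/4$ by monotonicity.

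The converse (ii) is the technical heart; it must beat the naive bound $2/3$ arising from $|\textup{Im}\,\varphi_1|\cdot|\textup{Im}\,\varphi_2|\cdot|\textup{Im}\,\varphi_3|\geq q^{2k}$ that the intuitive example of the excerpt already shows to be non-tight. I would fix an admissible $k$-shot code with $n=n(\mathbf{C})$ and, for each $\boldsymbol{b}\in\mathbb{F}_q^k$, introduce the restricted encoders $\varphi_j^{\boldsymbol{b}}(\cdot) \triangleq \varphi_j(\cdot,\boldsymbol{b})$ (with $\varphi_1^{\boldsymbol{b}}(\boldsymbol{x}_1,\boldsymbol{x}_2)=\varphi_1(\boldsymbol{x}_1,\boldsymbol{x}_2,\boldsymbol{b})$ in $\Omega_2$). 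Three ingredients are combined. (I) \emph{Fiberwise lower bounds.} Fixing $\boldsymbol{x}_2=\boldsymbol{0}$ (resp.\ $\boldsymbol{x}_1=\boldsymbol{0}$) and varying $\boldsymbol{x}_1$ (resp.\ $\boldsymbol{x}_2$), the decoder must recover $\boldsymbol{x}_1$ (resp.\ $\boldsymbol{x}_2$) from the triple, with $\varphi_3^{\boldsymbol{b}}(\boldsymbol{0})$ (resp.\ $\varphi_2^{\boldsymbol{b}}(\boldsymbol{0})$) a constant; this forces $|\textup{Im}\,\varphi_1^{\boldsymbol{b}}(\cdot,\boldsymbol{0})|\cdot|\textup{Im}\,\varphi_2^{\boldsymbol{b}}|\geq q^k$ and the symmetric inequality, hence $|\textup{Im}\,\varphi_j^{\boldsymbol{b}}|\geq q^{k-n}$ for $j=2,3$ and every $\boldsymbol{b}$. (II) \emph{Fiberwise disjointness.} Since $\boldsymbol{x}_3$ must be recovered from $(\varphi_1,\varphi_2,\varphi_3)$, the triples arising from distinct $\boldsymbol{b}$ are disjoint, so $\sum_{\boldsymbol{b}} I_{\boldsymbol{b}} \leq q^{3n}$, where $I_{\boldsymbol{b}}$ is the fiber sub-image size. (III) \emph{Fiberwise sum distinguishability:} $I_{\boldsymbol{b}}\geq q^k$. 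The main obstacle is precisely combining (I)--(III) to extract the exponent $3/4$ instead of $2/3$: the raw inequalities above are compatible with $n/k=2/3$, so the tight argument must couple the \emph{lower} bounds on $|\textup{Im}\,\varphi_j^{\boldsymbol{b}}|$ with the per-fiber \emph{upper} bound $I_{\boldsymbol{b}}\leq|\textup{Im}\,\varphi_1|\cdot|\textup{Im}\,\varphi_2^{\boldsymbol{b}}|\cdot|\textup{Im}\,\varphi_3^{\boldsymbol{b}}|$ using sumset-type geometry: restricted to each product $A\times B$ of preimages of $\varphi_2^{\boldsymbol{b}}$ and $\varphi_3^{\boldsymbol{b}}$, the function $\varphi_1^{\boldsymbol{b}}$ must distinguish every element of $A+B$, so its image on $A\times B$ grows with $|A+B|\geq\max(|A|,|B|)$ rather than with $|A|\cdot|B|$. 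Aggregating this lower bound on $I_{\boldsymbol{b}}$ across all $(y_2,y_3)$ and summing over $\boldsymbol{b}$ against the disjointness cap $q^{3n}$ is the step I expect to pin down $n/k\geq 3/4$, thus completing the converse and, via monotonicity, establishing $\mathcal{C}(3,3,\Omega_1,T_2)=\mathcal{C}(3,3,\Omega_2,T_2)=3/4$.
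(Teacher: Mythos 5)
Your proposal is correct and follows essentially the same route as the paper: the identical monotonicity reduction ($\Omega_1\preceq\Omega_2$), the identical $4$-shot, $n=3$ achievability code, and a converse whose crux is the same fiberwise lower bound $|\textup{Im}\,\varphi_2(\cdot,\boldsymbol{b})|\geq q^{k-n}$ obtained by fixing $\boldsymbol{x}_2$ and playing the $q^k$ distinguishable sums against the $q^n$ cap on $\varphi_1$'s image. The final aggregation you leave partly as an expectation does close exactly as sketched (each $\boldsymbol{b}$-fiber contributes at least $q^k\cdot q^{k-n}$ distinct triples against the global cap $q^{3n}$, forcing $n/k\geq 3/4$); the paper organizes this count over function values $(\boldsymbol{a},\boldsymbol{b})$ rather than over the preimage classes of $\varphi_2^{\boldsymbol{b}}$ and $\varphi_3^{\boldsymbol{b}}$, but the two bookkeepings are equivalent.
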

\begin{proof}
We first note that $\Omega_1\preceq\Omega_2$, which implies that $\mathcal{C}\big(3,3,\Omega_1,T_2\big)\geq\mathcal{C}\big(3,3,\Omega_2,T_2\big)$ by Lemma \ref{lem-cap-omega'-geq-cap-omega}. So in order to prove the theorem, we only need to prove $\mathcal{C}\big(3,3,\Omega_2,T_2\big)\geq3/4$ and $\mathcal{C}\big(3,3,\Omega_1,T_2\big)\leq3/4$ corresponding to the converse and the achievability, respectively.

\smallskip

Now, we start to prove that $\mathcal{C}\big(3,3,\Omega_2,T_2\big)\geq3/4$. Consider an arbitrary positive integer $k$ and let $\mathbf{C}=\{\varphi_1, \varphi_2, \varphi_3;~ \psi\}$ be an arbitrary admissible $k$-shot source code for $\big(3,3,\Omega_2,T_2\big)$. For any source messages $\boldsymbol{x}_i\in\mathbb{F}_q^k$ for $i=1,2,3$, we write
\begin{equation*}\label{def-varphi}
\varphi(\boldsymbol{x}_1,\boldsymbol{x}_2,\boldsymbol{x}_3)\triangleq
\big( \varphi_1(\boldsymbol{x}_1,\boldsymbol{x}_2,\boldsymbol{x}_3),\,\varphi_2(\boldsymbol{x}_1,\boldsymbol{x}_3),\,
\varphi_3(\boldsymbol{x}_2,\boldsymbol{x}_3) \big),
\end{equation*}
and clearly,
\begin{equation}\label{varphi-leq-q-3n}
\begin{split}
\#\Big\{ \varphi(\boldsymbol{x}_1,\boldsymbol{x}_2,\boldsymbol{x}_3):~ \textup{all}~ \boldsymbol{x}_1,\boldsymbol{x}_2,\boldsymbol{x}_3\in\mathbb{F}_q^k \Big\}\leq \big|\textup{Im}\,\varphi_1\big|\cdot\big|\textup{Im}\,\varphi_2\big|\cdot\big|\textup{Im}\,\varphi_3\big| \leq q^{3n},
\end{split}
\end{equation}
where we let $n\triangleq n(\mathbf{C})$ for notational simplicity. By $\boldsymbol{x}_S\cdot T_2=(\boldsymbol{x}_1+\boldsymbol{x}_2,\boldsymbol{x}_3)$, all $(\boldsymbol{a},\boldsymbol{b})$ in $\mathbb{F}_q^k\times\mathbb{F}_q^k$ are all the function values and we consider
\begin{align*}
&\Big\{ \varphi(\boldsymbol{x}_1,\boldsymbol{x}_2,\boldsymbol{x}_3):~ \textup{all}~ \boldsymbol{x}_1,\boldsymbol{x}_2,\boldsymbol{x}_3\in\mathbb{F}_q^k \Big\} \notag\\
&= \bigcup\limits_{(\boldsymbol{a},\boldsymbol{b})\in\mathbb{F}_q^k\times\mathbb{F}_q^k
} \Big\{ \varphi(\boldsymbol{x}_1,\boldsymbol{x}_2,\boldsymbol{x}_3):~ \textup{all}~ \boldsymbol{x}_1,\boldsymbol{x}_2,\boldsymbol{x}_3\in\mathbb{F}_q^k~ \textup{s.t.}~ (\boldsymbol{x}_1+\boldsymbol{x}_2,\boldsymbol{x}_3)=(\boldsymbol{a},\boldsymbol{b}) \Big\}.
\end{align*}
This implies that
\begin{align}
&\#\Big\{ \varphi(\boldsymbol{x}_1,\boldsymbol{x}_2,\boldsymbol{x}_3):~ \textup{all}~ \boldsymbol{x}_1,\boldsymbol{x}_2,\boldsymbol{x}_3\in\mathbb{F}_q^k \Big\}\notag\\
&=\sum\limits_{(\boldsymbol{a},\boldsymbol{b})\in\mathbb{F}_q^k\times\mathbb{F}_q^k
}\#\Big\{ \varphi(\boldsymbol{x}_1,\boldsymbol{x}_2,\boldsymbol{x}_3):~ \textup{all}~ \boldsymbol{x}_1,\boldsymbol{x}_2,\boldsymbol{x}_3\in\mathbb{F}_q^k~ \textup{s.t.}~ (\boldsymbol{x}_1+\boldsymbol{x}_2,\boldsymbol{x}_3)=(\boldsymbol{a},\boldsymbol{b}) \Big\}  \label{varphi-supseteq-1} \\
&=\sum\limits_{(\boldsymbol{a},\boldsymbol{b})\in\mathbb{F}_q^k\times\mathbb{F}_q^k
}\#\Big\{ \varphi(\boldsymbol{x}_1,\boldsymbol{x}_2,\boldsymbol{b}):~ \textup{all}~ \boldsymbol{x}_1,\boldsymbol{x}_2\in\mathbb{F}_q^k~ \textup{s.t.}~ \boldsymbol{x}_1+\boldsymbol{x}_2=\boldsymbol{a} \Big\}, \label{varphi-supseteq-x3=b}
\end{align}
where the equality \eqref{varphi-supseteq-1} follows from the admissibility of the code $\mathbf{C}$.


Following from \eqref{varphi-supseteq-x3=b}, for each function value $(\boldsymbol{a},\boldsymbol{b})\in\mathbb{F}_q^k\times\mathbb{F}_q^k$, we continue to consider
\begin{align}
&\#\Big\{ \varphi(\boldsymbol{x}_1,\boldsymbol{x}_2,\boldsymbol{b}):~ \textup{all}~ \boldsymbol{x}_1,\boldsymbol{x}_2\in\mathbb{F}_q^k~ \textup{s.t.}~ \boldsymbol{x}_1+\boldsymbol{x}_2=\boldsymbol{a}\Big\}\nonumber\\
&=\#\Big\{\big( \varphi_1(\boldsymbol{x}_1,\boldsymbol{x}_2,\boldsymbol{b}),\,
\varphi_2(\boldsymbol{x}_1,\boldsymbol{b}),\,
\varphi_3(\boldsymbol{x}_2,\boldsymbol{b}) \big):~ \textup{all}~ \boldsymbol{x}_1,\boldsymbol{x}_2\in\mathbb{F}_q^k~ \textup{s.t.}~ \boldsymbol{x}_1+\boldsymbol{x}_2=\boldsymbol{a}\Big\} \notag \\
&\geq\#\Big\{
\varphi_2(\boldsymbol{x}_1,\boldsymbol{b}):~ \textup{all}~ \boldsymbol{x}_1,\boldsymbol{x}_2\in\mathbb{F}_q^k~ \textup{s.t.}~ \boldsymbol{x}_1+\boldsymbol{x}_2=\boldsymbol{a}\Big\}\nonumber\\
&=\#\Big\{
\varphi_2(\boldsymbol{x}_1,\boldsymbol{b}):~ \textup{all}~ \boldsymbol{x}_1\in\mathbb{F}_q^k\Big\},\label{varphi-z1z2geq-eq2}
\end{align}
where \eqref{varphi-z1z2geq-eq2} holds because for each $\boldsymbol{x}_1\in\mathbb{F}_q^k$, there always exists a source message $\boldsymbol{x}_2\in\mathbb{F}_q^k$ such that $\boldsymbol{x}_1+\boldsymbol{x}_2=\boldsymbol{a}$. Combining \eqref{varphi-supseteq-x3=b} and \eqref{varphi-z1z2geq-eq2}, we obtain that
\begin{align}
\#\Big\{ \varphi(\boldsymbol{x}_1,\boldsymbol{x}_2,\boldsymbol{x}_3):~ \textup{all}~ \boldsymbol{x}_1,\boldsymbol{x}_2,\boldsymbol{x}_3\in\mathbb{F}_q^k \Big\}\geq\sum\limits_{(\boldsymbol{a},\boldsymbol{b})\in\mathbb{F}_q^k\times\mathbb{F}_q^k
}\#\Big\{
\varphi_2(\boldsymbol{x}_1,\boldsymbol{b}):~ \textup{all}~ \boldsymbol{x}_1\in\mathbb{F}_q^k \Big\}. \label{varphi-supseteq-2}
\end{align}

Furthermore, we claim that for each $\boldsymbol{b}\in\mathbb{F}_q^k$,
\begin{align}\label{Im-varphi2-geq-q-k-n}
\#\Big\{
\varphi_2(\boldsymbol{x}_1,\boldsymbol{b}):~ \textup{all}~ \boldsymbol{x}_1\in\mathbb{F}_q^k \Big\}\geq q^{k-n},
\end{align}
which will become clear later. Together with \eqref{varphi-supseteq-2}, we thus obtain that
\begin{align}
&\#\Big\{ \varphi(\boldsymbol{x}_1,\boldsymbol{x}_2,\boldsymbol{x}_3):~ \textup{all}~ \boldsymbol{x}_1,\boldsymbol{x}_2,\boldsymbol{x}_3\in\mathbb{F}_q^k \Big\}\geq\sum\limits_{(\boldsymbol{a},\boldsymbol{b})\in\mathbb{F}_q^k\times\mathbb{F}_q^k
} \#\Big\{
\varphi_2(\boldsymbol{x}_1,\boldsymbol{b}):~ \textup{all}~ \boldsymbol{x}_1\in\mathbb{F}_q^k\Big\} \notag \\
&\geq\sum\limits_{(\boldsymbol{a},\boldsymbol{b})\in\mathbb{F}_q^k\times\mathbb{F}_q^k
}q^{k-n}=q^{3k-n}. \label{varphi-geq-q-k-3n}
\end{align}
Combining \eqref{varphi-leq-q-3n} and \eqref{varphi-geq-q-k-3n}, we have
\begin{align*}
q^{3n}\geq\#\Big\{ \varphi(\boldsymbol{x}_1,\boldsymbol{x}_2,\boldsymbol{x}_3):~ \textup{all}~ \boldsymbol{x}_1,\boldsymbol{x}_2,\boldsymbol{x}_3\in\mathbb{F}_q^k \Big\}\geq q^{3k-n},
\end{align*}
namely that $n/k\geq3/4$. We have thus proved that $R(\mathbf{C})\geq3/4$.
Furthermore, we note that the lower bound $3/4$ on the coding rate is true for each positive integer $k$ and each admissible $k$-shot source code for $\big(3,3,\Omega_2,T_2\big)$, and thus we have proved that
\begin{align*}
\mathcal{C}\big(3,3,\Omega_2,T_2\big)\geq\frac{~3~}{~4~}.
\end{align*}

We now prove the claim \eqref{Im-varphi2-geq-q-k-n}. For any fixed source message $\boldsymbol{c}\in\mathbb{F}_q^k$ for $\boldsymbol{x}_2$, we first consider
\begin{align}
&\#\Big\{ \varphi(\boldsymbol{x}_1,\boldsymbol{c},\boldsymbol{b}):\, \textup{all}~ \boldsymbol{x}_1\in\mathbb{F}_q^k \Big\} \notag \\
&\geq\#\Big\{ (\boldsymbol{x}_1,\boldsymbol{c},\boldsymbol{b})\cdot T_2:\, \textup{all}~ \boldsymbol{x}_1\in\mathbb{F}_q^k \Big\} \label{varphi-c-geq-T2} \\
&=\#\Big\{ (\boldsymbol{x}_1+\boldsymbol{c},\boldsymbol{b}):\, \textup{all}~ \boldsymbol{x}_1\in\mathbb{F}_q^k \Big\}=q^k, \label{varphi-z1z2geq-eq4}
\end{align}
where \eqref{varphi-c-geq-T2} follows from the admissibility of the code $\mathbf{C}$. On the other hand, we write
\begin{align}
&\#\Big\{ \varphi(\boldsymbol{x}_1,\boldsymbol{c},\boldsymbol{b}):~ \textup{all}~ \boldsymbol{x}_1\in\mathbb{F}_q^k \Big\} \notag \\
&=\#\Big\{ \big( \varphi_1(\boldsymbol{x}_1,\boldsymbol{c},\boldsymbol{b}),\, \varphi_2(\boldsymbol{x}_1,\boldsymbol{b}),\, \varphi_3(\boldsymbol{c},\boldsymbol{b}) \big):~ \textup{all}~ \boldsymbol{x}_1\in\mathbb{F}_q^k \Big\} \notag \\
&=\#\Big\{ \big( \varphi_1(\boldsymbol{x}_1,\boldsymbol{c},\boldsymbol{b}),\, \varphi_2(\boldsymbol{x}_1,\boldsymbol{b})\big):~ \textup{all}~ \boldsymbol{x}_1\in\mathbb{F}_q^k \Big\} \notag \\
&\leq\#\Big\{ \varphi_1(\boldsymbol{x}_1,\boldsymbol{c},\boldsymbol{b}):~ \textup{all}~ \boldsymbol{x}_1\in\mathbb{F}_q^k \Big\} \cdot \#\Big\{ \varphi_2(\boldsymbol{x}_1,\boldsymbol{b}):~ \textup{all}~ \boldsymbol{x}_1\in\mathbb{F}_q^k \Big\} \notag \\
&\leq|\textup{Im}\, \varphi_1| \cdot \#\Big\{ \varphi_2(\boldsymbol{x}_1,\boldsymbol{b}):~ \textup{all}~ \boldsymbol{x}_1\in\mathbb{F}_q^k \Big\} \notag \\
&\leq q^n\cdot \#\Big\{ \varphi_2(\boldsymbol{x}_1,\boldsymbol{b}):~ \textup{all}~ \boldsymbol{x}_1\in\mathbb{F}_q^k \Big\}. \label{varphi-z1z2geq-eq5}
\end{align}
Combining \eqref{varphi-z1z2geq-eq4} and \eqref{varphi-z1z2geq-eq5}, we obtain that
\begin{align*}
q^k\leq\#\Big\{ \varphi(\boldsymbol{x}_1,\boldsymbol{c},\boldsymbol{b}):~ \textup{all}~ \boldsymbol{x}_1\in\mathbb{F}_q^k \Big\}\leq  q^n\cdot \#\Big\{ \varphi_2(\boldsymbol{x}_1,\boldsymbol{b}):~ \textup{all}~ \boldsymbol{x}_1\in\mathbb{F}_q^k \Big\},
\end{align*}
implying that
\begin{align*}
\#\Big\{
\varphi_2(\boldsymbol{x}_1,\boldsymbol{b}):~ \textup{all}~ \boldsymbol{x}_1\in\mathbb{F}_q^k\Big\}\geq q^{k-n}.  
\end{align*}
Hence, we have proved the claim \eqref{Im-varphi2-geq-q-k-n}.

\smallskip

Next, we will prove that $\mathcal{C}\big(3,3,\Omega_1,T_2\big)\leq3/4$. Toward this end, we construct an admissible $4$-shot source code $\mathbf{C}=\{\varphi_1,\varphi_2,\varphi_3;\,\psi\}$ for $(3,3,\Omega_1,T_2)$ with $n(\mathbf{C})=3$ as follows, which implies that the coding rate $R(\mathbf{C})=n(\mathbf{C})/k=3/4$. Let $\boldsymbol{x}_i=(x_{i,1},\,x_{i,2},\,x_{i,3},\,x_{i,4})^\top$ for $i=1,2,3$, and
\begin{align*}
&\varphi_1(\boldsymbol{x}_1,\boldsymbol{x}_2)=(x_{1,1}+x_{2,1},~ x_{1,2}+x_{2,2},~ x_{1,3}+x_{2,3}),\\ &\varphi_2(\boldsymbol{x}_1,\boldsymbol{x}_3)=(x_{1,4},~ x_{3,1},~ x_{3,2}), \\ &\varphi_3(\boldsymbol{x}_2,\boldsymbol{x}_3)=(x_{2,4},~ x_{3,3},~ x_{3,4}).
\end{align*}
With the received messages $\varphi_1(\boldsymbol{x}_1,\boldsymbol{x}_2)$, $\varphi_2(\boldsymbol{x}_1,\boldsymbol{x}_3)$ and $\varphi_3(\boldsymbol{x}_2,\boldsymbol{x}_3)$, we can compute at the decoder $\rho$
\[
\boldsymbol{x}_S\cdot T_2=
\begin{bmatrix}
\hspace{-0.1mm}x_{1,1}+x_{2,1} &x_{3,1}\\
\hspace{-0.1mm}x_{1,2}+x_{2,2} &x_{3,2}\\
\hspace{-0.1mm}x_{1,3}+x_{2,3} &x_{3,3}\\
\hspace{-0.1mm}x_{1,4}+x_{2,4} &x_{3,4}
\end{bmatrix}.
\]
This immediately implies that $\mathcal{C}(3,3,\Omega_1,T_2)\leq3/4$ and also $\mathcal{C}(3,3,\Omega_2,T_2)\leq3/4$. Together with the converse part $\mathcal{C}(3,3,\Omega_1,T_2)\geq\mathcal{C}(3,3,\Omega_2,T_2)\geq3/4$, the theorem is proved.
\end{proof}

\medskip

\begin{figure*}[t]
\tikzstyle{vertex}=[draw,circle,fill=gray!30,minimum size=6pt, inner sep=0pt]
\tikzstyle{vertex1}=[draw,circle,fill=gray!80,minimum size=6pt, inner sep=0pt]
\centering
\begin{minipage}[b]{0.5\textwidth}
\centering
{
 \begin{tikzpicture}[x=0.6cm]
     \node[draw,circle,fill=gray!30,minimum size=6pt, inner sep=0pt](a1)at(1,4.5){};
        \node at (1,4.9) {$\sigma_1$};
        \node[draw,circle,fill=gray!30,minimum size=6pt, inner sep=0pt](a2)at(4,4.5){};
        \node at (4,4.9) {$\sigma_2$};
        \node[draw,circle,fill=gray!30,minimum size=6pt, inner sep=0pt](a3)at(7,4.5){};
        \node at (7,4.9) {$\sigma_3$};

        \node[draw,circle,fill=gray!30,minimum size=6pt, inner sep=0pt](r1)at(1,2.5){};
        \node[draw,circle,fill=gray!30,minimum size=6pt, inner sep=0pt](r2)at(4,2.5){};
         \node[draw,circle,fill=gray!30,minimum size=6pt, inner sep=0pt](r3)at(7,2.5){};
        \node at (0.4,2.4) {$v_1$};
        \node at (3.4,2.4) {$v_2$};
        \node at (7.6,2.4) {$v_3$};

        \node[draw,circle,fill=gray!30,minimum size=6pt, inner sep=0pt](r)at(4,1){};
        \node at (4,0.6) {$\rho$};


        \draw[->,>=latex](a1)--(r1);
        \draw[->,>=latex](a1)--(r2);
        \draw[->,>=latex](a2)--(r1);
        \draw[->,>=latex](a2)--(r3);
        \draw[->,>=latex](a3)--(r2);
        \draw[->,>=latex](a3)--(r3);

        \draw[->,>=latex] (a1) to[bend right=15](r1);
        \draw[->,>=latex] (a2) to[bend right=15](r1);
        \draw[->,>=latex] (a3) to[bend right=15](r2);
        \draw[->,>=latex] (a1) to[bend left=15](r2);
        \draw[->,>=latex] (a2) to[bend left=15](r3);
        \draw[->,>=latex] (a3) to[bend left=15](r3);

        \draw[->,>=latex](r1)--(r) node[midway, auto,swap, left=0mm] {$e_1$};
        \draw[->,>=latex](r2)--(r) node[midway, auto,swap, left=0mm] {$e_2$};
        \draw[->,>=latex](r3)--(r) node[midway, auto,swap, right=0mm] {$e_3$};
    \end{tikzpicture}
}
 \vspace{-1em}
  \caption{The network function computation \vspace{-1mm} \\ model $(\mathcal{N}_1,T_2)$.}
  \label{graph-T2-m=3-xi=2/3-not-tight-fig1-NFC}
\end{minipage}%
\centering
\begin{minipage}[b]{0.5\textwidth}
\centering
 \begin{tikzpicture}[x=0.6cm]
   \node[draw,circle,fill=gray!30,minimum size=6pt, inner sep=0pt](a1)at(1,4.5){};
        \node at (1,4.9) {$\sigma_1$};
        \node[draw,circle,fill=gray!30,minimum size=6pt, inner sep=0pt](a2)at(4,4.5){};
        \node at (4,4.9) {$\sigma_2$};
        \node[draw,circle,fill=gray!30,minimum size=6pt, inner sep=0pt](a3)at(7,4.5){};
        \node at (7,4.9) {$\sigma_3$};

        \node[draw,circle,fill=gray!30,minimum size=6pt, inner sep=0pt](r1)at(1,2.5){};
        \node[draw,circle,fill=gray!30,minimum size=6pt, inner sep=0pt](r2)at(4,2.5){};
         \node[draw,circle,fill=gray!30,minimum size=6pt, inner sep=0pt](r3)at(7,2.5){};
        \node at (0.4,2.4) {$v_1$};
        \node at (3.4,2.4) {$v_2$};
        \node at (7.6,2.4) {$v_3$};

        \node[draw,circle,fill=gray!30,minimum size=6pt, inner sep=0pt](r)at(4,1){};
        \node at (4,0.6) {$\rho$};


         \draw[->,>=latex](a1)--(r1);
        \draw[->,>=latex](a1)--(r2);
        \draw[->,>=latex](a2)--(r1);
        \draw[->,>=latex](a2)--(r3);
        \draw[->,>=latex](a3)--(r1);
        \draw[->,>=latex](a3)--(r2);
        \draw[->,>=latex](a3)--(r3);

        \draw[->,>=latex] (a1) to[bend right=15](r1);
        \draw[->,>=latex] (a2) to[bend right=15](r1);
        \draw[->,>=latex] (a3) to[bend right=10](r1);
        \draw[->,>=latex] (a3) to[bend left=15](r2);
        \draw[->,>=latex] (a1) to[bend left=15](r2);
        \draw[->,>=latex] (a2) to[bend left=15](r3);
        \draw[->,>=latex] (a3) to[bend left=15](r3);

        \draw[->,>=latex](r1)--(r) node[midway, auto,swap, left=0mm] {$e_1$};
        \draw[->,>=latex](r2)--(r) node[midway, auto,swap, left=0mm] {$e_2$};
        \draw[->,>=latex](r3)--(r) node[midway, auto,swap, right=0mm] {$e_3$};
    \end{tikzpicture}
     \vspace{-1em}
      \caption{The network function computation \vspace{-1mm} \\ model $(\mathcal{N}_2,T_2)$.}
   \label{graph-T2-m=3-xi=2/3-not-tight-fig2-NFC}
\end{minipage}
  \vspace{-4em}
\end{figure*}

An important application of the function-compression capacity $3/4$ for the models $(3,3,\Omega_1,T_2)$ and $(3,3,\Omega_2,T_2)$ is in the tightness of the best known upper bound on the computing capacity in network function computation \cite{Guang_Improved_upper_bound}, where an open problem that whether this upper bound is in general tight was given. By the equivalence of the distributed source coding model for function compression and the model of network function computation discussed in Section \ref{NFC}, we can transform $(3,3,\Omega_1,T_2)$ and $(3,3,\Omega_2,T_2)$ to two models of network function computation, denoted by  $(\mathcal{N}_1,T_2)$ and $(\mathcal{N}_2,T_2)$ and depicted in Figs. \ref{graph-T2-m=3-xi=2/3-not-tight-fig1-NFC} and \ref{graph-T2-m=3-xi=2/3-not-tight-fig2-NFC}, respectively. More precisely, by \eqref{def-ell} we note that $m=3$ and $r=\textup{Rank}(T_2)=2$, so that  $\ell=\lceil m/r \rceil=2$ and then there are two edges from $\sigma_i$ to $v_j$ in the corresponding $\Gamma_{\sigma_i}$. See Figs.~\ref{graph-T2-m=3-xi=2/3-not-tight-fig1-NFC} and \ref{graph-T2-m=3-xi=2/3-not-tight-fig2-NFC}. By Theorem \ref{cap-T2-m=3-3/4} and \eqref{correlation}, the computing capacities for the two models are
\[
\mathcal{C}(\mathcal{N}_1,T_2)=\mathcal{C}(\mathcal{N}_2,T_2)=\frac{4}{3}.
\]
However, we have known that the best known upper bound proved by Guang \textit{et al.}~\cite{Guang_Improved_upper_bound} on the computing capacities $\mathcal{C}(\mathcal{N}_1,T_2)$ and $\mathcal{C}(\mathcal{N}_2,T_2)$ are identical to $3/2$. This thus implies that the best known upper bound is in general not tight for computing vector-linear functions, rather than computing scalar-linear functions of which the computing capacities over arbitrary network topologies are able to be characterized by this upper bound proved in \cite{Guang_Improved_upper_bound}. This shows that there exists essential difference for computing vector-linear functions and scalar-linear functions over a network.

Furthermore, the binary arithmetic sum is the only target function for all previously considered network function computation problems for which the best known upper bound proved in \cite{Guang_Improved_upper_bound} is not tight for their computing capacities (cf. \cite{Guang_Zhang_Arithmetic_sum_TIT} and \cite{Guang_Zhang_Arithmetic_sum_Sel_Areas}). Here, we have given another target function which for computing over networks the best known upper bound is in general not tight for the computing capacities.

\section{Capacity Characterization for the Model $(3,m,\Omega,T_1)$}\label{section-3-m-T1}

In this section, we characterize the function-compression capacities for all the models $(3,m,\Omega,T_1)$ with $1\leq m\leq3$ and arbitrary connectivity states~$\Omega$.

\begin{theorem}\label{cap-T1}
Consider a model $\big(3,m,\Omega,T_1\big)$, where $1\leq m\leq 3$ and $\Omega$ is an arbitrary connectivity state. Then
\begin{equation*}\label{cap-3-m-T1}
\mathcal{C}\big(3,m,\Omega,T_1\big)=\max\limits_{\Gamma\subseteq V} \frac{~\textup{Rank}\big(T_1[I_{\Gamma}]\big)~}{~|\Gamma|~},
\end{equation*}
where we recall that
\begin{align}\label{T1-def-I-Gamma}
I_{\Gamma} = \big\{ \sigma_i\in S: \rho~ \text{is separated from}~ \sigma_i~ \text{upon deleting the edges}~ (v,\rho)~ \textup{for all}~ v\in\Gamma  \big\}
\end{align}
\textup{(cf. $\eqref{id-def-I-Gamma}$)}, and $T_1[I_{\Gamma}]$ stands for the submatrix of $T_1$ containing the $i$th row if $\sigma_i\in I_{\Gamma}$.\footnote{Here, for a matrix $T$ and a source subset $I\subseteq S$, we let $\textup{Rank}\big(T[I]\big)=0$ if $I=\emptyset$.}
\end{theorem}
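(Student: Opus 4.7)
\emph{The plan is to prove matching lower and upper bounds on $\mathcal C(3,m,\Omega,T_1)$.}

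\textbf{Lower bound.} I would specialize the general bound \eqref{3-m-Omega-T-geq-max-best-known-upper-bound}. For each $\Gamma\subseteq V$ with $I_\Gamma\neq\emptyset$, the edge subset $C_\Gamma\triangleq\{e_j:v_j\in\Gamma\}\subseteq\textup{In}(\rho)$ is a cut set in $\Lambda(\mathcal{N})$ and satisfies $I_{C_\Gamma}=I_\Gamma$. The trivial single-block partition $\mathcal{P}_{C_\Gamma}=\{C_\Gamma\}$ is a strong partition (condition 2 of Definition \ref{def:strong_partition} is vacuous), and a direct substitution in the definition of $\textbf{\textup{rank}}_{\mathcal{P}_C}(T)$ gives $\textbf{\textup{rank}}_{\mathcal{P}_{C_\Gamma}}(T_1)=\textup{Rank}(T_1[I_\Gamma])$. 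Since subsets $\Gamma$ with $I_\Gamma=\emptyset$ contribute $0$ to the candidate maximum, \eqref{3-m-Omega-T-geq-max-best-known-upper-bound} yields $\mathcal C(3,m,\Omega,T_1)\geq \max_{\Gamma\subseteq V}\textup{Rank}(T_1[I_\Gamma])/|\Gamma|$.

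\textbf{Upper bound (achievability).} For the matching achievability, I would exhibit, for each admissible pair $(m,\Omega)$, a $k$-shot source code with coding rate equal to $R^\star(\Omega)\triangleq\max_{\Gamma\subseteq V}\textup{Rank}(T_1[I_\Gamma])/|\Gamma|$. Corollary \ref{cor-T-S=3-cap=permutation} reduces the task to one representative per isomorphism class of the model under the joint action $(\pi,\tau)$ on $(S,V)$; since $T_1$ has the symmetry that any two rows are linearly independent and the three rows satisfy $\textup{row}_3=\textup{row}_1+\textup{row}_2$, many connectivity states merge into a single class after source permutations. For each representative, the construction would proceed in three steps: (i) identify a bottleneck $\Gamma^\star\subseteq V$ achieving $R^\star(\Omega)$; (ii) decompose the target $\boldsymbol x_S\cdot T_1=(\boldsymbol x_1+\boldsymbol x_3,\,\boldsymbol x_2+\boldsymbol x_3)$ into pieces respecting the access sets $\Theta(v_j)$, so that each encoder transmits only $\mathbb{F}_q$-linear combinations of the sources it sees; and (iii) choose each encoder's linear map of minimum dimension so that the joint received packets invert to $\boldsymbol x_S\cdot T_1$ while capping the worst per-encoder payload at $R^\star(\Omega)\cdot k$ for a small blocklength $k$ (typically $k\leq|\Gamma^\star|$).

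\textbf{Main obstacle.} The principal difficulty is the case analysis for $m=3$: several non-isomorphic connectivity states must be treated, and for each I must exhibit an explicit linear code achieving exactly $R^\star(\Omega)$. The degenerate cases where $\textup{Rank}(T_1[I_{\Gamma^\star}])\leq 1$ are immediate (a single scalar-linear combination per bottleneck encoder suffices), but the mixed cases --- in which different encoders must contribute $1$-dimensional or $2$-dimensional pieces that jointly span the image of $T_1$ --- require care. The relation $\textup{row}_3=\textup{row}_1+\textup{row}_2$ is exactly what makes these constructions possible: any one of the three row-combinations $\boldsymbol x_1+\boldsymbol x_3$, $\boldsymbol x_2+\boldsymbol x_3$, $\boldsymbol x_1-\boldsymbol x_2$ is recoverable from the other two, so each encoder $v_j$ can deliver whichever combinations are compatible with $\Theta(v_j)$. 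This structural redundancy --- absent in the $T_2$ case treated in Theorem \ref{cap-T2-m=3-3/4} --- is what keeps the lower bound tight here, and exploiting it uniformly across the isomorphism classes is the technical heart of the argument. Lemma \ref{lem-cap-omega'-geq-cap-omega} can be used as a sanity check throughout: both $\mathcal C$ and $R^\star$ are monotone non-increasing with respect to $\preceq$, so any claimed code/formula pair is consistent with enlarging $\Omega$.
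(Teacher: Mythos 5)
Your proposal follows essentially the same route as the paper: the converse specializes the general bound \eqref{3-m-Omega-T-geq-max-best-known-upper-bound} via the trivial strong partition of the cut $C_\Gamma=\{e_j:v_j\in\Gamma\}$ (giving $\textbf{\textup{rank}}_{\{C_\Gamma\}}(T_1)=\textup{Rank}(T_1[I_\Gamma])$), and the achievability reduces, through model isomorphism (Corollary \ref{cor-T-S=3-cap=permutation}, using that any row permutation of $T_1$ is again of type $T_1$) and monotonicity under $\preceq$ (Lemma \ref{lem-cap-omega'-geq-cap-omega}), to a few minimal connectivity states for which explicit linear codes exploiting $\textup{row}_3=\textup{row}_1+\textup{row}_2$ meet the bound --- exactly the paper's strategy, including the $k=3$, $n=2$ interleaved code needed for the $m=3$ case with $R_1=2/3$. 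What remains is only the routine enumeration of isomorphism classes and the explicit codes, which your plan correctly anticipates; note that Lemma \ref{lem-cap-omega'-geq-cap-omega} is not merely a sanity check but the load-bearing step that lets a single code for the minimal $\Omega^*$ cover every $\Omega\succeq\Omega^*$ in its class.
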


We start to prove Theorem \ref{cap-T1}. Consider an arbitrary model $\big(3,m,\Omega,T_1\big)$ with $1\leq m\leq 3$, and let~$(\mathcal{N},T_1)$ be the corresponding model of network function computation. We specify the lower bound~\eqref{3-m-Omega-T-geq-max-best-known-upper-bound} for the model $\big(3,m,\Omega,T_1\big)$ in the following.
We first note that for $T_1=\left[\begin{smallmatrix} 1&0\\ 0&1\\ 1&1 \end{smallmatrix}\right]$
(cf. \eqref{T1-T2}), any two row vectors of $T_1$ are linearly independent. With this, for a cut set $C\in\Lambda(\mathcal{N})$ and a strong partition $\mathcal{P}_{C}=\{C_1,C_2,\cdots,C_t\}$ of $C$ (where $1\leq t\leq 3$),
we have
\begin{align}
\textbf{\textup{rank}}_{\mathcal{P}_{C}}(T_1)&=
\sum_{i=1}^t \textup{Rank}\big(T_1[I_{C_i}]\big)+\textup{Rank}\big(T_1[I_{C}]\big)
-\textup{Rank}\big(T_1[\cup_{i=1}^t I_{C_i}]\big) 
\notag \\
&=
\sum_{i=1}^t \textup{Rank}\big( T_1[I_{C_i}] \big) \label{rank-Gamma-T1-eq-i=1-t-Rank-T1-I-Gamma-i} \\
&\geq\textup{Rank}\big(T_1[I_{C}]\big). \label{rank-Gamma-T1-geq-Rank-T1-I-Gamma}
\end{align}
To see this, we note that the equality \eqref{rank-Gamma-T1-eq-i=1-t-Rank-T1-I-Gamma-i} is true for the trivial strong partition $\mathcal{P}_{C}=\{C\}$ (i.e., $t=1$) of~$C$, and then the inequality \eqref{rank-Gamma-T1-geq-Rank-T1-I-Gamma} holds.\footnote{In fact, for the trivial strong partition $\mathcal{P}_{C}=\{C\}$ of~$C$, the inequality \eqref{rank-Gamma-T1-geq-Rank-T1-I-Gamma} holds with equality.}
Otherwise, for a nontrivial strong partition $\mathcal{P}_{C}=\{C_1,C_2,\cdots,C_t\}$ (i.e., $t=2$ or $3$) of $C$, we can see that
\[
2=\textup{Rank}\big(T_1[I_{C}]\big)=\textup{Rank}\big(T_1[\cup_{i=1}^t I_{C_i}]\big),
\]
immediately implying the equality \eqref{rank-Gamma-T1-eq-i=1-t-Rank-T1-I-Gamma-i} and further the inequality \eqref{rank-Gamma-T1-geq-Rank-T1-I-Gamma}. With \eqref{rank-Gamma-T1-eq-i=1-t-Rank-T1-I-Gamma-i} and \eqref{rank-Gamma-T1-geq-Rank-T1-I-Gamma}, we consider the lower bound \eqref{3-m-Omega-T-geq-max-best-known-upper-bound} and obtain that
\begin{align}
\max\limits_{C\in\Lambda(\mathcal{N})}\, \max_{ \substack{\textup{all strong partitions}\\ \mathcal{P}_C\,\textup{of}\,C} }  \frac{~\textbf{\textup{rank}}_{\mathcal{P}_C}(T_1)~}{~|C|~}&=\max_{C\in\Lambda(\mathcal{N})} \max_{ \substack{\textup{all strong partitions}\\ \mathcal{P}_{C}=\{C_1,C_2,\cdots,C_t\}\, \textup{of}\,C} } \frac{~\sum_{i=1}^t \textup{Rank}\big(T_1[I_{C_i}]\big)~}{~|C|~}
\label{max-Gamma-V-rank-Gamma-T1-Gamma-eq-max-Gamma-V-all-sp-T1-I-Gamma-i} \\
&\geq\max\limits_{C\in\Lambda(\mathcal{N})} \frac{~\textup{Rank}\big(T_1[I_{C}]\big)~}{~|C|~}.\label{equ2}
\end{align}
On the other hand, we continue to consider \eqref{max-Gamma-V-rank-Gamma-T1-Gamma-eq-max-Gamma-V-all-sp-T1-I-Gamma-i} as follows:
\begin{align}
&\max\limits_{C\in\Lambda(\mathcal{N})}\, \max_{ \substack{\textup{all strong partitions}\\ \mathcal{P}_C\,\textup{of}\,C} }  \frac{~\textbf{\textup{rank}}_{\mathcal{P}_C}(T_1)~}{~|C|~}=\max\limits_{C\in\Lambda(\mathcal{N})} \max\limits_{ \substack{\textup{all strong partitions}\\ \mathcal{P}_{C}=\{C_1,C_2,\cdots,C_t\}\, \textup{of}\,C} } \frac{~\sum_{i=1}^t \textup{Rank}\big(T_1[I_{C_i}]\big)~}{~|C|~} \notag \\
&=\max\limits_{C\in\Lambda(\mathcal{N})} \max\limits_{ \substack{\textup{all strong partitions}\\ \mathcal{P}_{C}=\{C_1,C_2,\cdots,C_t\}\, \textup{of}\,C} } \frac{~\sum_{i=1}^t \textup{Rank}\big(T_1[I_{C_i}]\big)~}{~\sum_{i=1}^t |C_i|~} \label{max-Gamma-rank-Gamma-T1-Gamma-eq-max-Gamma-all-sp-i=1-t-Rank-T1-I-Gamma-i-Gamma-i} \\
&\leq\max\limits_{C\in\Lambda(\mathcal{N})} \max\limits_{ \substack{\textup{all strong partitions}\\ \mathcal{P}_{C}=\{C_1,C_2,\cdots,C_t\}\, \textup{of}\,C} }  \max_{1\leq i\leq t} \frac{~\textup{Rank}\big(T_1[I_{C_i}]\big)~}{~|C_i|~}  \label{max-Gamma-rank-Gamma-T1-Gamma-leq-max-Gamma-all-sp-max-i=1-t-Rank-T1-I-Gamma-i-Gamma-i} \\
&\leq \max\limits_{C\in\Lambda(\mathcal{N})} \frac{~\textup{Rank}\big(T_1[I_{C}]\big)~}{~|C|~}, \label{max-Gamma-rank-Gamma-T1-Gamma-leq-max-Gamma-Rank-T1-I-Gamma}
\end{align}
where the equality \eqref{max-Gamma-rank-Gamma-T1-Gamma-eq-max-Gamma-all-sp-i=1-t-Rank-T1-I-Gamma-i-Gamma-i} follows from $|C|=\sum_{i=1}^t |C_i|$ for a strong partition $\mathcal{P}_{C}=\{C_1,C_2,\cdots,C_t\}$ of~$C$; the inequality \eqref{max-Gamma-rank-Gamma-T1-Gamma-leq-max-Gamma-all-sp-max-i=1-t-Rank-T1-I-Gamma-i-Gamma-i} follows from a generalization of the mediant inequality, more precisely,
\begin{align*}
\frac{~\sum_{i=1}^t \textup{Rank}\big(T_1[I_{C_i}]\big)~}{~\sum_{i=1}^t |C_i|~} =\frac{~\sum_{i=1}^t \omega_i |C_i|~}{~\sum_{i=1}^t |C_i|~}\leq \max_{1\leq i\leq t}\, \omega_i,\quad \text{where $\omega_i\triangleq\frac{~\textup{Rank}\big(T_1[I_{C_i}]\big)~}{~|C_i|~}$, $1\leq i\leq t$};
\end{align*}
and the inequality~\eqref{max-Gamma-rank-Gamma-T1-Gamma-leq-max-Gamma-Rank-T1-I-Gamma} follows because for each strong partition $\mathcal{P}_{C}=\{C_1,C_2,\cdots,C_t\}$ of $C$, the following inequality holds:
\[
\max_{1\leq i\leq t} \frac{~\textup{Rank}\big(T_1[I_{C_i}]\big)~}{~|C_i|~} \leq  \max\limits_{C\in\Lambda(\mathcal{N})} \frac{~\textup{Rank}\big(T_1[I_{C}]\big)~}{~|C|~}.
\]
Combining \eqref{equ2} and \eqref{max-Gamma-rank-Gamma-T1-Gamma-leq-max-Gamma-Rank-T1-I-Gamma}, we have specified the lower bound \eqref{3-m-Omega-T-geq-max-best-known-upper-bound} for the model $\big(3,m,\Omega,T_1\big)$ as follows:
\begin{align*}
\max\limits_{C\in\Lambda(\mathcal{N})}\, \max_{ \substack{\textup{all strong partitions}\\ \mathcal{P}_C\,\textup{of}\,C} }  \frac{~\textbf{\textup{rank}}_{\mathcal{P}_C}(T_1)~}{~|C|~}=\max\limits_{C\in\Lambda(\mathcal{N})} \frac{~\textup{Rank}\big(T_1[I_{C}]\big)~}{~|C|~}.
\end{align*}

Furthermore, we claim that
\begin{align}\label{C-Lambda-N-Rank-T1-I-C-eq-C-In-rho-Rank-T1-I-C-C}
\max\limits_{C\in\Lambda(\mathcal{N})} \frac{~\textup{Rank}\big(T_1[I_{C}]\big)~}{~|C|~}
=\max\limits_{C\in\Lambda(\mathcal{N})\,\textup{s.t.}\,C\subseteq\textup{In}(\rho)} \frac{~\textup{Rank}\big(T_1[I_{C}]\big)~}{~|C|~}.
\end{align}
This claim can be proved by the same argument for proving \eqref{id-max-I-C-C-eq-max-In-rho-I-C-C}. Now, we rewrite the RHS of \eqref{C-Lambda-N-Rank-T1-I-C-eq-C-In-rho-Rank-T1-I-C-C} as
\begin{align}
\max\limits_{C\in\Lambda(\mathcal{N})\,\textup{s.t.}\,C\subseteq\textup{In}(\rho)} \frac{~\textup{Rank}\big(T_1[I_{C}]\big)~}{~|C|~}&=
\max\limits_{C\in\Lambda(\mathcal{N})\,\textup{s.t.}\,C\subseteq\textup{In}(\rho)} \frac{~\textup{Rank}\big(T_1[I_{C}]\big)~}{~\big|\{\textup{tail}(e):e\in C\}\big|~} \notag \\
&=\max\limits_{\Gamma\subseteq V} \frac{~\textup{Rank}\big(T_1[I_{\Gamma}]\big)~}{~|\Gamma|~}, \label{C-In-rho-Rank-T1-I-C-C-eq-Gamma-V-Rank-T1-I-Gamma-Gamma}
\end{align}
where the equality \eqref{C-In-rho-Rank-T1-I-C-C-eq-Gamma-V-Rank-T1-I-Gamma-Gamma} holds because by the definition of $I_{\Gamma}$ in \eqref{T1-def-I-Gamma}, we have $I_{\Gamma}=I_C$ if $C=\{(v,\rho):v\in\Gamma\}$.
We have thus proved the converse part, i.e.,
\begin{equation}\label{cap-3-m-T1-geq-rank-T1-I-Gamma-Gamma}
\mathcal{C}\big(3,m,\Omega,T_1\big)
\geq \max\limits_{\Gamma\subseteq V} \frac{~\textup{Rank}\big(T_1[I_{\Gamma}]\big)~}{~|\Gamma|~}.
\end{equation}

\medskip

For the notational simplicity in the rest of the proof, we denote by $R_1$ the lower bound in \eqref{cap-3-m-T1-geq-rank-T1-I-Gamma-Gamma}, i.e.,
\begin{align}\label{def-R1}
R_1\triangleq\max\limits_{\Gamma\subseteq V} \frac{~\textup{Rank}\big(T_1[I_{\Gamma}]\big)~}{~|\Gamma|~}.
\end{align}
It now remains to prove the achievability of $R_1$, namely that for each $1\leq m\leq 3$ and an arbitrary connectivity state $\Omega$,
$\mathcal{C}\big(3,m,\Omega,T_1\big)\leq R_1$.
Toward this end, we will construct an admissible $k$-shot source code $\mathbf{C}$ for $\big(3,m,\Omega,T_1\big)$ such that the coding rate $R(\mathbf{C})\leq R_1$, and thus $\mathcal{C}\big(3,m,\Omega,T_1\big)\leq R(\mathbf{C})\leq R_1$.

\vspace{-0.5em}

\subsection{The Case of $m=1$}

For the case of $m=1$, we have $V=\{v_1\}$ and the unique connectivity state $\Omega=\big(\Gamma_{\sigma_i}=\{v_1\}:~ i=1,2,3\big)$ as depicted in Fig.\,\ref{T1-m=1}. By \eqref{def-R1} we can compute
\[
R_1=\frac{~\textup{Rank}\big(T_1\big)~}{~|\{v_1\}|~}=2,
\]
and so $\mathcal{C}\big(3,m,\Omega,T_1\big)\geq R_1=2$. On the other hand, in Fig.\,\ref{T1-m=1} we  present an admissible $1$-shot source code with $n(\mathbf{C})=2$ and hence the coding rate $R(\mathbf{C})=n(\mathbf{C})/k=2$. This implies that $\mathcal{C}(3,1,\Omega,T_1)\leq2$. We have thus proved that $\mathcal{C}(3,1,\Omega,T_1)=R_1=2$.

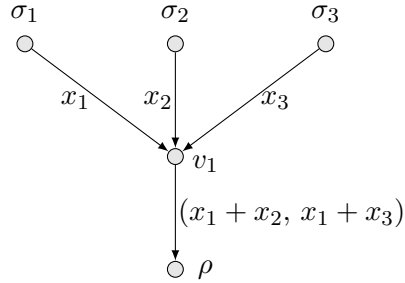
\begin{figure}[htbp]
\vspace{-1em}
	\centering

\tikzstyle{format}=[draw,circle,fill=gray!20, minimum size=6pt, inner sep=0pt]

	\begin{tikzpicture}

		\node[format](a1)at(6,4){};
        \node at (6,4.4) {$\sigma_1$};
		\node[format](a2)at(8,4){};
        \node at (8,4.4) {$\sigma_2$};
		\node[format](a3)at(10,4){};
        \node at (10,4.4) {$\sigma_3$};

		\node[format](r1)at(8,2.5){};
        \node at (8.4,2.4) {$v_1$};

		\node[format](r)at(8,1){};
        \node at (8.4,1) {$\rho$};

 \draw[->,>=latex](a1)--(r1) node[midway, auto,swap, left=0mm] {$x_1$};
  \draw[->,>=latex](a2)--(r1) node[midway, auto,swap, left=-1mm] {$x_2$};
 \draw[->,>=latex](a3)--(r1) node[midway, auto,swap, right=0mm] {$x_3$};
 \draw[->,>=latex](r1)--(r) node[midway, auto,swap, right=-1mm] {$(x_1+x_2,\,x_1+x_3)$};
	\end{tikzpicture}
 \vspace{-1em}
\caption{The model $(3,1,\Omega,T_1)$.}
\label{T1-m=1}
 \vspace{-3em}
\end{figure}

\subsection{The Case of $m=2$}\label{subsec-T1-m=2}

For the case of $m=2$, we write $V=\{v_1,v_2\}$. We divide all connectivity states into two classes below and consider them respectively:
\begin{itemize}
\item \textbf{Class 1:} $\Omega=\big(\Gamma_{\sigma_1},\Gamma_{\sigma_2},\Gamma_{\sigma_3}\big)$ such that $\big|\Gamma_{\{\sigma_i,\sigma_j\}}\big|=1$ for some two-index set $\{i,j\}\subseteq[3]$;\footnote{Here, we let $\Gamma_I\triangleq\cup_{\sigma_i\in I}\, \Gamma_{\sigma_i}$ for a source subset $I\subseteq S$.}
\item \textbf{Class 2:} $\Omega=\big(\Gamma_{\sigma_1},\Gamma_{\sigma_2},\Gamma_{\sigma_3}\big)$ such that $\Gamma_{\{\sigma_i,\sigma_j\}}=V$ for all two-index sets $\{i,j\}\subseteq[3]$.
\end{itemize}

\smallskip

We first consider connectivity states in \textbf{Class 1}. By \eqref{def-R1}, we compute the lower bound $R_1=2$.
To see the achievability of the lower bound $R_1=2$ for all connectivity states in \textbf{Class 1}, we consider the following three subcases.


\textbf{Case 1A:} The connectivity state $\Omega$ with $\big|\Gamma_{\{\sigma_1,\sigma_2\}}\big|=1$.

By the isomorphism of models, we assume without loss of generality that $\Gamma_{\{\sigma_1,\sigma_2\}}=\{v_1\}$. To be specific, if $\Gamma_{\{\sigma_1,\sigma_2\}}=\{v_2\}$ in $\Omega$, we take the permutations $\pi$ to be the identity permutation for $S$ and $\tau=\setlength{\arraycolsep}{3pt}\renewcommand{\arraystretch}{0.7}
\begin{pmatrix}
1 & 2 \\
2 & 1
\end{pmatrix}$ for $V$. By performing $(\pi,\tau)$ to $\Omega$, we immediately obtain that $\Gamma_{\{\sigma_1,\sigma_2\}}=\{v_1\}$ in the connectivity state $\Omega\circ(\pi,\tau)$. Together with Corollary \ref{T-eq-cap=permutation} and the observation that $T_1\circ\pi=T_1$, the models $(3,2,\Omega,T_1)$ and $\big(3,\,2,\,\Omega\circ(\pi,\tau),\,T_1\big)$ are isomorphic, and so both models have the identical capacity. It follows from $\Gamma_S=V$ that $\sigma_3\rightarrow v_2$, i.e., $v_2\in\Gamma_{\sigma_3}$. With $\Gamma_{\{\sigma_1,\sigma_2\}}=\{v_1\}$, we let
\[
\Omega^*=\big(\Gamma_{\sigma_1}=\{v_1\},\Gamma_{\sigma_2}=\{v_1\},\Gamma_{\sigma_3}=\{v_2\}\big).
\]
See Fig.\,\ref{T1-m=2-xi=2-fig1} for $\Omega^*$. It is not difficult to see that $\Omega^*$ is a connectivity state in this subcase and satisfies $\Omega^*\preceq\Omega$ for any connectivity state $\Omega$ in this subcase. In other words, $\Omega^*$ is the ``minimum'' connectivity state in this subcase. By Lemma \ref{lem-cap-omega'-geq-cap-omega}, this immediately implies that $\mathcal{C}(3,2,\Omega^*,T_1)\geq\mathcal{C}(3,2,\Omega,T_1)$ for any connectivity state $\Omega$ in this subcase. Together with the admissible $1$-shot source code $\mathbf{C}$ for $(3,2,\Omega^*,T_1)$ shown in Fig.\,\ref{T1-m=2-xi=2-fig1} of which the coding rate is $2$, we have thus proved that $\mathcal{C}(3,2,\Omega^*,T_1)\leq2$. Together with the lower bound $R_1=2$, we have $\mathcal{C}(3,2,\Omega,T_1)=2$ for any connectivity state $\Omega$ in this subcase.

\medskip

\textbf{Case 1B:} The connectivity state $\Omega$ with $\big|\Gamma_{\{\sigma_1,\sigma_3\}}\big|=1$.

By the isomorphism of models, we also assume without loss of generality that $\Gamma_{\{\sigma_1,\sigma_3\}}=\{v_1\}$. By $\Gamma_S=V$, we have $v_2\in\Gamma_{\sigma_2}$ and let
$\Omega^*=\big(\Gamma_{\sigma_1}=\{v_1\},\Gamma_{\sigma_2}=\{v_2\},
\Gamma_{\sigma_3}=\{v_1\}\big)$. See Fig.\,\ref{T1-m=2-xi=2-fig2} for~$\Omega^*$. In addition, we see that $\Omega^*$ is a connectivity state in this subcase and $\Omega^*\preceq\Omega$ for any connectivity state $\Omega$ in this subcase, i.e., $\Omega^*$ is the ``minimum'' connectivity state in this subcase. By Lemma \ref{lem-cap-omega'-geq-cap-omega}, we obtain that $\mathcal{C}(3,2,\Omega^*,T_1)\geq\mathcal{C}(3,2,\Omega,T_1)$ for any connectivity state $\Omega$ in this subcase. We also present an admissible $1$-shot source code $\mathbf{C}$ for $(3,2,\Omega^*,T_1)$ in Fig.\,\ref{T1-m=2-xi=2-fig2} of which the coding rate is $2$, implying that $\mathcal{C}(3,2,\Omega^*,T_1)\leq2$. Together with the lower bound $R_1=2$, we have proved that $\mathcal{C}(3,2,\Omega,T_1)=2$ for any connectivity state $\Omega$ in this subcase.

\medskip

\textbf{Case 1C:} The connectivity state $\Omega$ with $\big|\Gamma_{\{\sigma_2,\sigma_3\}}\big|=1$.

Consider an arbitrary connectivity state $\Omega$ with $\big|\Gamma_{\{\sigma_2,\sigma_3\}}\big|=1$. We take the permutation $\pi=\setlength{\arraycolsep}{3pt}\renewcommand{\arraystretch}{0.7}
\begin{pmatrix}
1 & 2 & 3 \\
2 & 1 & 3
\end{pmatrix}$ for $S$ and the identity permutation $\tau$ for $V$. Performing $(\pi,\tau)$ to $\Omega$, we have $1=\big|\Gamma_{\{\sigma_{\pi(2)},\sigma_{\pi(3)}\}}\big|=\big|\Gamma_{\{\sigma_1,\sigma_3\}}\big|$ in the connectivity state $\Omega\circ(\pi,\tau)$. Then, $\Omega\circ(\pi,\tau)$ is a connectivity state in \textbf{Case 1B}, and so $\mathcal{C}\big(3,\,2,\,\Omega\circ(\pi,\tau),\,T_1\big)=2$ (See \textbf{Case 1B}).

Further, we note that
\[
x_S\cdot T_1=(x_1+x_3,x_2+x_3)\quad \textup{and}\quad x_S\cdot T_1\circ\pi=x_{\pi(S)}\cdot T_1=(x_2+x_3,x_1+x_3),
\]
both of which are identical. This shows that $\big(3,\,2,\,\Omega\circ(\pi,\tau),\,T_1\big)$ and $\big(3,\,2,\,\Omega\circ(\pi,\tau),\,T_1\circ\pi\big)$ are identical, and then
\[
\mathcal{C}\big(3,\,2,\,\Omega\circ(\pi,\tau),\,T_1\big)
=\mathcal{C}\big(3,\,2,\,\Omega\circ(\pi,\tau),\,T_1\circ\pi\big).
\]
By Corollary \ref{T-eq-cap=permutation}, we obtain that
\[
\mathcal{C}(3,2,\Omega,T_1)=\mathcal{C}\big(3,\,2,\,\Omega\circ(\pi,\tau),\,T_1\circ\pi\big)
=\mathcal{C}\big(3,\,2,\,\Omega\circ(\pi,\tau),\,T_1\big)=2.
\]

\begin{figure*}[t]
\tikzstyle{vertex}=[draw,circle,fill=gray!30,minimum size=6pt, inner sep=0pt]
\tikzstyle{vertex1}=[draw,circle,fill=gray!80,minimum size=6pt, inner sep=0pt]
\centering
\begin{minipage}[b]{0.5\textwidth}
\centering
{
 \begin{tikzpicture}[x=0.6cm]
    \node[draw,circle,fill=gray!30,minimum size=6pt, inner sep=0pt](a1)at(1,4){};
        \node at (1,4.4) {$\sigma_1$};
        \node[draw,circle,fill=gray!30,minimum size=6pt, inner sep=0pt](a2)at(4,4){};
        \node at (4,4.4) {$\sigma_2$};
        \node[draw,circle,fill=gray!30,minimum size=6pt, inner sep=0pt](a3)at(7,4){};
        \node at (7,4.4) {$\sigma_3$};

        \node[draw,circle,fill=gray!30,minimum size=6pt, inner sep=0pt](r1)at(2.5,2.5){};
        \node[draw,circle,fill=gray!30,minimum size=6pt, inner sep=0pt](r2)at(5.5,2.5){};
        \node at (2,2.4) {$v_1$};
        \node at (6,2.4) {$v_2$};

        \node[draw,circle,fill=gray!30,minimum size=6pt, inner sep=0pt](r)at(4,1){};
        \node at (4,0.6) {$\rho$};

        \draw[->,>=latex](a1)--(r1) node[midway, auto,swap, left=0mm] {$x_1$};
        \draw[->,>=latex](a2)--(r1) node[midway, auto,swap, right=0mm] {$x_2$};
        \draw[->,>=latex](a3)--(r2) node[midway, auto,swap, right=0mm] {$x_3$};
        \draw[->,>=latex](r1)--(r) node[midway, auto,swap, left=0mm] {$(x_1,x_2)$};
        \draw[->,>=latex](r2)--(r) node[midway, auto,swap, right=0mm] {$x_3$};
    \end{tikzpicture}
}
 \vspace{-1em}
 \caption{The model $(3,2,\Omega^*,T_1)$\\ \qquad with $\Omega^*$ in \textbf{Case 1A}.}
    \label{T1-m=2-xi=2-fig1}
\end{minipage}%
\centering
\begin{minipage}[b]{0.5\textwidth}
\centering
 \begin{tikzpicture}[x=0.6cm]
   \node[draw,circle,fill=gray!30,minimum size=6pt, inner sep=0pt](a1)at(1,4){};
        \node at (1,4.4) {$\sigma_1$};
        \node[draw,circle,fill=gray!30,minimum size=6pt, inner sep=0pt](a2)at(4,4){};
        \node at (4,4.4) {$\sigma_2$};
        \node[draw,circle,fill=gray!30,minimum size=6pt, inner sep=0pt](a3)at(7,4){};
        \node at (7,4.4) {$\sigma_3$};

        \node[draw,circle,fill=gray!30,minimum size=6pt, inner sep=0pt](r1)at(2.5,2.5){};
        \node[draw,circle,fill=gray!30,minimum size=6pt, inner sep=0pt](r2)at(5.5,2.5){};
        \node at (2,2.4) {$v_1$};
        \node at (6,2.4) {$v_2$};

        \node[draw,circle,fill=gray!30,minimum size=6pt, inner sep=0pt](r)at(4,1){};
        \node at (4,0.6) {$\rho$};

        \draw[->,>=latex](a1)--(r1) node[midway, auto,swap, left=0mm] {$x_1$};
        \draw[->,>=latex](a2)--(r2); \node at (4.1,3.5) {$x_2$};
        \draw[->,>=latex](a3)--(r1); \node at (6.2,3.5) {$x_3$};
        \draw[->,>=latex](r1)--(r) node[midway, auto,swap, left=0mm] {$(x_1,x_3)$};
        \draw[->,>=latex](r2)--(r) node[midway, auto,swap, right=0mm] {$x_2$};
    \end{tikzpicture}
     \vspace{-1em}
    \caption{The model $(3,2,\Omega^*,T_1)$\\ \qquad with $\Omega^*$ in \textbf{Case 1B}.}
    \label{T1-m=2-xi=2-fig2}
\end{minipage}
  \vspace{-4em}
\end{figure*}


Next, we consider connectivity states in \textbf{Class 2}, i.e., $\Gamma_{\{\sigma_i,\sigma_j\}}=V=\{v_1,v_2\}$ for all two-index sets $\{i,j\}\subseteq[3]$. By \eqref{def-R1}, we first compute the lower bound
\begin{align*}
R_1=
\frac{~\textup{Rank}\big(T_1\big)~}{~|V|~}=1.
\end{align*}

Consider an arbitrary connectivity state $\Omega$ in \textbf{Class 2}. We claim that in $\Omega$, $\big|\Gamma_{\sigma_i}\big|=2$, or equivalently, $\Gamma_{\sigma_i}=V$ for some $\sigma_i\in S$. Otherwise, assume that $\big|\Gamma_{\sigma_1}\big|\mspace{-4mu}=\mspace{-4mu}\big|\Gamma_{\sigma_2}\big|\mspace{-4mu}
=\mspace{-4mu}\big|\Gamma_{\sigma_3}\big|\mspace{-4mu}=\mspace{-3mu}1$. Since $|V|\mspace{-4mu}=\mspace{-2mu}m\mspace{-2mu}=\mspace{-2mu}2$, there exist two sources, say $\sigma_i$ and $\sigma_j$, such that $\big|\Gamma_{\{\sigma_i,\sigma_j\}}\big|=\mspace{-4mu}1$, a contradiction to the condition of $\big|\Gamma_{\{\sigma_i,\sigma_j\}}\big|=2$ for all two-index sets $\{i,j\}\subseteq[3]$ for \textbf{Class 2}. Now, we consider three subcases of $\Gamma_{\sigma_1}=V$, $\Gamma_{\sigma_2}=V$ and $\Gamma_{\sigma_3}=V$ as follows.

\medskip

\textbf{Case 2A:} $\Gamma_{\sigma_1}=V$ for the connectivity state $\Omega$.

For this subcase, we can readily see that $\Gamma_{\sigma_1}\mspace{-4mu}=\mspace{-3mu}\Gamma_{\{\sigma_2,\sigma_3\}}\mspace{-4mu}
=\mspace{-3.5mu}V\mspace{-3.5mu}=\mspace{-4mu}\{v_1,v_2\}$. By $\Gamma_{\{\sigma_2,\sigma_3\}}=V$, either $\sigma_2\rightarrow v_1$ and $\sigma_3\rightarrow v_2$ or $\sigma_2\rightarrow v_2$ and $\sigma_3\rightarrow v_1$. By the isomorphism of models, we assume without loss of generality that $\sigma_2\rightarrow v_1$ and $\sigma_3\rightarrow v_2$, i.e., $v_1\in\Gamma_{\sigma_2}$ and $v_2\in\Gamma_{\sigma_3}$.\footnote{If $\sigma_2\rightarrow v_2$ and $\sigma_3\rightarrow v_1$ in $\Omega$, we can take the permutations $\pi$ to be the identity permutation for $S$ and $\tau=\setlength{\arraycolsep}{3pt}\renewcommand{\arraystretch}{0.7}
\begin{pmatrix}
1 & 2 \\
2 & 1
\end{pmatrix}$ for~$V$. By performing $(\pi,\tau)$ to $\Omega$, we immediately obtain that $\sigma_2\rightarrow v_1$ and $\sigma_3\rightarrow v_2$ in the connectivity state $\Omega\circ(\pi,\tau)$. Together with $T_1\circ\pi=T_1$, we obtain that the models $(3,2,\Omega,T_1)$ and $\big(3,\,2,\,\Omega\circ(\pi,\tau),\,T_1\big)$ are isomorphic.} Let
\[
\Omega^*=\big(\Gamma_{\sigma_1}=V,\,\Gamma_{\sigma_2}=\{v_1\},\Gamma_{\sigma_3}=\{v_2\}\big).
\]
See Fig.\,\ref{T1-m=2-xi=1-fig1}. It is not difficult to see that $\Omega^*$ is the ``minimum'' connectivity state in this subcase. By Lemma~\ref{lem-cap-omega'-geq-cap-omega}, we obtain that for any connectivity state $\Omega$ in this subcase,
$\mathcal{C}(3,2,\Omega^*,T_1)\geq\mathcal{C}(3,2,\Omega,T_1)$.
We further present in Fig.\,\ref{T1-m=2-xi=1-fig1} an admissible $1$-shot source code $\mathbf{C}$ for $(3,2,\Omega^*,T_1)$  of which the coding rate is $1$. Thus, we have proved that $\mathcal{C}(3,2,\Omega^*,T_1)\leq1$. Together with the lower bound $R_1=1$, this implies that $\mathcal{C}(3,2,\Omega,T_1)=1$ for any connectivity state $\Omega$ in this subcase.

\medskip

\textbf{Case 2B:} $\Gamma_{\sigma_2}=V$ for the connectivity state $\Omega$.

For this subcase, we have $\Gamma_{\sigma_2}=\Gamma_{\{\sigma_1,\sigma_3\}}
=V=\{v_1,v_2\}$. Consider the permutation $\pi=\setlength{\arraycolsep}{3pt}\renewcommand{\arraystretch}{0.7}
\begin{pmatrix}
1 & 2 & 3 \\
2 & 1 & 3
\end{pmatrix}$ for $S$ and the identity permutation $\tau$ for $V$. In the connectivity state $\Omega\circ(\pi,\tau)$, we have
\[
\Gamma_{\sigma_{\pi(2)}}=\Gamma_{\{\sigma_{\pi(1)},\sigma_{\pi(3)}\}}=V,\quad \textup{i.e.}, \quad \Gamma_{\sigma_1}=\Gamma_{\{\sigma_2,\sigma_3\}}=V.
\]
Then $\Omega\circ(\pi,\tau)$ is a connectivity state in \textbf{Case 2A} and thus $\mathcal{C}\big(3,\,2,\,\Omega\circ(\pi,\tau),\,T_1\big)=1$.

Further, we see that
\[
x_S\cdot T_1=(x_1+x_3,x_2+x_3) \quad \textup{and} \quad x_S\cdot T_1\circ\pi=x_{\pi(S)}\cdot T_1=(x_2+x_3,x_1+x_3),
\]
both of which are identical. So the models $\big(3,\,2,\,\Omega\circ(\pi,\tau),\,T_1\big)$ and $\big(3,\,2,\,\Omega\circ(\pi,\tau),\,T_1\circ\pi\big)$ are identical, and
$
\mathcal{C}\big(3,\,2,\,\Omega\circ(\pi,\tau),\,T_1\big)
=\mathcal{C}\big(3,\,2,\,\Omega\circ(\pi,\tau),\,T_1\circ\pi\big)$.
Thus, we obtain that
\[
\mathcal{C}(3,2,\Omega,T_1)=\mathcal{C}\big(3,\,2,\,\Omega\circ(\pi,\tau),\,T_1\circ\pi\big)
=\mathcal{C}\big(3,\,2,\,\Omega\circ(\pi,\tau),\,T_1\big)=1.
\]

\textbf{Case 2C:} $\Gamma_{\sigma_3}=V$ for the connectivity state $\Omega$.

Similar to \textbf{Case 2A}, we have $\Gamma_{\sigma_3}\mspace{-4mu}=\mspace{-3mu}\Gamma_{\{\sigma_1,\sigma_2\}}\mspace{-4mu}
=\mspace{-3.5mu}V\mspace{-3.5mu}=\mspace{-4mu}\{v_1,v_2\}$. Hence, either $\sigma_1\rightarrow v_1$ and $\sigma_2\rightarrow v_2$ or $\sigma_1\rightarrow v_2$ and $\sigma_2\rightarrow v_1$. By the isomorphism of models, we assume without loss of generality that $\sigma_1\rightarrow v_1$ and $\sigma_2\rightarrow v_2$, i.e., $v_1\in\Gamma_{\sigma_1}$ and $v_2\in\Gamma_{\sigma_2}$. Let
$
\Omega^*=\big(\Gamma_{\sigma_1}=\{v_1\},\Gamma_{\sigma_2}=\{v_2\},
\Gamma_{\sigma_3}=V\big).
$
See Fig.\,\ref{T1-m=2-xi=1-fig2}. Here, $\Omega^*$ is a connectivity state in this subcase and $\Omega^*\preceq\Omega$ for any connectivity state $\Omega$ in this subcase. By Lemma \ref{lem-cap-omega'-geq-cap-omega}, we further have
\[
\mathcal{C}(3,2,\Omega^*,T_1)\geq\mathcal{C}(3,2,\Omega,T_1),\quad \forall~ \Omega~ \textup{in this subcase}.
\]
Together with an admissible $1$-shot source code $\mathbf{C}$ for $(3,2,\Omega^*,T_1)$ with coding rate $1$ in Fig.\,\ref{T1-m=2-xi=1-fig2}, we obtain that $\mathcal{C}(3,2,\Omega^*,T_1)\leq1$. Together with the lower bound $R_1=1$, we have proved that $\mathcal{C}(3,2,\Omega,T_1)=1$ for any connectivity state $\Omega$ in this subcase.

\begin{figure*}[t]
\tikzstyle{vertex}=[draw,circle,fill=gray!30,minimum size=6pt, inner sep=0pt]
\tikzstyle{vertex1}=[draw,circle,fill=gray!80,minimum size=6pt, inner sep=0pt]
\centering
\begin{minipage}[b]{0.5\textwidth}
\centering
{
 \begin{tikzpicture}[x=0.6cm]
     \node[draw,circle,fill=gray!30,minimum size=6pt, inner sep=0pt](a1)at(1,4){};
        \node at (1,4.4) {$\sigma_1$};
        \node[draw,circle,fill=gray!30,minimum size=6pt, inner sep=0pt](a2)at(4,4){};
        \node at (4,4.4) {$\sigma_2$};
        \node[draw,circle,fill=gray!30,minimum size=6pt, inner sep=0pt](a3)at(7,4){};
        \node at (7,4.4) {$\sigma_3$};

        \node[draw,circle,fill=gray!30,minimum size=6pt, inner sep=0pt](r1)at(2.5,2.5){};
        \node[draw,circle,fill=gray!30,minimum size=6pt, inner sep=0pt](r2)at(5.5,2.5){};
        \node at (2,2.4) {$v_1$};
        \node at (6,2.4) {$v_2$};

        \node[draw,circle,fill=gray!30,minimum size=6pt, inner sep=0pt](r)at(4,1){};
        \node at (4,0.6) {$\rho$};

        \draw[->,>=latex](a1)--(r1) node[midway, auto,swap, left=0mm] {$x_1$};
        \draw[->,>=latex](a1)--(r2); \node at (2.2,3.8) {$x_1$};
        \draw[->,>=latex](a2)--(r1); \node at (3.9,3.5) {$x_2$};
        \draw[->,>=latex](a3)--(r2) node[midway, auto,swap, right=0mm] {$x_3$};
        \draw[->,>=latex](r1)--(r) node[midway, auto,swap, left=-1mm] {$x_2-x_1$};
        \draw[->,>=latex](r2)--(r) node[midway, auto,swap, right=0mm] {$x_1+x_3$};
    \end{tikzpicture}
}
 \vspace{-1em}
 \caption{The model $(3,2,\Omega^*,T_1)$\\ \qquad with $\Omega^*$ in \textbf{Case 2A}.}
    \label{T1-m=2-xi=1-fig1}
\end{minipage}%
\centering
\begin{minipage}[b]{0.5\textwidth}
\centering
 \begin{tikzpicture}[x=0.6cm]
   \node[draw,circle,fill=gray!30,minimum size=6pt, inner sep=0pt](a1)at(1,4){};
        \node at (1,4.4) {$\sigma_1$};
        \node[draw,circle,fill=gray!30,minimum size=6pt, inner sep=0pt](a2)at(4,4){};
        \node at (4,4.4) {$\sigma_2$};
        \node[draw,circle,fill=gray!30,minimum size=6pt, inner sep=0pt](a3)at(7,4){};
        \node at (7,4.4) {$\sigma_3$};

        \node[draw,circle,fill=gray!30,minimum size=6pt, inner sep=0pt](r1)at(2.5,2.5){};
        \node[draw,circle,fill=gray!30,minimum size=6pt, inner sep=0pt](r2)at(5.5,2.5){};
        \node at (2,2.4) {$v_1$};
        \node at (6,2.4) {$v_2$};

        \node[draw,circle,fill=gray!30,minimum size=6pt, inner sep=0pt](r)at(4,1){};
        \node at (4,0.6) {$\rho$};

        \draw[->,>=latex](a1)--(r1) node[midway, auto,swap, left=0mm] {$x_1$};
        \draw[->,>=latex](a2)--(r2); \node at (4,3.5) {$x_2$};
        \draw[->,>=latex](a3)--(r1); \node at (5.8,3.8) {$x_3$};
        \draw[->,>=latex](a3)--(r2) node[midway, auto,swap, right=0mm] {$x_3$};
        \draw[->,>=latex](r1)--(r) node[midway, auto,swap, left=0mm] {$x_1+x_3$};
        \draw[->,>=latex](r2)--(r) node[midway, auto,swap, right=0mm] {$x_2+x_3$};
    \end{tikzpicture}
     \vspace{-1em}
    \caption{The model $(3,2,\Omega^*,T_1)$\\ \qquad with $\Omega^*$ in \textbf{Case 2C}.}
    \label{T1-m=2-xi=1-fig2}
\end{minipage}
 \vspace{-4em}
\end{figure*}

\vspace{-0.5em}

\subsection{The case of $m=3$}\label{subsec-T1-m=3}

For the case of $m=3$, we write $V=\{v_1,v_2,v_3\}$. We divide all connectivity states into two classes:
\begin{itemize}
\item \textbf{Class 1:} $\Omega=\big(\Gamma_{\sigma_1},\Gamma_{\sigma_2},\Gamma_{\sigma_3}\big)$ such that $\big|\Gamma_{\{\sigma_i,\sigma_j\}}\big|=1$ for some two-index set $\{i,j\}\subseteq[3]$;
\item \textbf{Class 2:} $\Omega=\big(\Gamma_{\sigma_1},\Gamma_{\sigma_2},\Gamma_{\sigma_3}\big)$ such that $\big|\Gamma_{\{\sigma_i,\sigma_j\}}\big|\geq2$ for all two-index sets $\{i,j\}\subseteq[3]$.
\end{itemize}

\smallskip

We first consider connectivity states $\Omega$  in \textbf{Class 1}. By \eqref{def-R1}, we compute the lower bound $R_1=2$.
Let~$\Omega$ be an arbitrary connectivity state in \textbf{Class 1}. Without loss of generality, we assume that $\Gamma_{\{\sigma_i,\sigma_j\}}=\{v_1\}$ in $\Omega$ by the isomorphism of models, and then we have $\sigma_h\rightarrow v_2$ and $\sigma_h\rightarrow v_3$, i.e., $\{v_2,v_3\}\subseteq\Gamma_{\sigma_h}$ by $\Gamma_S=V$, where $\{i,j,h\}=[3]$. Recall \textbf{Class 1} in the case of $m=2$ in Section \ref{subsec-T1-m=2}, in which for the connectivity state $\widehat{\Omega}=\big( \Gamma_{\sigma_i}=\Gamma_{\sigma_j}=\{v_1\},\,\Gamma_{\sigma_h}=\{v_2\} \big)$, we have $\mathcal{C}(3,2,\widehat{\Omega},T_1)=2$. Comparing~$\widehat{\Omega}$ with~$\Omega$, $\widehat{\Omega}$ is in fact a subgraph of $\Omega$. This implies that $\mathcal{C}(3,3,\Omega,T_1)\leq\mathcal{C}(3,2,\widehat{\Omega},T_1)=2$. Together with the lower bound $R_1=2$, 
i.e., $\mathcal{C}(3,3,\Omega,T_1)\geq2$, we have proved that $\mathcal{C}(3,3,\Omega,T_1)=2$ for any connectivity state $\Omega$ in \textbf{Class 1}.

\smallskip

Next, we consider connectivity states in \textbf{Class 2}, in which we consider the following three cases.

\textbf{Case 2A:} $\Gamma_{\{\sigma_i,\sigma_j\}}=V$ for all two-index sets $\{i,j\}\subseteq[3]$ and $\big|\Gamma_{\sigma_i}\big|\geq2$ for all $i\in[3]$.

For this subcase, by \eqref{def-R1} we compute the lower bound
\begin{align}\label{T1-m=3-R1=2/3}
R_1=
\frac{~\textup{Rank}\big(T_1\big)~}{~|V|~}=\frac{2}{3}.
\end{align}
Consider an arbitrary connectivity state $\Omega$ in \textbf{Case 2A}. Then, there always exists a connectivity state~$\Omega^*$ with $\Gamma_{\{\sigma_i,\sigma_j\}}=V$ for all two-index sets $\{i,j\}\subseteq[3]$ and $\big|\Gamma_{\sigma_i}\big|=2$ for all $i\in[3]$ such that $\Omega^*\preceq\Omega$. By Lemma \ref{lem-cap-omega'-geq-cap-omega} which implies $\mathcal{C}(3,3,\Omega^*,T_1)\geq\mathcal{C}(3,3,\Omega,T_1)$, we only need to prove that $\mathcal{C}(3,3,\Omega^*,T_1)\leq R_1=2/3$ for any connectivity state $\Omega^*$ with $\Gamma_{\{\sigma_i,\sigma_j\}}=V$ for all two-index sets $\{i,j\}\subseteq[3]$ and $\big|\Gamma_{\sigma_i}\big|=2$ for all $i\in[3]$.

Consider such a connectivity state $\Omega^*=\big(\Gamma_{\sigma_1},\Gamma_{\sigma_2},\Gamma_{\sigma_3} \big)$. Since $\big|\Gamma_{\sigma_1}\big|=2$, we assume without loss of generality that $\Gamma_{\sigma_1}=\{v_1,v_2\}$ by the isomorphism of models. Together with $\Gamma_{\{\sigma_1,\sigma_2\}}=\Gamma_{\{\sigma_1,\sigma_3\}}=V$, we have $\sigma_2\rightarrow v_3$ and $\sigma_3\rightarrow v_3$, i.e., $v_3\in\Gamma_{\sigma_2}$ and $v_3\in\Gamma_{\sigma_3}$. Further, it follows from $\big|\Gamma_{\sigma_2}\big|=\big|\Gamma_{\sigma_3}\big|=2$ and $\Gamma_{\{\sigma_2,\sigma_3\}}=V$ that either $\sigma_2\rightarrow v_1$ and $\sigma_3\rightarrow v_2$ or $\sigma_2\rightarrow v_2$ and $\sigma_3\rightarrow v_1$, or equivalently, either $\Gamma_{\sigma_2}=\{v_1,v_3\}$ and $\Gamma_{\sigma_3}=\{v_2,v_3\}$ or $\Gamma_{\sigma_2}=\{v_2,v_3\}$ and $\Gamma_{\sigma_3}=\{v_1,v_3\}$. By the isomorphism of models, we only need to consider $\Gamma_{\sigma_2}=\{v_1,v_3\}$ and $\Gamma_{\sigma_3}=\{v_2,v_3\}$. As such, we consider the connectivity state~$\Omega^*$ as follows (see Fig.\,\ref{graph-T1-m=3-xi=2/3}):
\[
\Omega^*=\big(\Gamma_{\sigma_1}=\{v_1,v_2\},\Gamma_{\sigma_2}=\{v_1,v_3\},
\Gamma_{\sigma_3}=\{v_2,v_3\}\big).
\]

We construct an admissible $3$-shot source code $\mathbf{C}=\{\varphi_1,\varphi_2,\varphi_3;\,\psi\}$ for $(3,3,\Omega^*,T_1)$, where the coding rate $R(\mathbf{C})=n(\mathbf{C})/k=2/3$. Here, we let $\boldsymbol{x}_i=(x_{i,1},\,x_{i,2},\,x_{i,3})^\top$ for $1\leq i\leq 3$, and
\begin{align*}
&\varphi_1(\boldsymbol{x}_1,\boldsymbol{x}_2)=(x_{2,1}-x_{1,1},~ x_{1,2}-x_{2,2}),\\ &\varphi_2(\boldsymbol{x}_1,\boldsymbol{x}_3)=(x_{1,1}+x_{3,1},~ x_{1,3}+x_{3,3}), \\ &\varphi_3(\boldsymbol{x}_2,\boldsymbol{x}_3)=(x_{2,2}+x_{3,2},~ x_{2,3}+x_{3,3}).
\end{align*}
With the received messages $\varphi_1(\boldsymbol{x}_1,\boldsymbol{x}_2)$, $\varphi_2(\boldsymbol{x}_1,\boldsymbol{x}_3)$ and $\varphi_3(\boldsymbol{x}_2,\boldsymbol{x}_3)$ as above, we can compute at the decoder $\rho$
\[
\boldsymbol{x}_S\cdot T_1=
\begin{bmatrix}
\hspace{-0.1mm}x_{1,1}+x_{3,1} &x_{2,1}+x_{3,1}\\
\hspace{-0.1mm}x_{1,2}+x_{3,2} &x_{2,2}+x_{3,2}\\
\hspace{-0.1mm}x_{1,3}+x_{3,3} &x_{2,3}+x_{3,3}
\end{bmatrix}.
\]
We thus have proved that $\mathcal{C}(3,3,\Omega^*,T_1)\leq2/3$ and so $\mathcal{C}(3,3,\Omega,T_1)\leq2/3$ for any connectivity state $\Omega$ in \textbf{Case 2A}. Together with the lower bound $R_1=2/3$ in \eqref{T1-m=3-R1=2/3}, we have obtained that $\mathcal{C}(3,3,\Omega,T_1)=2/3$ for any $\Omega$ in \textbf{Case 2A}.


\begin{figure*}[t]
\tikzstyle{vertex}=[draw,circle,fill=gray!30,minimum size=6pt, inner sep=0pt]
\tikzstyle{vertex1}=[draw,circle,fill=gray!80,minimum size=6pt, inner sep=0pt]
\centering
\begin{minipage}[b]{0.5\textwidth}
\centering
{
 \begin{tikzpicture}[x=0.6cm]
     \node[draw,circle,fill=gray!30,minimum size=6pt, inner sep=0pt](a1)at(1,4){};
        \node at (1,4.4) {$\sigma_1$};
        \node[draw,circle,fill=gray!30,minimum size=6pt, inner sep=0pt](a2)at(4,4){};
        \node at (4,4.4) {$\sigma_2$};
        \node[draw,circle,fill=gray!30,minimum size=6pt, inner sep=0pt](a3)at(7,4){};
        \node at (7,4.4) {$\sigma_3$};

        \node[draw,circle,fill=gray!30,minimum size=6pt, inner sep=0pt](r1)at(1,2.5){};
        \node[draw,circle,fill=gray!30,minimum size=6pt, inner sep=0pt](r2)at(4,2.5){};
         \node[draw,circle,fill=gray!30,minimum size=6pt, inner sep=0pt](r3)at(7,2.5){};
        \node at (0.4,2.4) {$v_1$};
        \node at (3.4,2.4) {$v_2$};
        \node at (7.6,2.4) {$v_3$};

        \node[draw,circle,fill=gray!30,minimum size=6pt, inner sep=0pt](r)at(4,1){};
        \node at (4,0.6) {$\rho$};

        \draw[->,>=latex](a1)--(r1);
        \draw[->,>=latex](a1)--(r2);
        \draw[->,>=latex](a2)--(r1);
        \draw[->,>=latex](a2)--(r3);
        \draw[->,>=latex](a3)--(r2);
        \draw[->,>=latex](a3)--(r3);
        \draw[->,>=latex](r1)--(r) node[midway, auto,swap, left=0mm] {$\varphi_1$};
        \draw[->,>=latex](r2)--(r) node[midway, auto,swap, left=0mm] {$\varphi_2$};
        \draw[->,>=latex](r3)--(r) node[midway, auto,swap, right=0mm] {$\varphi_3$};
    \end{tikzpicture}
}
\vspace{-1em}
  \caption{The model $(3,3,\Omega^*,T_1)$\\ \qquad with $\Omega^*$ in \textbf{Case 2A}.}
        \label{graph-T1-m=3-xi=2/3}
\end{minipage}%
\centering
\begin{minipage}[b]{0.5\textwidth}
\centering
 \begin{tikzpicture}[x=0.6cm]
   \node[draw,circle,fill=gray!30,minimum size=6pt, inner sep=0pt](a1)at(1,4){};
        \node at (1,4.4) {$\sigma_1$};
        \node[draw,circle,fill=gray!30,minimum size=6pt, inner sep=0pt](a2)at(4,4){};
        \node at (4,4.4) {$\sigma_2$};
        \node[draw,circle,fill=gray!30,minimum size=6pt, inner sep=0pt](a3)at(7,4){};
        \node at (7,4.4) {$\sigma_3$};

        \node[draw,circle,fill=gray!30,minimum size=6pt, inner sep=0pt](r1)at(1,2.5){};
        \node[draw,circle,fill=gray!30,minimum size=6pt, inner sep=0pt](r2)at(4,2.5){};
         \node[draw,circle,fill=gray!30,minimum size=6pt, inner sep=0pt](r3)at(7,2.5){};
        \node at (0.4,2.4) {$v_1$};
        \node at (3.4,2.4) {$v_2$};
        \node at (7.6,2.4) {$v_3$};

        \node[draw,circle,fill=gray!30,minimum size=6pt, inner sep=0pt](r)at(4,1){};
        \node at (4,0.6) {$\rho$};

        \draw[->,>=latex](a1)--(r1) node[midway, auto,swap, left=0mm] {$x_1$};
        \draw[->,>=latex](a2)--(r2) node[midway, auto,swap, left=0mm] {$x_2$};
        \draw[->,>=latex](a3)--(r3) node[midway, auto,swap, right=0mm] {$x_3$};
        \draw[->,>=latex](r1)--(r) node[midway, auto,swap, left=0mm] {$x_1$};
        \draw[->,>=latex](r2)--(r) node[midway, auto,swap, left=0mm] {$x_2$};
        \draw[->,>=latex](r3)--(r) node[midway, auto,swap, right=0mm] {$x_3$};
    \end{tikzpicture}
    \vspace{-1em}
      \caption{The model $(3,3,\Omega^*,T_1)$ with\\ \qquad~~ $\Omega^*$ in \textbf{Case 2B} or \textbf{Case 2C}.}
   \label{graph-T1-m=3-xi=1}
\end{minipage}
 \vspace{-4em}
\end{figure*}

\medskip

\textbf{Case 2B:} $\big|\Gamma_{\{\sigma_i,\sigma_j\}}\big|\geq2$ for all two-index sets $\{i,j\}\subseteq[3]$ and $\big|\Gamma_{\{\sigma_i,\sigma_j\}}\big|=2$ for some two-index set $\{i,j\}\subseteq[3]$.

\textbf{Case 2C:} $\big|\Gamma_{\{\sigma_i,\sigma_j\}}\big|\geq2$ for all two-index sets $\{i,j\}\subseteq[3]$ and $\big|\Gamma_{\sigma_i}\big|=1$ for some $i\in[3]$.

We consider \textbf{Case 2B} and \textbf{Case 2C} together. First, for the two subcases, by \eqref{def-R1} we can compute the lower bound $R_1=1$. Consider an arbitrary connectivity state $\Omega$ in \textbf{Case 2B} or \textbf{Case 2C}. We see that there always exists a connectivity state $\Omega^*$ satisfying $\big|\Gamma_{\sigma_i}\big|=1$ for all $i\in[3]$ and $\big|\Gamma_{\{\sigma_i,\sigma_j\}}\big|=2$ for all two-index sets $\{i,j\}\subseteq[3]$ such that $\Omega^*\preceq\Omega$.

By Lemma \ref{lem-cap-omega'-geq-cap-omega}, we have $\mathcal{C}(3,3,\Omega^*,T_1)\geq\mathcal{C}(3,3,\Omega,T_1)$. So it suffices to prove that $\mathcal{C}(3,3,\Omega^*,T_1)\leq1$ for such a connectivity state $\Omega^*$. Let
$
\Omega^*=\big(\Gamma_{\sigma_1}=\{v_1\},
\Gamma_{\sigma_2}=\{v_2\},\Gamma_{\sigma_3}=\{v_3\}\big)
$
be such a connectivity state. See Fig.\,\ref{graph-T1-m=3-xi=1} for $\Omega^*$. This connectivity state $\Omega^*$ is the ``minimum'' connectivity state in \textbf{Case 2B} and \textbf{Case 2C} under the isomorphism of models.

In Fig.\,\ref{graph-T1-m=3-xi=1}, we present an admissible $1$-shot source code $\mathbf{C}$ for $(3,3,\Omega^*,T_1)$ of which the coding rate is~$1$. This implies that $\mathcal{C}(3,3,\Omega^*,T_1)\leq1$. Together with the lower bound $R_1=1$, we have proved that $\mathcal{C}(3,3,\Omega,T_1)=1$ for any connectivity state $\Omega$ in \textbf{Case 2B} or \textbf{Case 2C}.

\medskip

To end this section, we rewrite Theorem \ref{cap-T1} in a more precise way according to the above discussions.

\begin{theorem}\label{cap-T1-rewrite}
Consider a model $\big(3,m,\Omega,T_1\big)$, where $1\leq m\leq 3$ and $\Omega=\big(\Gamma_{\sigma_1},\Gamma_{\sigma_2},\Gamma_{\sigma_3}\big)$ is an arbitrary connectivity state.
\begin{itemize}
  \item For $m=1$, $\mathcal{C}(3,1,\Omega,T_1)=2$.
  \item For $m=2$,
  \begin{enumerate}
\item if in $\Omega$, there exists a two-index set $\{i,j\}\subseteq[3]$ such that $\big|\Gamma_{\{\sigma_i,\sigma_j\}}\big|=1$ \textup{(cf.~\textbf{Cases 1A}, \textbf{1B} and \textbf{1C} in Section \ref{subsec-T1-m=2})}, then
 $\mathcal{C}\big(3,2,\Omega,T_1\big)=2$;
\item otherwise, namely that in $\Omega$, $\Gamma_{\{\sigma_i,\sigma_j\}}=V$ for all two-index sets $\{i,j\}\subseteq[3]$ \textup{(cf.~\textbf{Cases 2A}, \textbf{2B} and \textbf{2C} in Section~\ref{subsec-T1-m=2})}, then
$\mathcal{C}\big(3,2,\Omega,T_1\big)=1$.
\end{enumerate}
  \item For $m=3$,
\begin{enumerate}
\item if in $\Omega$, there exists a two-index set $\{i,j\}\subseteq[3]$ such that $\big|\Gamma_{\{\sigma_i,\sigma_j\}}\big|=1$ \textup{(cf. \textbf{Class 1} in Section~\ref{subsec-T1-m=3})}, then
$\mathcal{C}\big(3,3,\Omega,T_1\big)=2$;
\item if in $\Omega$, $\Gamma_{\{\sigma_i,\sigma_j\}}=V$ for all two-index sets $\{i,j\}\subseteq[3]$ with $\big|\Gamma_{\sigma_i}\big|\geq2$ for all $i\in[3]$ \textup{(cf. \textbf{Case~2A} in Section \ref{subsec-T1-m=3})}, then
$\mathcal{C}\big(3,3,\Omega,T_1\big)=2/3$;
\item otherwise, namely that in $\Omega$, $\big|\Gamma_{\{\sigma_i,\sigma_j\}}\big|\geq2$ for all two-index sets $\{i,j\}\subseteq[3]$ and either $\big|\Gamma_{\{\sigma_i,\sigma_j\}}\big|=2$ for some two-index set $\{i,j\}\subseteq[3]$ or $\big|\Gamma_{\sigma_i}\big|=1$ for some $i\in[3]$ \textup{(cf. \textbf{Cases 2B} and \textbf{2C} in Section \ref{subsec-T1-m=3})}, then
$\mathcal{C}\big(3,3,\Omega,T_1\big)=1$.
\end{enumerate}
\end{itemize}
\end{theorem}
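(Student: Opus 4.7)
My plan is to derive Theorem~\ref{cap-T1-rewrite} directly from Theorem~\ref{cap-T1}, which provides the unified formula $\mathcal{C}(3,m,\Omega,T_1)=\max_{\Gamma\subseteq V} \textup{Rank}(T_1[I_\Gamma])/|\Gamma|$. Since any two rows of $T_1$ are linearly independent, $\textup{Rank}(T_1[I])=\min\{|I|,2\}$ for every $I\subseteq S$, so the evaluation of the formula reduces to determining, for each connectivity-state class, the best ratio $\min\{|I_\Gamma|,2\}/|\Gamma|$ over $\Gamma\subseteq V$.

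For $m=1$, the only meaningful choice is $\Gamma=V$, and $I_V=S$, which gives ratio~$2$. For $m=2$ in sub-case~(1), the hypothesis that some pair $\{\sigma_i,\sigma_j\}$ meets only a single encoder~$v$ lets me take $\Gamma=\{v\}$ and obtain $|I_\Gamma|\geq 2$, hence ratio~$2$ (which also coincides with the global cap $\textup{Rank}(T_1)=2$). For $m=2$ in sub-case~(2), the condition $\Gamma_{\{\sigma_i,\sigma_j\}}=V$ for every pair forces every single-encoder $\Gamma$ to satisfy $|I_\Gamma|\leq 1$; combined with the $\Gamma=V$ evaluation $2/2=1$, the maximum equals~$1$. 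The $m=3$ sub-cases follow the same pattern: sub-case~(1) inherits the ratio-$2$ single-encoder cut from the pair condition; sub-case~(2) forces $|I_\Gamma|\leq 1$ whenever $|\Gamma|\leq 2$, so the maximum is attained only at $\Gamma=V$ and equals $2/3$; sub-case~(3) admits a single-encoder cut isolating one source or a two-encoder cut isolating two sources, both producing ratio~$1$, matching the natural upper envelope.

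Achievability in each sub-case is already furnished by the explicit $k$-shot codes constructed in Sections~\ref{subsec-T1-m=2} and~\ref{subsec-T1-m=3}: a ``minimum'' representative $\Omega^*$ of each class under the partial order~$\preceq$ is selected (modulo orbit reductions under model isomorphism, handled by Corollary~\ref{cor-T-S=3-cap=permutation}), an optimal code is designed for it, and Lemma~\ref{lem-cap-omega'-geq-cap-omega} extends the upper bound on the capacity to every connectivity state in the class. The main subtlety, confined to sub-case~(2) of $m=3$, is designing the $3$-shot code achieving rate~$2/3$ on the cyclic connectivity state where each encoder sees exactly two sources; here the optimal construction distributes three pairwise combinations of the source blocks $\boldsymbol{x}_1,\boldsymbol{x}_2,\boldsymbol{x}_3$ across the three encoders in such a way that the decoder reconstructs both columns of $\boldsymbol{x}_S\cdot T_1$ using only $6$ symbols rather than the naive $9$, with the three differences/sums carefully interleaved so that the missing coordinates are always recoverable at the decoder.
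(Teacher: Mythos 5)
Your proposal is correct and follows essentially the same route as the paper: the paper's own justification of this theorem is precisely the class-by-class specialization of the lower bound $R_1=\max_{\Gamma\subseteq V}\textup{Rank}\big(T_1[I_{\Gamma}]\big)/|\Gamma|$ from Theorem~\ref{cap-T1} (using $\textup{Rank}(T_1[I])=\min\{|I|,2\}$) combined with the explicit optimal codes for the ``minimum'' connectivity states, via Lemma~\ref{lem-cap-omega'-geq-cap-omega} and Corollary~\ref{cor-T-S=3-cap=permutation}, exactly as you describe. One small imprecision worth fixing: in sub-case (2) of $m=3$, the bound $|I_\Gamma|\leq 1$ for $|\Gamma|=1$ would still permit the ratio $1>2/3$, so you must instead observe that $I_\Gamma=\emptyset$ for every singleton $\Gamma$, which follows immediately from the hypothesis $|\Gamma_{\sigma_i}|\geq 2$ for all $i\in[3]$.
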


\section{Capacity Characterization for the Model $(3,m,\Omega,T_2)$}\label{section-3-m-T2}

In this section, we will fully characterize the function-compression capacities for all the models $(3,m,\Omega,T_2)$ with $1\leq m\leq3$ and arbitrary connectivity states~$\Omega$. We first specify the lower bound~\eqref{3-m-Omega-T-geq-max-best-known-upper-bound}  for the model $(3,m,\Omega,T_2)$.

\begin{lemma}\label{prop-lower-bound-for-T2}
Consider a model $\big(3,m,\Omega,T_2\big)$, where $1\leq m\leq 3$ and $\Omega$ is an arbitrary connectivity state.
\begin{itemize}
\item If $m=2$ and $\Omega$ satisfies $\Gamma_{\sigma_3}=V=\{v_1,v_2\}$ and $\Gamma_{\sigma_1}=\{v_i\}$, $\Gamma_{\sigma_2}=\{v_j\}$ for $\{i,j\}=\{1,2\}$, then
    \begin{align}\label{def-R2-3/2}
    \mathcal{C}\big(3,2,\Omega,T_2\big)\geq\frac{3}{2};
    \end{align}
\item Otherwise,
\begin{align}\label{T2-geq-def-R2}
\mathcal{C}\big(3,m,\Omega,T_2\big)\geq \max\limits_{\Gamma\subseteq V} \frac{~\textup{Rank}\big(T_2[I_{\Gamma}]\big)~}{~|\Gamma|~}.
\end{align}
\end{itemize}
\end{lemma}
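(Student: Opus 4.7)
The plan is to specialize the general lower bound \eqref{3-m-Omega-T-geq-max-best-known-upper-bound} from Section \ref{best-known-lower-bound} to the matrix $T_2$ by exhibiting, in each regime of $\Omega$, a single pair $(C, \mathcal{P}_C)$ of a cut set and a strong partition that attains the claimed ratio; since we only need a lower bound, one such pair suffices in each case.

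For the second bullet I would take the trivial strong partition $\mathcal{P}_C = \{C\}$ of an arbitrary cut set $C$. Substituting into the definition, the three rank terms of $\textbf{\textup{rank}}_{\mathcal{P}_C}(T_2)$ collapse to $\textup{Rank}(T_2[I_C])$, so \eqref{3-m-Omega-T-geq-max-best-known-upper-bound} gives $\mathcal{C}(3,m,\Omega,T_2) \geq \max_{C \in \Lambda(\mathcal{N})} \frac{\textup{Rank}(T_2[I_C])}{|C|}$. Restricting the maximization to $C \subseteq \textup{In}(\rho)$ and invoking the bijection $\Gamma \leftrightarrow \{(v,\rho) : v \in \Gamma\}$ (so that $I_\Gamma = I_C$, exactly as in \eqref{C-In-rho-Rank-T1-I-C-C-eq-Gamma-V-Rank-T1-I-Gamma-Gamma}) converts the right-hand side into $\max_{\Gamma \subseteq V} \frac{\textup{Rank}(T_2[I_\Gamma])}{|\Gamma|}$, as required by \eqref{T2-geq-def-R2}.

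For the first bullet, which isolates the two-encoder model with $\Gamma_{\sigma_3} = V$ and $\Gamma_{\sigma_1}, \Gamma_{\sigma_2}$ distinct singletons, I would take $C = \textup{In}(\rho) = \{e_1, e_2\}$ together with the nontrivial partition $\mathcal{P}_C = \{\{e_1\}, \{e_2\}\}$. Assuming WLOG $\Gamma_{\sigma_1} = \{v_1\}$ and $\Gamma_{\sigma_2} = \{v_2\}$, I would verify the conditions in Definition \ref{def:strong_partition} by computing $I_{\{e_1\}} = \{\sigma_1\}$, $K_{\{e_1\}} = \{\sigma_1, \sigma_3\}$, $I_{\{e_2\}} = \{\sigma_2\}$, and $K_{\{e_2\}} = \{\sigma_2, \sigma_3\}$, which clearly satisfy $I_{\{e_1\}} \cap K_{\{e_2\}} = I_{\{e_2\}} \cap K_{\{e_1\}} = \emptyset$. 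I would then exploit the defining property of $T_2$---that its first two rows are linearly dependent while its third is independent of them---to evaluate
\[
\textbf{\textup{rank}}_{\mathcal{P}_C}(T_2) = \textup{Rank}(T_2[\{\sigma_1\}]) + \textup{Rank}(T_2[\{\sigma_2\}]) + \textup{Rank}(T_2[S]) - \textup{Rank}(T_2[\{\sigma_1, \sigma_2\}]) = 1 + 1 + 2 - 1 = 3,
\]
where $I_C = S$ since deleting both edges in $\textup{In}(\rho)$ isolates every source from $\rho$. Dividing by $|C| = 2$ yields the required $3/2$ in \eqref{def-R2-3/2}.

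The main delicate step is the verification of condition~2 in Definition \ref{def:strong_partition} for the first bullet, and this is precisely where the hypothesis on $\Omega$ is used: the disjoint-singleton structure of $\Gamma_{\sigma_1}$ and $\Gamma_{\sigma_2}$ forces $\sigma_1 \notin K_{\{e_2\}}$ and $\sigma_2 \notin K_{\{e_1\}}$, while $\Gamma_{\sigma_3} = V$ places $\sigma_3$ in both $K_{\{e_1\}}$ and $K_{\{e_2\}}$ (hence in neither $I_{\{e_1\}}$ nor $I_{\{e_2\}}$). If any of these structural assumptions were relaxed---for example if $\Gamma_{\sigma_1}$ and $\Gamma_{\sigma_2}$ coincided or if $\sigma_3$ were missing from one encoder---then the nontrivial partition $\{\{e_1\},\{e_2\}\}$ would cease to be strong and the stronger $3/2$ bound could no longer be extracted from \eqref{3-m-Omega-T-geq-max-best-known-upper-bound}; this is exactly why the first bullet is carved out as a separate case.
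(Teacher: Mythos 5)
Your proposal is correct, and the crucial witness for the $3/2$ bound---the nontrivial strong partition $\{\{e_1\},\{e_2\}\}$ of $\textup{In}(\rho)$ with $I_{\{e_1\}}=\{\sigma_1\}$, $I_{\{e_2\}}=\{\sigma_2\}$, giving $\textbf{\textup{rank}}_{\mathcal{P}_C}(T_2)=1+1+2-1=3$ over $|C|=2$---is exactly the configuration the paper identifies. The route you take to get there, however, is genuinely leaner than the paper's. You treat the lemma as a pure lower bound and therefore only exhibit one cut-set/strong-partition pair per regime: the trivial partition $\{C\}$ for the second bullet (which indeed collapses $\textbf{\textup{rank}}_{\{C\}}(T_2)$ to $\textup{Rank}(T_2[I_C])$, and the restriction to $C\subseteq\textup{In}(\rho)$ is harmless in the $\geq$ direction), and the two-singleton partition for the first bullet, whose strong-partition conditions you verify correctly from the disjoint-singleton structure of $\Gamma_{\sigma_1},\Gamma_{\sigma_2}$ and $\Gamma_{\sigma_3}=V$. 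The paper instead evaluates the right-hand side of \eqref{3-m-Omega-T-geq-max-best-known-upper-bound} \emph{exactly}: it introduces the indicator $\textbf{\textup{1}}_C(\mathcal{P}_C)\in\{0,1\}$, characterizes precisely when it equals $1$ (forcing $t=2$, $I_C=S$, $I_{C_1},I_{C_2}$ the two singletons $\{\sigma_1\},\{\sigma_2\}$), and then runs a case analysis on $|C|$ for $m=2$ and $m=3$ to show that the maximum over \emph{all} pairs $(C,\mathcal{P}_C)$ equals $3/2$ in the carved-out case and equals $\max_{\Gamma\subseteq V}\textup{Rank}(T_2[I_\Gamma])/|\Gamma|$ otherwise. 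What the paper's heavier argument buys is the knowledge that nothing better can be extracted from \eqref{3-m-Omega-T-geq-max-best-known-upper-bound}---which is what later lets the authors pinpoint \textbf{Case 2B-2} as the place where this bound is provably not tight---but for the inequality actually asserted in the lemma your witness-only argument is complete and sufficient.
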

\begin{proof}
Consider an arbitrary model $\big(3,m,\Omega,T_2\big)$ with $1\leq m\leq 3$, and let $(\mathcal{N},T_2)$ be the corresponding model of network function computation, where we recall \eqref{T1-T2} that $T_2=\left[\begin{smallmatrix} 1&0\\ 1&0\\ 0&1  \end{smallmatrix}\right]$.
In the following, we will specify the lower bound \eqref{3-m-Omega-T-geq-max-best-known-upper-bound} for the model $\big(3,m,\Omega,T_2\big)$ according to the number~$m$ of encoders.

We first consider $m=1$, i.e., $V=\{v_1\}$. By the equivalence of the distributed source coding model for function compression and the model of network function computation discussed in Section \ref{NFC}, the corresponding model of network function computation is depicted in Fig.\,\ref{T2-m=1-NFC}, where by $r\triangleq\textup{Rank}(T_2)=2$ and $m=1$, we have $\ell=\lceil m/r \rceil=1$ (cf. \eqref{def-ell}) and thus there is only one edge from each source node $\sigma_i$ to $v_1$ for $i=1,2,3$.
\begin{figure}[htbp]
\vspace{-1.5em}
	\centering

\tikzstyle{format}=[draw,circle,fill=gray!20, minimum size=6pt, inner sep=0pt]

	\begin{tikzpicture}

		\node[format](a1)at(6,4){};
        \node at (6,4.4) {$\sigma_1$};
		\node[format](a2)at(8,4){};
        \node at (8,4.4) {$\sigma_2$};
		\node[format](a3)at(10,4){};
        \node at (10,4.4) {$\sigma_3$};

		\node[format](r1)at(8,2.5){};
        \node at (8.4,2.4) {$v_1$};

		\node[format](r)at(8,1){};
        \node at (8.4,1) {$\rho$};

 \draw[->,>=latex](a1)--(r1) node[midway, auto,swap, left=0mm] {};
  \draw[->,>=latex](a2)--(r1) node[midway, auto,swap, left=-1mm] {};
 \draw[->,>=latex](a3)--(r1) node[midway, auto,swap, right=0mm] {};
 \draw[->,>=latex](r1)--(r) node[midway, auto,swap, right=0mm] {$e_1$};
	\end{tikzpicture}
 \vspace{-1em}
\caption{The model $(\mathcal{N},T_2)$ corresponding to $(3,1,\Omega,T_2)$.}
\label{T2-m=1-NFC}
 \vspace{-1.5em}
\end{figure}
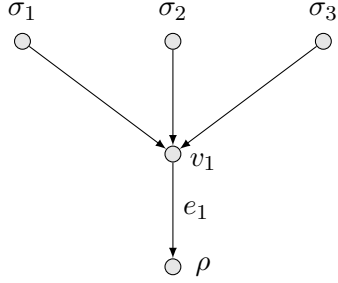

Consider an arbitrary cut set $C\in\Lambda(\mathcal{N})$ and a strong partition $\mathcal{P}_C=\{C_1,C_2,\cdots,C_t\}$ of~$C$, where $1\leq t\leq 3$. If $\mathcal{P}_C$ is the trivial strong partition $\mathcal{P}_C=\{C\}$ (i.e., $t=1$), we have
\begin{align}\label{3-1-T2-rank-PC-trivial-T2-leq-2}
\textbf{\textup{rank}}_{\mathcal{P}_C}(T_2)&=\sum_{i=1}^t \textup{Rank}\big(T_2[I_{C_i}]\big)+\textup{Rank}\big(T_2[I_{C}]\big)
-\textup{Rank}\big(T_2[\cup_{i=1}^t I_{C_i}]\big) \\
&=\textup{Rank}(T_2[I_C])\leq2, \notag
\end{align}
which implies that
\begin{align}\label{3-1-Omega-T2-trivial-sp}
\max\limits_{C\in\Lambda(\mathcal{N})} \frac{~\textbf{\textup{rank}}_{\{C\}}(T_2)~}{~|C|~}=\max\limits_{C\in\Lambda(\mathcal{N})} \frac{~\textup{Rank}(T_2[I_C])~}{~|C|~}\leq2.
\end{align}
If $\mathcal{P}_C$ is a nontrivial strong partition of $C$ (i.e., $t=2$ or $3$), we have $|C|\geq2$ and
\[
\sum_{i=1}^t \textup{Rank}\big(T_2[I_{C_i}]\big)\leq3,~~ \textup{Rank}\big(T_2[I_{C}]\big)\leq2~~ \textup{and}~~ \textup{Rank}\big(T_2[\cup_{i=1}^t I_{C_i}]\big)\geq1,
\]
immediately implying that
\[
\textbf{\textup{rank}}_{\mathcal{P}_C}(T_2)=\sum_{i=1}^t \textup{Rank}\big(T_2[I_{C_i}]\big)+\textup{Rank}\big(T_2[I_{C}]\big)
-\textup{Rank}\big(T_2[\cup_{i=1}^t I_{C_i}]\big)\leq4.
\]
Hence, we have
\begin{align}\label{3-1-Omega-T2-nontrivial-sp}
\max\limits_{C\in\Lambda(\mathcal{N})}\, \max_{ \substack{\textup{all nontrivial strong}\\  \textup{partitions}\, \mathcal{P}_C\,\textup{of}\,C} }  \frac{~\textbf{\textup{rank}}_{\mathcal{P}_C}(T_2)~}{~|C|~}
\leq\frac{~4~}{~2~}=2.
\end{align}
Combining \eqref{3-1-Omega-T2-trivial-sp} and \eqref{3-1-Omega-T2-nontrivial-sp}, we obtain that
\begin{align}
&\max\limits_{C\in\Lambda(\mathcal{N})}\, \max_{ \substack{\textup{all strong partitions}\\ \mathcal{P}_C\,\textup{of}\,C} }  \frac{~\textbf{\textup{rank}}_{\mathcal{P}_C}(T_2)~}{~|C|~} \notag \\
&=\max\left\{ \max\limits_{C\in\Lambda(\mathcal{N})} \frac{~\textbf{\textup{rank}}_{\{C\}}(T_2)~}{~|C|~},~ \max\limits_{C\in\Lambda(\mathcal{N})}\, \max_{ \substack{\textup{all nontrivial strong}\\  \textup{partitions}\, \mathcal{P}_C\,\textup{of}\,C} }  \frac{~\textbf{\textup{rank}}_{\mathcal{P}_C}(T_2)~}{~|C|~} \right\} \leq2.   \label{3-1-Omega-T2-max-trivial--nontrivail-sp}
\end{align}
On the other hand, we take the cut set $C=\{e_1=(v_1,\rho)\}$ to yield
\[
\frac{~\textbf{\textup{rank}}_{\{C\}}(T_2)~}{~|C|~}=\frac{~\textup{Rank}(T_2)~}{~|\{e_1\}|~}=2.
\]
Together with \eqref{3-1-Omega-T2-max-trivial--nontrivail-sp}, the lower bound \eqref{3-m-Omega-T-geq-max-best-known-upper-bound} is specified to $2$ for the model $(3,1,\Omega,T_2)$, alternatively, written as 
\[
\max\limits_{\Gamma\subseteq V} \frac{~\textup{Rank}\big(T_2[I_{\Gamma}]\big)~}{~|\Gamma|~}~~ \textup{($=2$)}
\]
for unifying the form of the specified lower bounds.

Next, we consider $m=2$ or $3$ in the following. Let $C\in\Lambda(\mathcal{N})$ be an arbitrary cut set and $\mathcal{P}_C=\{C_1,C_2,\cdots,C_t\}$ be an arbitrary strong partition of~$C$, where $1\leq t\leq 3$. We let
\[
\textbf{\textup{1}}_C(\mathcal{P}_C)\triangleq \textup{Rank}\big(T_2[I_{C}]\big)
-\textup{Rank}\big(T_2[\cup_{i=1}^t I_{C_i}]\big),
\]
and then
\begin{align}\label{rank-Gamma-T2-eq-i=1-t-Rank-T1-I-Gamma-i}
\textbf{\textup{rank}}_{\mathcal{P}_C}(T_2)=\sum_{i=1}^t \textup{Rank}\big(T_2[I_{C_i}]\big)+\textbf{\textup{1}}_C(\mathcal{P}_C)
\end{align}
(cf. \eqref{3-1-T2-rank-PC-trivial-T2-leq-2}). First, since $I_{C}\supseteq \cup_{i=1}^t I_{C_i}$, we have $\textbf{\textup{1}}_C(\mathcal{P}_C)\geq0$. 
Further, since $\textup{Rank}\big(T_2[I_{C}]\big)\leq\textup{Rank}\big(T_2\big)=2$ and $\textup{Rank}\big(T_2[\cup_{i=1}^t I_{C_i}]\big)\geq1$, we further have $\textbf{\textup{1}}_C(\mathcal{P}_C)\leq1$.
This implies that $\textbf{\textup{1}}_C(\mathcal{P}_C)$ only takes values~$0$ and $1$, and $\textbf{\textup{1}}_C(\mathcal{P}_C)=1$ if and only if
\begin{align}\label{T2-m=2-def-I-C-PC=1}
\textup{Rank}\big(T_2[I_{C}]\big)=2 \quad\textup{and}\quad \textup{Rank}\big(T_2[\cup_{i=1}^t I_{C_i}]\big)=1.
\end{align}
We further see that the only case that satisfies \eqref{T2-m=2-def-I-C-PC=1} is that $\mathcal{P}_C=\{C_1,C_2\}$ (i.e., $t=2$) is a nontrivial strong partition with $I_C=S=\{\sigma_1,\sigma_2,\sigma_3\}$, and either $I_{C_1}=\{\sigma_1\}$ and $I_{C_2}=\{\sigma_2\}$ or $I_{C_1}=\{\sigma_2\}$ and $I_{C_2}=\{\sigma_1\}$. This further implies that there exists at least an edge $e'_1\in C_1$ and an edge $e'_2\in C_2$ such that there must exist two paths from $\sigma_3$ to $\rho$ passing through $e'_1$ and $e'_2$, respectively. More precisely, for $i=1,2$, either $e'_i\in\textup{Out}(\sigma_3)$ or there exists an edge connecting $\sigma_3$ to $\textup{tail}(e'_i)$ (which implies that $e'_i\in\textup{In}(\rho)$).

We denote by $\textbf{\textup{P}}_C$ the collection of all the strong partitions of a cut set $C\in\Lambda(\mathcal{N})$ and rewrite the lower bound~\eqref{3-m-Omega-T-geq-max-best-known-upper-bound} as follows:
\begin{align}\label{rank-C-T2-C-eq-max-C-P-C}
&\max\limits_{C\in\Lambda(\mathcal{N})}\, \max\limits_{ \substack{\textup{all strong partitions}\\ \mathcal{P}_{C}\, \textup{of}\,C} } \frac{~\textbf{\textup{rank}}_{\mathcal{P}_C}(T_2)~}{~|C|~}
=\max_{\textup{all}\,(C,\mathcal{P}_C)\in\Lambda(\mathcal{N})\times\textbf{\textup{P}}_C} \frac{~\textbf{\textup{rank}}_{\mathcal{P}_C}(T_2)~}{~|C|~}  \notag \\
&=\max\left\{ \max_{ \substack{\textup{all}\,(C,\mathcal{P}_C)\in\Lambda(\mathcal{N})\times\textbf{\textup{P}}_C\\ \textup{with}\,\textbf{\textup{1}}_C(\mathcal{P}_C)=0} } \frac{~\textbf{\textup{rank}}_{\mathcal{P}_C}(T_2)~}{~|C|~},~ \max_{ \substack{\textup{all}\,(C,\mathcal{P}_C)\in\Lambda(\mathcal{N})\times\textbf{\textup{P}}_C\\ \textup{with}\,\textbf{\textup{1}}_C(\mathcal{P}_C)=1} } \frac{~\textbf{\textup{rank}}_{\mathcal{P}_C}(T_2)~}{~|C|~} \right\}.
\end{align}
Here, we remark that if there does not exist such a pair $(C,\mathcal{P}_C)\in\Lambda(\mathcal{N})\times\textbf{\textup{P}}_C$ with  $\textbf{\textup{1}}_C(\mathcal{P}_C)=1$, we set
\[
\max_{ \substack{\textup{all}\,(C,\mathcal{P}_C)\in\Lambda(\mathcal{N})\times\textbf{\textup{P}}_C\\ \textup{s.t.}\,\textbf{\textup{1}}_C(\mathcal{P}_C)=1} } \frac{~\textbf{\textup{rank}}_{\mathcal{P}_C}(T_2)~}{~|C|~}=0.
\]
For the other case, we can always find a strong partition $\mathcal{P}_C$ for a cut set $C\in\Lambda(\mathcal{N})$ such that $\textbf{\textup{1}}_C(\mathcal{P}_C)=0$, e.g., the pair of a cut set $C$ and $\mathcal{P}_C=\{C\}$ implying $\textbf{\textup{1}}_C(\mathcal{P}_C)=0$.

We first focus on
\[
\max_{ \substack{\textup{all}\,(C,\mathcal{P}_C)\in\Lambda(\mathcal{N})\times\textbf{\textup{P}}_C\\ \textup{s.t.}\,\textbf{\textup{1}}_C(\mathcal{P}_C)=0} } \frac{~\textbf{\textup{rank}}_{\mathcal{P}_C}(T_2)~}{~|C|~}.
\]
We note that for each pair $(C,\mathcal{P}_C)\in\Lambda(\mathcal{N})\times\textbf{\textup{P}}_C$ with $\textbf{\textup{1}}_C(\mathcal{P}_C)=0$, by \eqref{rank-Gamma-T2-eq-i=1-t-Rank-T1-I-Gamma-i} we have
\[
\textbf{\textup{rank}}_{\mathcal{P}_C}(T_2)=\sum_{i=1}^t \textup{Rank}\big(T_2[I_{C_i}]\big),
\]
where we let $\mathcal{P}_C=\{C_1,C_2,\cdots,C_t\}$. This implies that
\begin{align}
\max_{ \substack{\textup{all}\,(C,\mathcal{P}_C)\in\Lambda(\mathcal{N})\times\textbf{\textup{P}}_C\\ \textup{with}\,\textbf{\textup{1}}_C(\mathcal{P}_C)=0} } \frac{~\textbf{\textup{rank}}_{\mathcal{P}_C}(T_2)~}{~|C|~}&=\max_{ \substack{\textup{all}\,(C,\mathcal{P}_C)\in\Lambda(\mathcal{N})\times\textbf{\textup{P}}_C\\ \textup{with}\,\textbf{\textup{1}}_C(\mathcal{P}_C)=0} } \frac{~\sum_{i=1}^t \textup{Rank}\big(T_2[I_{C_i}]\big)~}{~|C|~} \notag \\ 
&=\max_{ \substack{\textup{all}\,(C,\mathcal{P}_C)\in\Lambda(\mathcal{N})\times\textbf{\textup{P}}_C\\ \textup{with}\,\textbf{\textup{1}}_C(\mathcal{P}_C)=0} } \frac{~\sum_{i=1}^t \textup{Rank}\big(T_2[I_{C_i}]\big)~}{~\sum_{i=1}^t |C_i|~} \label{max-Gamma-rank-Gamma-T2-Gamma-eq-max-Gamma-all-sp-i=1-t-Rank-T2-I-Gamma-i-Gamma-i}  \\
&\leq \max_{ \substack{\textup{all}\,(C,\mathcal{P}_C)\in\Lambda(\mathcal{N})\times\textbf{\textup{P}}_C\\ \textup{with}\,\textbf{\textup{1}}_C(\mathcal{P}_C)=0} }\, \max_{1\leq i\leq t} \frac{~\textup{Rank}\big(T_2[I_{C_i}]\big)~}{~|C_i|~} \label{max-Gamma-rank-Gamma-T2-Gamma-leq-max-Gamma-all-sp-max-i=1-t-Rank-T2-I-Gamma-i-Gamma-i} \\
&\leq  \max_{C\in\Lambda(\mathcal{N})}\frac{~\textup{Rank}\big(T_2[I_{C}]\big)~}{~|C|~}, \label{max-Gamma-rank-Gamma-T2-Gamma-leq-max-Gamma-Rank-T2-I-Gamma}
\end{align}
where the equations \eqref{max-Gamma-rank-Gamma-T2-Gamma-eq-max-Gamma-all-sp-i=1-t-Rank-T2-I-Gamma-i-Gamma-i}, \eqref{max-Gamma-rank-Gamma-T2-Gamma-leq-max-Gamma-all-sp-max-i=1-t-Rank-T2-I-Gamma-i-Gamma-i} and \eqref{max-Gamma-rank-Gamma-T2-Gamma-leq-max-Gamma-Rank-T2-I-Gamma} hold in the same way to prove \eqref{max-Gamma-rank-Gamma-T1-Gamma-eq-max-Gamma-all-sp-i=1-t-Rank-T1-I-Gamma-i-Gamma-i}, \eqref{max-Gamma-rank-Gamma-T1-Gamma-leq-max-Gamma-all-sp-max-i=1-t-Rank-T1-I-Gamma-i-Gamma-i} and \eqref{max-Gamma-rank-Gamma-T1-Gamma-leq-max-Gamma-Rank-T1-I-Gamma}.

On the other hand, by considering the trivial strong partition $\mathcal{P}_C=\{C\}$ for each cut set $C\in\Lambda(\mathcal{N})$, for which we have $\textbf{\textup{1}}_C(\mathcal{P}_C)=0$, we thus obtain that
\begin{align}
\max_{ \substack{\textup{all}\,(C,\mathcal{P}_C)\in\Lambda(\mathcal{N})\times\textbf{\textup{P}}_C\\ \textup{with}\,\textbf{\textup{1}}_C(\mathcal{P}_C)=0} } \frac{~\sum_{i=1}^t \textup{Rank}\big(T_2[I_{C_i}]\big)~}{~|C|~}
&\geq \max_{ \textup{all}\,(C,\{C\})\in\Lambda(\mathcal{N})\times\textbf{\textup{P}}_C } \frac{~\textup{Rank}\big(T_2[I_{C}]\big)~}{~|C|~}   \notag \\
&=\max_{C\in\Lambda(\mathcal{N})}\frac{~\textup{Rank}\big(T_2[I_{C}]\big)~}{~|C|~}. \label{max-Gamma-rank-Gamma-T2-Gamma-geq-max-Gamma-Rank-T2-I-Gamma}
\end{align}
By combining  \eqref{max-Gamma-rank-Gamma-T2-Gamma-leq-max-Gamma-Rank-T2-I-Gamma} and  \eqref{max-Gamma-rank-Gamma-T2-Gamma-geq-max-Gamma-Rank-T2-I-Gamma}, we have proved that
\begin{align}\label{1=0=rank-P-C-T2-C-eq-Rank-T2-I-C-C}
\max_{ \substack{\textup{all}\,(C,\mathcal{P}_C)\in\Lambda(\mathcal{N})\times\textbf{\textup{P}}_C\\ \textup{with}\,\textbf{\textup{1}}_C(\mathcal{P}_C)=0} } \frac{~\textbf{\textup{rank}}_{\mathcal{P}_C}(T_2)~}{~|C|~} =\max_{ C\in\Lambda(\mathcal{N}) } \frac{~\textup{Rank}(T_2[I_C])~}{~|C|~}.
\end{align}

Next, we focus on
\[
\max_{ \substack{\textup{all}\,(C,\mathcal{P}_C)\in\Lambda(\mathcal{N})\times\textbf{\textup{P}}_C\\ \textup{with}\,\textbf{\textup{1}}_C(\mathcal{P}_C)=1} } \frac{~\textbf{\textup{rank}}_{\mathcal{P}_C}(T_2)~}{~|C|~}.
\]
Recalling the discussion immediately below \eqref{T2-m=2-def-I-C-PC=1}, for each pair $(C,\mathcal{P}_C)\in\Lambda(\mathcal{N})\times\textbf{\textup{P}}_C$ with $\textbf{\textup{1}}_C(\mathcal{P}_C)=1$, we have $t=2$ and $|I_{C_1}|=|I_{C_2}|=1$, which implies that
\[
\textbf{\textup{rank}}_{\mathcal{P}_C}(T_2)=\sum_{i=1}^t \textup{Rank}\big(T_2[I_{C_i}]\big)+\textbf{\textup{1}}_C(\mathcal{P}_C)
=\textup{Rank}\big(T_2[I_{C_1}]\big)+\textup{Rank}\big(T_2[I_{C_2}]\big)
+\textbf{\textup{1}}_C(\mathcal{P}_C)=3.
\]
Thus,
\begin{align}\label{1=1=rank-P-C-T2-C-eq-Rank-T2-I-C-C}
\max_{ \substack{\textup{all}\,(C,\mathcal{P}_C)\in\Lambda(\mathcal{N})\times\textbf{\textup{P}}_C\\ \textup{with}\,\textbf{\textup{1}}_C(\mathcal{P}_C)=1} } \frac{~\textbf{\textup{rank}}_{\mathcal{P}_C}(T_2)~}{~|C|~}=\max_{ \substack{\textup{all}\,(C,\mathcal{P}_C)\in\Lambda(\mathcal{N})\times\textbf{\textup{P}}_C\\ \textup{with}\,\textbf{\textup{1}}_C(\mathcal{P}_C)=1} } \frac{~3~}{~|C|~}.
\end{align}
Combining \eqref{rank-C-T2-C-eq-max-C-P-C}, \eqref{1=0=rank-P-C-T2-C-eq-Rank-T2-I-C-C} and \eqref{1=1=rank-P-C-T2-C-eq-Rank-T2-I-C-C}, we thus specify the lower bound \eqref{3-m-Omega-T-geq-max-best-known-upper-bound} to be
\begin{align}
\max\limits_{C\in\Lambda(\mathcal{N})} \max\limits_{ \substack{\textup{all strong partitions}\\ \mathcal{P}_{C}\, \textup{of}\,C} } \frac{~\textbf{\textup{rank}}_{\mathcal{P}_C}(T_2)~}{~|C|~}
=\max\left\{ \max_{ C\in\Lambda(\mathcal{N}) } \frac{~\textup{Rank}(T_2[I_C])~}{~|C|~}, \max_{ \substack{\textup{all}\,(C,\mathcal{P}_C)\in\Lambda(\mathcal{N})\times\textbf{\textup{P}}_C\\ \textup{with}\,\textbf{\textup{1}}_C(\mathcal{P}_C)=1} } \frac{~3~}{~|C|~} \right\}. \label{max-max-1-max-0-eq-max-max-Rank-max-3}
\end{align}

In the following, we compute the lower bound \eqref{max-max-1-max-0-eq-max-max-Rank-max-3} for the cases $m=2$ and $m=3$, respectively.

\smallskip

\noindent \textbf{Case 1:} $m=2$, i.e., $V=\{v_1,v_2\}$.

By \eqref{max-max-1-max-0-eq-max-max-Rank-max-3}, we first consider
\[
\max_{ \substack{\textup{all}\,(C,\mathcal{P}_C)\in\Lambda(\mathcal{N})\times\textbf{\textup{P}}_C\\ \textup{with}\,\textbf{\textup{1}}_C(\mathcal{P}_C)=1} } \frac{~3~}{~|C|~}.
\]
Let $(C,\mathcal{P}_C)\in\Lambda(\mathcal{N})\times\textbf{\textup{P}}_C$ be such a pair with $\textbf{\textup{1}}_C(\mathcal{P}_C)=1$. Recalling the discussion immediately below~\eqref{T2-m=2-def-I-C-PC=1}, we can see that if $|C|=2$, then $C=\big\{e_1\triangleq(v_1,\rho),\,e_2\triangleq(v_2,\rho)\big\}=\textup{In}(\rho)$.
Thus, either $C_1=\{e_1\}$ and $C_2=\{e_2\}$ or $C_1=\{e_2\}$ and $C_2=\{e_1\}$. Together with the other requirements that $I_C=S$ and either $I_{C_1}=\{\sigma_1\}$ and $I_{C_2}=\{\sigma_2\}$ or $I_{C_1}=\{\sigma_2\}$ and $I_{C_2}=\{\sigma_1\}$, we obtain that if in $(\mathcal{N},T_2)$ there exists a pair $(C,\mathcal{P}_C)\in\Lambda(\mathcal{N})\times\textbf{\textup{P}}_C$ such that $\textbf{\textup{1}}_C(\mathcal{P}_C)=1$ and $|C|=2$, $(\mathcal{N},T_2)$ must be one of the two models depicted in
Fig.\,\ref{T2-m=2-NFC}, where by
$m=2$ and $r\triangleq\textup{Rank}(T_2)=2$, we have $\ell=\lceil m/r \rceil=1$ (cf. \eqref{def-ell}), i.e., there is only one edge from each source node $\sigma_i$ to $v$ in $\Gamma_{\sigma_i}$ for $i=1,2,3$.
Immediately, for the two models $(\mathcal{N},T_2)$, we have
\begin{align}\label{m=2-first-equality-eq-3/2}
\max_{ \substack{\textup{all}\,(C,\mathcal{P}_C)\in\Lambda(\mathcal{N})\times\textbf{\textup{P}}_C\\ \textup{with}\,\textbf{\textup{1}}_C(\mathcal{P}_C)=1} } \frac{~3~}{~|C|~}=\frac{~3~}{~\big|\{e_1,e_2\}\big|~}=\frac{~3~}{~2~}.
\end{align}
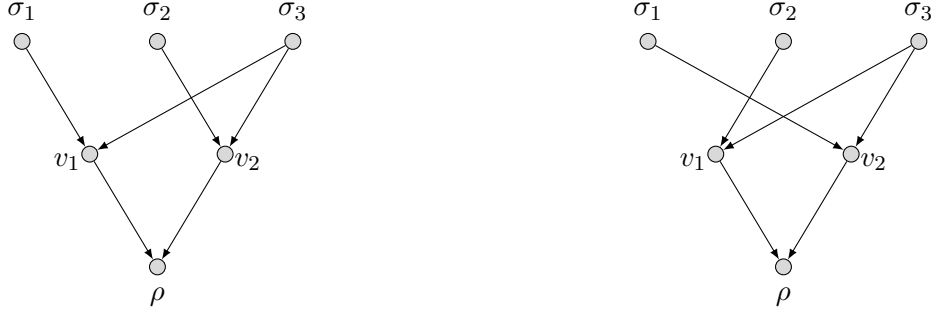
\begin{figure*}[t]
\tikzstyle{vertex}=[draw,circle,fill=gray!30,minimum size=6pt, inner sep=0pt]
\tikzstyle{vertex1}=[draw,circle,fill=gray!80,minimum size=6pt, inner sep=0pt]
\centering
\begin{minipage}[b]{0.5\textwidth}
\centering
{
 \begin{tikzpicture}[x=0.6cm]
    \node[draw,circle,fill=gray!30,minimum size=6pt, inner sep=0pt](a1)at(1,4){};
        \node at (1,4.4) {$\sigma_1$};
        \node[draw,circle,fill=gray!30,minimum size=6pt, inner sep=0pt](a2)at(4,4){};
        \node at (4,4.4) {$\sigma_2$};
        \node[draw,circle,fill=gray!30,minimum size=6pt, inner sep=0pt](a3)at(7,4){};
        \node at (7,4.4) {$\sigma_3$};

        \node[draw,circle,fill=gray!30,minimum size=6pt, inner sep=0pt](r1)at(2.5,2.5){};
        \node[draw,circle,fill=gray!30,minimum size=6pt, inner sep=0pt](r2)at(5.5,2.5){};
        \node at (2,2.4) {$v_1$};
        \node at (6,2.4) {$v_2$};

        \node[draw,circle,fill=gray!30,minimum size=6pt, inner sep=0pt](r)at(4,1){};
        \node at (4,0.6) {$\rho$};

        \draw[->,>=latex](a1)--(r1);
        \draw[->,>=latex](a2)--(r2);
        \draw[->,>=latex](a3)--(r1);
        \draw[->,>=latex](a3)--(r2);
        \draw[->,>=latex](r1)--(r) node[midway, auto,swap, left=0mm] {};
        \draw[->,>=latex](r2)--(r) node[midway, auto,swap, right=0mm] {};
    \end{tikzpicture}
}
\end{minipage}%
\centering
\begin{minipage}[b]{0.5\textwidth}
\centering
 \begin{tikzpicture}[x=0.6cm]
   \node[draw,circle,fill=gray!30,minimum size=6pt, inner sep=0pt](a1)at(1,4){};
        \node at (1,4.4) {$\sigma_1$};
        \node[draw,circle,fill=gray!30,minimum size=6pt, inner sep=0pt](a2)at(4,4){};
        \node at (4,4.4) {$\sigma_2$};
        \node[draw,circle,fill=gray!30,minimum size=6pt, inner sep=0pt](a3)at(7,4){};
        \node at (7,4.4) {$\sigma_3$};

        \node[draw,circle,fill=gray!30,minimum size=6pt, inner sep=0pt](r1)at(2.5,2.5){};
        \node[draw,circle,fill=gray!30,minimum size=6pt, inner sep=0pt](r2)at(5.5,2.5){};
        \node at (2,2.4) {$v_1$};
        \node at (6,2.4) {$v_2$};

        \node[draw,circle,fill=gray!30,minimum size=6pt, inner sep=0pt](r)at(4,1){};
        \node at (4,0.6) {$\rho$};

        \draw[->,>=latex](a1)--(r2);
        \draw[->,>=latex](a2)--(r1);
        \draw[->,>=latex](a3)--(r1);
        \draw[->,>=latex](a3)--(r2);
        \draw[->,>=latex](r1)--(r) node[midway, auto,swap, left=0mm] {};
        \draw[->,>=latex](r2)--(r) node[midway, auto,swap, right=0mm] {};
    \end{tikzpicture}
\end{minipage}
\vspace{-2.5em}
\caption{The only two models $(\mathcal{N},T_2)$ where there is a pair $(C,\mathcal{P}_C)$ with $\textbf{\textup{1}}_C(\mathcal{P}_C)=1$ and $|C|=2$.}
    \label{T2-m=2-NFC}
    \vspace{-2.5em}
\end{figure*}
For the two models, we further compute
\begin{align}\label{m=2-second-equality-eq-1}
\max_{ C\in\Lambda(\mathcal{N}) } \frac{~\textup{Rank}(T_2[I_C])~}{~|C|~}=1.
\end{align}
Combining \eqref{m=2-first-equality-eq-3/2} and \eqref{m=2-second-equality-eq-1} with the lower bound \eqref{max-max-1-max-0-eq-max-max-Rank-max-3}, we have obtained that
\[
\max\limits_{C\in\Lambda(\mathcal{N})}\, \max\limits_{ \substack{\textup{all strong partitions}\\ \mathcal{P}_{C}\, \textup{of}\,C} } \frac{~\textbf{\textup{rank}}_{\mathcal{P}_C}(T_2)~}{~|C|~}=\frac{~3~}{~2~}.
\]
We have thus proved that for the model $(3,2,\Omega,T_2)$, when $\Omega$ satisfies that $\Gamma_{\sigma_3}=V=\{v_1,v_2\}$, $\Gamma_{\sigma_1}=\{v_i\}$ and $\Gamma_{\sigma_2}=\{v_j\}$ in which $\{i,j\}=\{1,2\}$, then
\[
\mathcal{C}\big(3,2,\Omega,T_2\big)\geq \frac{3}{2}.
\]

Next, we consider other cases of the model $(\mathcal{N},T_2)$ where if there exists a pair $(C,\mathcal{P}_C)\in\Lambda(\mathcal{N})\times\textbf{\textup{P}}_C$ such that $\textbf{\textup{1}}_C(\mathcal{P}_C)=1$, then $|C|\geq3$. With this, we easily see that
\begin{align}\label{m=2-second-leq-1}
\max_{ \substack{\textup{all}\,(C,\mathcal{P}_C)\in\Lambda(\mathcal{N})\times\textbf{\textup{P}}_C\\ \textup{with}\,\textbf{\textup{1}}_C(\mathcal{P}_C)=1} } \frac{~3~}{~|C|~} \leq\frac{3}{3}=1.
\end{align}
Furthermore, we take the cut set $\textup{In}(\rho)=\big\{(v_1,\rho),(v_2,\rho)\big\}$, we have
\[
\max_{ C\in\Lambda(\mathcal{N}) } \frac{~\textup{Rank}(T_2[I_C])~}{~|C|~}\geq \frac{~\textup{Rank}(T_2)~}{~|\textup{In}(\rho)|~}=1,
\]
where we note that $I_{\textup{In}(\rho)}=S$ and thus $T_2[I_{\textup{In}(\rho)}]=T_2$. Together with \eqref{m=2-second-leq-1}, we have
\[
\max_{ C\in\Lambda(\mathcal{N}) } \frac{~\textup{Rank}(T_2[I_C])~}{~|C|~}\geq \max_{ \substack{\textup{all}\,(C,\mathcal{P}_C)\in\Lambda(\mathcal{N})\times\textbf{\textup{P}}_C\\ \textup{with}\,\textbf{\textup{1}}_C(\mathcal{P}_C)=1} } \frac{~3~}{~|C|~},
\]
and so
\[
\max\limits_{C\in\Lambda(\mathcal{N})}\, \max\limits_{ \substack{\textup{all strong partitions}\\ \mathcal{P}_{C}\, \textup{of}\,C} } \frac{~\textbf{\textup{rank}}_{\mathcal{P}_C}(T_2)~}{~|C|~}=\max_{ C\in\Lambda(\mathcal{N}) } \frac{~\textup{Rank}(T_2[I_C])~}{~|C|~}.
\]

\medskip

\noindent \textbf{Case 2:} $m=3$, i.e., $V=\{v_1,v_2,v_3\}$.

For the case, we will prove that
\begin{align}\label{m=3-first-leq-second}
\max_{ \substack{\textup{all}\,(C,\mathcal{P}_C)\in\Lambda(\mathcal{N})\times\textbf{\textup{P}}_C\\ \textup{with}\,\textbf{\textup{1}}_C(\mathcal{P}_C)=1} } \frac{~3~}{~|C|~} \leq \max_{ C\in\Lambda(\mathcal{N}) } \frac{~\textup{Rank}(T_2[I_C])~}{~|C|~},
\end{align}
which, together with \eqref{max-max-1-max-0-eq-max-max-Rank-max-3}, shows that
\[
\mathcal{C}\big(3,3,\Omega,T_2\big)\geq \max_{ C\in\Lambda(\mathcal{N}) } \frac{~\textup{Rank}(T_2[I_C])~}{~|C|~}.
\]
To prove \eqref{m=3-first-leq-second}, it suffices to prove that for any pair $(C,\mathcal{P}_C)\in\Lambda(\mathcal{N})\times\textbf{\textup{P}}_C$ such that $\textbf{\textup{1}}_C(\mathcal{P}_C)=1$,
\begin{equation}\label{m=3-3/C-leq-max-Rank-T2-I-C-C}
\frac{~3~}{~|C|~} \leq \max_{ C\in\Lambda(\mathcal{N}) } \frac{~\textup{Rank}(T_2[I_C])~}{~|C|~}.
\end{equation}
Consider such a pair $(C,\mathcal{P}_C)\in\Lambda(\mathcal{N})\times\textbf{\textup{P}}_C$ with $\textbf{\textup{1}}_C(\mathcal{P}_C)=1$. By the discussion immediately below~\eqref{T2-m=2-def-I-C-PC=1}, we have known that $I_C=S$. Together with $m=3$, we have $|C|\geq3$ because the minimum cut capacity separating $\rho$ from all the source nodes in $S$ is greater than or equal to $3$.
In the following, we continue to prove \eqref{m=3-3/C-leq-max-Rank-T2-I-C-C} according to the size of $C$.

\begin{itemize}
\item  If $|C|=3$, which implies that $3/|C|=1$, by the discussion immediately below \eqref{T2-m=2-def-I-C-PC=1}, we have known that $\mathcal{P}_C=\{C_1,C_2\}$, and $I_{C_1}=\{\sigma_i\}$, $I_{C_2}=\{\sigma_j\}$ for $\{i,j\}=\{1,2\}$. By $|C|=|C_1|+|C_2|$, we obtain that either $|C_1|=1$ and $|C_2|=2$ or $|C_1|=2$ and $|C_2|=1$, which implies that
\begin{align}\label{m=3-C=3}
\max_{ C\in\Lambda(\mathcal{N}) } \frac{~\textup{Rank}(T_2[I_C])~}{~|C|~}&\geq \max\left\{ \frac{~\textup{Rank}(T_2[I_{C_1}])~}{~|C_1|~},~ \frac{~\textup{Rank}(T_2[I_{C_2}])~}{~|C_2|~}\right\} \notag \\
&=\max\left\{1,\frac{1}{2}\right\}
\mspace{-3mu}=\mspace{-4mu}1\mspace{-4mu}=\mspace{-4mu}\frac{~3~}{~|C|~}.
\end{align}
\item  If $|C|=4$, we have $3/|C|=3/4$. Similarly, we have known that $\mathcal{P}_C=\{C_1,C_2\}$. With $|C|=|C_1|+|C_2|$, we have \textup{\rmnum{1})} $|C_1|=1$ and $|C_2|=3$ or \textup{\rmnum{2})} $|C_1|=3$ and $|C_2|=1$ or \textup{\rmnum{3})} $|C_1|=|C_2|=2$. For \textup{\rmnum{1})} and \textup{\rmnum{2})},
by the same argument for proving \eqref{m=3-C=3}, we have
\begin{align*}
\max_{ C\in\Lambda(\mathcal{N}) } \frac{~\textup{Rank}(T_2[I_C])~}{~|C|~}&\geq \max\left\{ \frac{~\textup{Rank}(T_2[I_{C_1}])~}{~|C_1|~},~ \frac{~\textup{Rank}(T_2[I_{C_2}])~}{~|C_2|~}\right\} \\
&=\max\left\{1,\frac{1}{3}\right\}=1\geq\frac{3}{4}=\frac{~3~}{~|C|~}.
\end{align*}
For \textup{\rmnum{3})} $|C_1|=|C_2|=2$, we assume without loss of generality that $I_{C_1}=\{\sigma_1\}$ and $I_{C_2}=\{\sigma_2\}$ by the discussion immediately below \eqref{T2-m=2-def-I-C-PC=1}. Further, by the equivalence of $(3,3,\Omega,T_2)$ and $(\mathcal{N},T_2)$ in Section~\ref{NFC}, it follows from $m=3$ and $r\triangleq\textup{Rank}(T_2)=2$ that $\ell=\lceil m/r \rceil=2$ (cf. \eqref{def-ell}), i.e., there are two edges from each source node $\sigma_i$ to $v$ in $\Gamma_{\sigma_i}$ for $i=1,2,3$. We now claim $|\Gamma_{\sigma_1}|=1$ or $|\Gamma_{\sigma_2}|=1$. To see this, we first note that $|\Gamma_{\sigma_1}|\leq2$ and $|\Gamma_{\sigma_2}|\leq2$ because $|C_1|=|C_2|=2$. We further assume the contrary of the claim that $|\Gamma_{\sigma_1}|=2$ and $|\Gamma_{\sigma_2}|=2$. Since $|C_1|=|C_2|=2$, it must be
\begin{align*}
C_1=\big\{(v,\rho):v\in\Gamma_{\sigma_1}\big\}\subseteq\textup{In}(\rho)\quad \textup{and} \quad C_2=\big\{(v,\rho):v\in\Gamma_{\sigma_2}\big\}\subseteq\textup{In}(\rho).
\end{align*}
This shows that $C_1\cup C_2\subseteq\textup{In}(\rho)$, implying that
$
4=|C_1|+|C_2|=|C_1\cup C_2|\leq |\textup{In}(\rho)|=3,
$
a contradiction. We have thus proved the claim. Without loss of generality, we let $|\Gamma_{\sigma_1}|=1$, say $\Gamma_{\sigma_1}=\{v\}$. This implies that the singleton edge subset $\{(v,\rho)\}$ satisfies $\{\sigma_1\}\subseteq I_{\{(v,\rho)\}}$. Thus, we have
\[
\max_{ C\in\Lambda(\mathcal{N}) } \frac{~\textup{Rank}(T_2[I_C])~}{~|C|~}\geq \frac{~\textup{Rank}(T_2[I_{\{(v,\rho)\}}])~}{~|\{(v,\rho)\}|~}\geq1\geq \frac{~3~}{~4~}=\frac{~3~}{~|C|~}.
\]
\item  If $|C|\geq5$, we have $3/|C|\leq3/5$. We take a cut set $\textup{In}(\rho)=\{(v_j,\rho):j=1,2,3\}$ and consider
\[
\max_{ C\in\Lambda(\mathcal{N}) } \frac{~\textup{Rank}(T_2[I_C])~}{~|C|~}\geq \frac{~\textup{Rank}(T_2)~}{~|\textup{In}(\rho)|~}=\frac{~2~}{~3~}\geq\frac{~3~}{~5~}\geq\frac{~3~}{~|C|~}.
\]
\end{itemize}

Combining the above discussions, we have proved the inequality \eqref{m=3-3/C-leq-max-Rank-T2-I-C-C} and thus \eqref{m=3-first-leq-second}. Together with the lower bound~\eqref{max-max-1-max-0-eq-max-max-Rank-max-3}, we have specified the lower bound to be
\begin{equation*}\label{2-eq-Rank}
\max\limits_{C\in\Lambda(\mathcal{N})}\, \max\limits_{ \substack{\textup{all strong partitions}\\ \mathcal{P}_{C}\, \textup{of}\,C} } \frac{~\textbf{\textup{rank}}_{\mathcal{P}_C}(T_2)~}{~|C|~}=\max_{ C\in\Lambda(\mathcal{N}) } \frac{~\textup{Rank}(T_2[I_C])~}{~|C|~}.
\end{equation*}
Further, we consider
\begin{align*}
\max_{ C\in\Lambda(\mathcal{N}) } \frac{~\textup{Rank}(T_2[I_C])~}{~|C|~}
=\max\limits_{C\in\Lambda(\mathcal{N})\,\textup{s.t.}\,C\subseteq\textup{In}(\rho)} \frac{~\textup{Rank}\big(T_2[I_{C}]\big)~}{~|C|~}=\max\limits_{\Gamma\subseteq V} \frac{~\textup{Rank}\big(T_2[I_{\Gamma}]\big)~}{~|\Gamma|~},
\end{align*}
where the equalities follow from the same way to prove \eqref{C-Lambda-N-Rank-T1-I-C-eq-C-In-rho-Rank-T1-I-C-C} and \eqref{C-In-rho-Rank-T1-I-C-C-eq-Gamma-V-Rank-T1-I-Gamma-Gamma}. The lemma is thus proved.
\end{proof}

\medskip

For notational simplicity in the rest of the paper, we denote by $R_2$ the lower bound in \eqref{T2-geq-def-R2}, i.e.,
\begin{align}\label{def-R2}
R_2\triangleq\max\limits_{\Gamma\subseteq V} \frac{~\textup{Rank}\big(T_2[I_{\Gamma}]\big)~}{~|\Gamma|~}.
\end{align}
We now start to characterize the capacities for the model $\big(3,m,\Omega,T_2\big)$.

\vspace{-0.5em}

\subsection{The Case of $m=1$}

For this case, we have $V=\{v_1\}$ and the unique connectivity state
$\Omega=\big(\Gamma_{\sigma_i}=\{v_1\}:i=1,2,3\big)$
as depicted in Fig.\,\ref{T2-m=1}. By \eqref{def-R2} we can compute
\[
R_2=\frac{~\textup{Rank}\big(T_2\big)~}{~|\{v_1\}|~}=2,
\]
and thus obtain $\mathcal{C}\big(3,m,\Omega,T_2\big)\geq2$ by Lemma \ref{prop-lower-bound-for-T2}. On the other hand, we present in Fig.\,\ref{T2-m=1} an admissible $1$-shot source code with $n(\mathbf{C})=2$ and hence the coding rate $R(\mathbf{C})=n(\mathbf{C})/k=2$. This implies that $\mathcal{C}(3,1,\Omega,T_2)\leq2$. We have thus proved that $\mathcal{C}(3,1,\Omega,T_2)=R_2=2$.

\begin{figure}[htbp]
\vspace{-1em}
	\centering

\tikzstyle{format}=[draw,circle,fill=gray!20, minimum size=6pt, inner sep=0pt]

	\begin{tikzpicture}

		\node[format](a1)at(6,4){};
        \node at (6,4.4) {$\sigma_1$};
		\node[format](a2)at(8,4){};
        \node at (8,4.4) {$\sigma_2$};
		\node[format](a3)at(10,4){};
        \node at (10,4.4) {$\sigma_3$};

		\node[format](r1)at(8,2.5){};
        \node at (8.4,2.4) {$v_1$};

		\node[format](r)at(8,1){};
        \node at (8.4,1) {$\rho$};

 \draw[->,>=latex](a1)--(r1) node[midway, auto,swap, left=0mm] {$x_1$};
  \draw[->,>=latex](a2)--(r1) node[midway, auto,swap, left=-1mm] {$x_2$};
 \draw[->,>=latex](a3)--(r1) node[midway, auto,swap, right=0mm] {$x_3$};
 \draw[->,>=latex](r1)--(r) node[midway, auto,swap, right=-1mm] {$(x_1+x_2,\,x_3)$};
	\end{tikzpicture}
\vspace{-1em}
\caption{The model $(3,1,\Omega,T_2)$.}
\label{T2-m=1}
\vspace{-2em}
\end{figure}
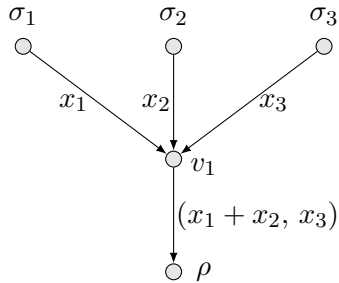

\subsection{The Case of $m=2$}\label{subsec-T2-m=2}

For the case of $m=2$, we write $V=\{v_1,v_2\}$. We divide all connectivity states into two classes below and consider them respectively:
\begin{itemize}
\item \textbf{Class 1:} $\Omega=\big(\Gamma_{\sigma_1},\Gamma_{\sigma_2},\Gamma_{\sigma_3}\big)$ such that $\big|\Gamma_{\{\sigma_i,\sigma_j\}}\big|=1$ for some two-index set $\{i,j\}\subseteq[3]$;
\item \textbf{Class 2:} $\Omega=\big(\Gamma_{\sigma_1},\Gamma_{\sigma_2},\Gamma_{\sigma_3}\big)$ such that $\Gamma_{\{\sigma_i,\sigma_j\}}=V$ for all two-index sets $\{i,j\}\subseteq[3]$.
\end{itemize}

\smallskip

We first consider connectivity states in \textbf{Class 1} by the following three cases of $\big|\Gamma_{\{\sigma_1,\sigma_2\}}\big|=1$, $\big|\Gamma_{\{\sigma_1,\sigma_3\}}\big|=1$ and $\big|\Gamma_{\{\sigma_2,\sigma_3\}}\big|=1$.


\textbf{Case 1A:} The connectivity state $\Omega$ with $\big|\Gamma_{\{\sigma_1,\sigma_2\}}\big|=1$.

By the isomorphism of models, we assume without loss of generality that $\Gamma_{\{\sigma_1,\sigma_2\}}=\{v_1\}$. It follows from $\Gamma_S=V$ that $\sigma_3\rightarrow v_2$, i.e., $v_2\in\Gamma_{\sigma_3}$. This implies that $\Gamma_{\{\sigma_1,\sigma_3\}}=\Gamma_{\{\sigma_2,\sigma_3\}}=V$. Now, by \eqref{def-R2} we can compute the lower bound $R_2=1$.
Let
$\Omega^*\!=\!\big(\Gamma_{\sigma_1}=\{v_1\},\Gamma_{\sigma_2}=\{v_1\},
\Gamma_{\sigma_3}=\{v_2\}\big)$.
See~Fig.\,\ref{T2-m=2-xi=1-fig} for~$\Omega^*$.
Further, we can readily see that $\Omega^*$ is the ``minimum'' connectivity state in this subcase. Combining  Lemma~\ref{lem-cap-omega'-geq-cap-omega} and the lower bound $R_2=1$, we obtain that for any connectivity state $\Omega$ in this subcase,
\begin{align}\label{cap-m=2-Omega*-T2-geq-1}
\mathcal{C}(3,2,\Omega^*,T_2)\geq\mathcal{C}(3,2,\Omega,T_2)\geq1.
\end{align}
We further present in Fig.\,\ref{T2-m=2-xi=1-fig} an admissible $1$-shot source code $\mathbf{C}$ for $(3,2,\Omega^*,T_2)$  of which the coding rate is $1$. This implies that $\mathcal{C}(3,2,\Omega^*,T_2)\leq1$. Together with \eqref{cap-m=2-Omega*-T2-geq-1}, we have proved that $\mathcal{C}(3,2,\Omega,T_2)=1$ for any connectivity state $\Omega$ in this subcase.

\medskip

\textbf{Case 1B:} The connectivity state $\Omega$ with $\big|\Gamma_{\{\sigma_1,\sigma_3\}}\big|=1$.

Let $\Omega$ be an arbitrary connectivity state in this subcase. Without loss of generality, we assume that $\Gamma_{\{\sigma_1,\sigma_3\}}=\{v_1\}$ by the isomorphism of models. Together with $\Gamma_S=V$, we have $\sigma_2\rightarrow v_2$ for all connectivity states in this subcase. With this, by \eqref{def-R2} we can compute the lower bound $R_2=2$.
Further, we see that
$
\Omega^*\triangleq\big(\Gamma_{\sigma_1}
=\{v_1\},\Gamma_{\sigma_2}=\{v_2\},\Gamma_{\sigma_3}=\{v_1\}\big)
$
is the ``minimum'' connectivity state in this subcase. By Lemma~\ref{lem-cap-omega'-geq-cap-omega} and the lower bound $R_2=2$, we obtain that for any connectivity state $\Omega$ in this subcase,
$\mathcal{C}(3,2,\Omega^*,T_2)\geq\mathcal{C}(3,2,\Omega,T_2)\geq2$.
Together with the admissible $1$-shot source code $\mathbf{C}$ for $(3,2,\Omega^*,T_2)$ depicted in Fig.\,\ref{T2-m=2-xi=2-fig} of which the coding rate is $2$, we have proved that $\mathcal{C}(3,2,\Omega^*,T_2)\leq2$. Hence, we have proved that $\mathcal{C}(3,2,\Omega,T_2)=2$ for any connectivity state $\Omega$ in this subcase.

\medskip

\textbf{Case 1C:} The connectivity state $\Omega$ with $\big|\Gamma_{\{\sigma_2,\sigma_3\}}\big|=1$.

Let $\Omega$ be an arbitrary connectivity state in \textbf{Case 1C}. Similarly, we assume without loss of generality that $\Gamma_{\{\sigma_2,\sigma_3\}}=\{v_1\}$. By $\Gamma_S=V$, we have $v_2\in\Gamma_{\sigma_1}$. Further, we take the permutation $\pi=\setlength{\arraycolsep}{3pt}\renewcommand{\arraystretch}{0.7}
\begin{pmatrix}
1 & 2 & 3 \\
2 & 1 & 3
\end{pmatrix}$ for $S$ and the identity permutation $\tau$ for $V$. By performing the permutation pair $(\pi,\tau)$ to an arbitrary connectivity state $\Omega$ with $\big|\Gamma_{\{\sigma_2,\sigma_3\}}\big|=1$ in this subcase, we have $\Gamma_{\{\sigma_1,\sigma_3\}}=\Gamma_{\{\sigma_{\pi(2)},\sigma_{\pi(3)}\}}=\{v_1\}$ in the connectivity state $\Omega\circ(\pi,\tau)$ because $\Gamma_{\{\sigma_2,\sigma_3\}}=\{v_1\}$ in the original connectivity state $\Omega$. Clearly, we see that $\Omega\circ(\pi,\tau)$ is a connectivity state in \textbf{Case 1B}, and so $\mathcal{C}\big(3,\,2,\,\Omega\circ(\pi,\tau),\,T_2\big)=2$ (See \textbf{Case 1B}). Together with $T_2\circ\pi=T_2$, we obtain that
\[
\mathcal{C}(3,2,\Omega,T_2)=\mathcal{C}\big(3,\,2,\,\Omega\circ(\pi,\tau),\,T_2\circ\pi\big)
=\mathcal{C}\big(3,\,2,\,\Omega\circ(\pi,\tau),\,T_2\big)=2.
\]

\begin{figure*}[t]
\tikzstyle{vertex}=[draw,circle,fill=gray!30,minimum size=6pt, inner sep=0pt]
\tikzstyle{vertex1}=[draw,circle,fill=gray!80,minimum size=6pt, inner sep=0pt]
\centering
\begin{minipage}[b]{0.32\textwidth}
\centering
{
 \begin{tikzpicture}[x=0.6cm]
    \node[draw,circle,fill=gray!30,minimum size=6pt, inner sep=0pt](a1)at(1,4){};
        \node at (1,4.4) {$\sigma_1$};
        \node[draw,circle,fill=gray!30,minimum size=6pt, inner sep=0pt](a2)at(4,4){};
        \node at (4,4.4) {$\sigma_2$};
        \node[draw,circle,fill=gray!30,minimum size=6pt, inner sep=0pt](a3)at(7,4){};
        \node at (7,4.4) {$\sigma_3$};

        \node[draw,circle,fill=gray!30,minimum size=6pt, inner sep=0pt](r1)at(2.5,2.5){};
        \node[draw,circle,fill=gray!30,minimum size=6pt, inner sep=0pt](r2)at(5.5,2.5){};
        \node at (2,2.4) {$v_1$};
        \node at (6,2.4) {$v_2$};

        \node[draw,circle,fill=gray!30,minimum size=6pt, inner sep=0pt](r)at(4,1){};
        \node at (4,0.6) {$\rho$};

        \draw[->,>=latex](a1)--(r1) node[midway, auto,swap, left=0mm] {$x_1$};
        \draw[->,>=latex](a2)--(r1) node[midway, auto,swap, right=0mm] {$x_2$};
        \draw[->,>=latex](a3)--(r2) node[midway, auto,swap, right=0mm] {$x_3$};
        \draw[->,>=latex](r1)--(r) node[midway, auto,swap, left=0mm] {$x_1+x_2$};
        \draw[->,>=latex](r2)--(r) node[midway, auto,swap, right=0mm] {$x_3$};
    \end{tikzpicture}
}
\vspace{-1em}
 \caption{\small{The model $(3,2,\Omega^*,T_2)$\\ \qquad with $\Omega^*$ in \textbf{Case 1A}.}}
    \label{T2-m=2-xi=1-fig}
\end{minipage}%
\centering
\begin{minipage}[b]{0.32\textwidth}
\centering
 \begin{tikzpicture}[x=0.6cm]
   \node[draw,circle,fill=gray!30,minimum size=6pt, inner sep=0pt](a1)at(1,4){};
        \node at (1,4.4) {$\sigma_1$};
        \node[draw,circle,fill=gray!30,minimum size=6pt, inner sep=0pt](a2)at(4,4){};
        \node at (4,4.4) {$\sigma_2$};
        \node[draw,circle,fill=gray!30,minimum size=6pt, inner sep=0pt](a3)at(7,4){};
        \node at (7,4.4) {$\sigma_3$};

        \node[draw,circle,fill=gray!30,minimum size=6pt, inner sep=0pt](r1)at(2.5,2.5){};
        \node[draw,circle,fill=gray!30,minimum size=6pt, inner sep=0pt](r2)at(5.5,2.5){};
        \node at (2,2.4) {$v_1$};
        \node at (6,2.4) {$v_2$};

        \node[draw,circle,fill=gray!30,minimum size=6pt, inner sep=0pt](r)at(4,1){};
        \node at (4,0.6) {$\rho$};

        \draw[->,>=latex](a1)--(r1) node[midway, auto,swap, left=0mm] {$x_1$};
        \draw[->,>=latex](a2)--(r2); \node at (4.1,3.5) {$x_2$};
        \draw[->,>=latex](a3)--(r1); \node at (6.2,3.5) {$x_3$};
        \draw[->,>=latex](r1)--(r) node[midway, auto,swap, left=0mm] {$(x_1,x_3)$};
        \draw[->,>=latex](r2)--(r) node[midway, auto,swap, right=0mm] {$x_2$};
    \end{tikzpicture}
    \vspace{-1em}
    \caption{\small{The model $(3,2,\Omega^*,T_2)$\\ \qquad~~ with $\Omega^*$ in \textbf{Case 1B}.}}
    \label{T2-m=2-xi=2-fig}
\end{minipage}
\begin{minipage}[b]{0.32\textwidth}
\centering
 \begin{tikzpicture}[x=0.6cm]
  \node[draw,circle,fill=gray!30,minimum size=6pt, inner sep=0pt](a1)at(1,4){};
    \node at (1,4.4) {$\sigma_1$};
        \node[draw,circle,fill=gray!30,minimum size=6pt, inner sep=0pt](a2)at(4,4){};
        \node at (4,4.4) {$\sigma_2$};
        \node[draw,circle,fill=gray!30,minimum size=6pt, inner sep=0pt](a3)at(7,4){};
        \node at (7,4.4) {$\sigma_3$};

        \node[draw,circle,fill=gray!30,minimum size=6pt, inner sep=0pt](r1)at(2.5,2.5){};
        \node[draw,circle,fill=gray!30,minimum size=6pt, inner sep=0pt](r2)at(5.5,2.5){};
        \node at (2,2.4) {$v_1$};
        \node at (6,2.4) {$v_2$};

        \node[draw,circle,fill=gray!30,minimum size=6pt, inner sep=0pt](r)at(4,1){};
        \node at (4,0.6) {$\rho$};

    \draw[->,>=latex](a1)--(r1);
    \draw[->,>=latex](a2)--(r2);
    \draw[->,>=latex](a3)--(r1);
    \draw[->,>=latex](a3)--(r2);
    \draw[->,>=latex](r1)--(r) node[midway, auto,swap, left=0mm] {$\varphi_1$};
        \draw[->,>=latex](r2)--(r) node[midway, auto,swap, right=0mm] {$\varphi_2$};
    \end{tikzpicture}
    \vspace{-1em}
    \caption{\small{The model $(3,2,\Omega,T_2)$\\ \qquad~~ with $\Omega$ in \textbf{Case 2A}.}}
    \label{T2-m=2-xi=3/2-fig}
\end{minipage}
\vspace{-2em}
\end{figure*}


Next, we consider connectivity states in \textbf{Class 2} according to the following two cases of $\Gamma_{\sigma_1}\cap\Gamma_{\sigma_2}=\emptyset$ and $\Gamma_{\sigma_1}\cap\Gamma_{\sigma_2}\neq\emptyset$.


\textbf{Case 2A:} $\Gamma_{\{\sigma_i,\sigma_j\}}=V$ for all two-index sets $\{i,j\}\subseteq[3]$ and $\Gamma_{\sigma_1}\cap\Gamma_{\sigma_2}=\emptyset$.

Let $\Omega$ be an arbitrary connectivity state in \textbf{Case 2A}. Since $m=2$ and $\Gamma_{\sigma_1}\cap\Gamma_{\sigma_2}=\emptyset$, we have $\big|\Gamma_{\sigma_1}\big|=\big|\Gamma_{\sigma_2}\big|=1$, and by the isomorphism of models, assume that $\Gamma_{\sigma_1}=\{v_1\}$ and $\Gamma_{\sigma_2}=\{v_2\}$. Further, it follows from $\Gamma_{\{\sigma_1,\sigma_3\}}=\Gamma_{\{\sigma_2,\sigma_3\}}=V$ that $\Gamma_{\sigma_3}=V$. Thus, we determine the connectivity state
\[
\Omega=\big(\Gamma_{\sigma_1}=\{v_1\},\,\Gamma_{\sigma_2}=\{v_2\},\Gamma_{\sigma_3}=V\big).
\]
See Fig.\,\ref{T2-m=2-xi=3/2-fig}. By \eqref{def-R2-3/2} in Lemma \ref{prop-lower-bound-for-T2}, we have $\mathcal{C}\big(3,2,\Omega,T_2\big)\geq3/2$.

Next, we construct an admissible $2$-shot source code $\mathbf{C}=\{\varphi_1,\varphi_2;\,\psi\}$ for $(3,2,\Omega,T_2)$ as follows. Let $\boldsymbol{x}_i=(x_{i,1},\,x_{i,2})^\top,\,i=1,2,3$ be the source messages. Define the encoding functions for $v_1$ and $v_2$ as
\begin{align*}
\varphi_1(\boldsymbol{x}_1,\boldsymbol{x}_3)=(x_{1,1},\, x_{1,2},\, x_{3,1})\quad \textup{and}\quad \varphi_2(\boldsymbol{x}_2,\boldsymbol{x}_3)=(x_{2,1},\, x_{2,2},\, x_{3,2}).
\end{align*}
With the received messages $\varphi_1(\boldsymbol{x}_1,\boldsymbol{x}_3)$ and $\varphi_2(\boldsymbol{x}_2,\boldsymbol{x}_3)$, we can compute at the decoder $\rho$
\[
\boldsymbol{x}_S\cdot T_2=
\begin{bmatrix}
\hspace{-0.1mm}x_{1,1}+x_{2,1} &x_{3,1}\\
\hspace{-0.1mm}x_{1,2}+x_{2,2} &x_{3,2}\\
\hspace{-0.1mm}x_{1,3}+x_{2,3} &x_{3,3}
\end{bmatrix}.
\]
We note that $n(\mathbf{C})\mspace{-5mu}=\mspace{-5mu}3$ and thus the coding rate $R(\mathbf{C})\mspace{-5mu}=\mspace{-5mu}n(\mathbf{C})/k\mspace{-5mu}=\mspace{-5mu}3/2$. This implies that $\mathcal{C}(3,2,\Omega,T_2)\mspace{-5mu}\leq\mspace{-4mu}3/2$. Together with the lower bound $3/2$, we have proved that $\mathcal{C}(3,2,\Omega,T_2)\mspace{-5mu}=\mspace{-5mu}3/2$ for any connectivity state~$\Omega$ in \textbf{Case 2A}.

\medskip

\textbf{Case 2B:} $\Gamma_{\{\sigma_i,\sigma_j\}}=V$ for all two-index sets $\{i,j\}\subseteq[3]$ and $\Gamma_{\sigma_1}\cap\Gamma_{\sigma_2}\neq\emptyset$.

For this subcase, by \eqref{def-R2} we can compute the lower bound
\begin{align}\label{equ1}
R_2=\frac{~\textup{Rank}\big(T_2\big)~}{~|V|~}=1.
\end{align}
Let $\Omega$ be an arbitrary connectivity state in this subcase. Since $\Gamma_{\sigma_1}\cap\Gamma_{\sigma_2}\neq\emptyset$, we assume without loss of generality that $v_1\in\Gamma_{\sigma_1}\cap\Gamma_{\sigma_2}$ by the isomorphism of models. We further consider the other encoder $v_2$.
\begin{itemize}
\item If $v_2\notin\Gamma_{\sigma_1}\cap\Gamma_{\sigma_2}$, it follows from $\Gamma_{\{\sigma_1,\sigma_3\}}=\Gamma_{\{\sigma_2,\sigma_3\}}=V$ that $v_2\in\Gamma_{\sigma_3}$.
\item If $v_2\in\Gamma_{\sigma_1}\cap\Gamma_{\sigma_2}$, we have $\Gamma_{\sigma_1}=\Gamma_{\sigma_2}=V$ and either $v_1\in\Gamma_{\sigma_3}$ or $v_2\in\Gamma_{\sigma_3}$. Without loss of generality, we assume that $v_2\in\Gamma_{\sigma_3}$ by the isomorphism of models.\footnote{More precisely, we take $\pi$ to be the identity permutation for $S$ and $\tau=\setlength{\arraycolsep}{3pt}\renewcommand{\arraystretch}{0.7}
\begin{pmatrix}
1 & 2 \\
2 & 1
\end{pmatrix}$ for $V$. Performing $(\pi,\tau)$ to $\Omega$, we obtain that $v_1=v_{\tau(2)}\in\Gamma_{\sigma_3}$ in the connectivity state $\Omega\circ(\pi,\tau)$, which, together with $T_2\circ\pi=T_2$, the models $(3,2,\Omega,T_2)$ and $\big(3,\,2,\,\Omega\circ(\pi,\tau),\,T_2\big)$ are isomorphic.}
\end{itemize}
Accordingly, we see that $v_1\in\Gamma_{\sigma_1}\cap\Gamma_{\sigma_2}$ and $v_2\in\Gamma_{\sigma_3}$ regardless of which case mentioned above the connectivity state $\Omega$ is in. We now consider the connectivity state $\widehat{\Omega}=\big( \Gamma_{\sigma_1}=\Gamma_{\sigma_2}=\{v_1\},\,\Gamma_{\sigma_3}=\{v_2\} \big)$. By recalling \textbf{Case 1A} in \textbf{Class 1} for the case of $m=2$ (see Section \ref{subsec-T2-m=2}), we have $\mathcal{C}(3,2,\widehat{\Omega},T_2)=1$. Together with $\widehat{\Omega}\preceq\Omega$, we have $\mathcal{C}(3,2,\Omega,T_2)\leq\mathcal{C}(3,2,\widehat{\Omega},T_2)=1$ by Lemma \ref{lem-cap-omega'-geq-cap-omega}. We recall the lower bound $R_2=1$ in \eqref{equ1} and have proved that $\mathcal{C}(3,2,\Omega,T_2)=1$ for any connectivity state $\Omega$ in \textbf{Case~2B}.

\vspace{-0.5em}

\subsection{The case of $m=3$}\label{subsec-T2-m=3}

For the case of $m=3$, we write $V=\{v_1,v_2,v_3\}$. We divide all connectivity states into two classes below.
\begin{itemize}
\item \textbf{Class 1:} $\Omega=\big(\Gamma_{\sigma_1},\Gamma_{\sigma_2},\Gamma_{\sigma_3}\big)$ such that $\big|\Gamma_{\{\sigma_i,\sigma_j\}}\big|=1$ for some two-index set $\{i,j\}\subseteq[3]$;
\item \textbf{Class 2:} $\Omega=\big(\Gamma_{\sigma_1},\Gamma_{\sigma_2},\Gamma_{\sigma_3}\big)$ such that $\big|\Gamma_{\{\sigma_i,\sigma_j\}}\big|\geq2$ for all two-index sets $\{i,j\}\subseteq[3]$.
\end{itemize}

\smallskip

We first consider connectivity states in \textbf{Class 1} according to the case of $\big|\Gamma_{\{\sigma_1,\sigma_2\}}\big|=1$ and the case of $\big|\Gamma_{\{\sigma_1,\sigma_3\}}\big|=1$ or $\big|\Gamma_{\{\sigma_2,\sigma_3\}}\big|=1$.


\textbf{Case 1A:} The connectivity state $\Omega$ with $\big|\Gamma_{\{\sigma_1,\sigma_2\}}\big|=1$.

By the isomorphism of models, we assume without loss of generality that $\Gamma_{\{\sigma_1,\sigma_2\}}=\{v_1\}$, i.e., $\Gamma_{\sigma_1}=\Gamma_{\sigma_2}=\{v_1\}$. It follows from $\Gamma_S=V$ that $\sigma_3\rightarrow v_2$ and $\sigma_3\rightarrow v_3$, i.e., $\{v_2,v_3\}\subseteq\Gamma_{\sigma_3}$. Thus, we have $\Gamma_{\{\sigma_1,\sigma_3\}}=\Gamma_{\{\sigma_2,\sigma_3\}}=V$. Now, by \eqref{def-R2} we can compute the lower bound $R_2=1$.
We now recall \textbf{Case 1A} for the case of $m=2$ in Section \ref{subsec-T2-m=2} that for the connectivity state $\widehat{\Omega}=\big( \Gamma_{\sigma_1}=\Gamma_{\sigma_2}=\{v_1\},\,\Gamma_{\sigma_3}=\{v_2\} \big)$, we have $\mathcal{C}(3,2,\widehat{\Omega},T_2)=1$. In fact, $\widehat{\Omega}$ is a subgraph of $\Omega$ considered here and so $\mathcal{C}(3,2,\Omega,T_2)\leq\mathcal{C}(3,2,\widehat{\Omega},T_2)=1$. Together with the lower bound $R_2=1$, we have proved that $\mathcal{C}(3,2,\Omega,T_2)=1$ for any connectivity state $\Omega$ in \textbf{Case 1A}.

\medskip

\textbf{Case 1B:} The connectivity state $\Omega$ with $\big|\Gamma_{\{\sigma_1,\sigma_3\}}\big|=1$ or $\big|\Gamma_{\{\sigma_2,\sigma_3\}}\big|=1$.

We first consider an arbitrary connectivity state $\Omega$ in \textbf{Case 1B} with $\big|\Gamma_{\{\sigma_1,\sigma_3\}}\big|=1$. Without loss of generality, we assume that $\Gamma_{\{\sigma_1,\sigma_3\}}=\{v_1\}$ by the isomorphism of models. Together with $\Gamma_S=V$, we have $\sigma_2\rightarrow v_2$ and $\sigma_2\rightarrow v_3$, i.e., $\{v_2,v_3\}\subseteq\Gamma_{\sigma_2}$. With this, by \eqref{def-R2} we compute the lower bound $R_2=2$.
We recall \textbf{Case 1B} for the case of $m=2$ in Section~\ref{subsec-T2-m=2}, namely that for the connectivity state $\widehat{\Omega}=\big( \Gamma_{\sigma_1}=\Gamma_{\sigma_3}=\{v_1\},\,\Gamma_{\sigma_2}=\{v_2\} \big)$, we have $\mathcal{C}(3,2,\widehat{\Omega},T_2)=2$. Similarly, since $\widehat{\Omega}$ is a subgraph of~$\Omega$, we have $\mathcal{C}(3,3,\Omega,T_2)\leq\mathcal{C}(3,2,\widehat{\Omega},T_2)=2$. Together with the lower bound $R_2=2$, 
we have proved that $\mathcal{C}(3,3,\Omega,T_2)=2$.

For any connectivity state $\Omega$ in \textbf{Case 1B} with $\big|\Gamma_{\{\sigma_2,\sigma_3\}}\big|=1$, we similarly assume $\Gamma_{\{\sigma_2,\sigma_3\}}=\{v_1\}$, and further take the permutation $\pi=\setlength{\arraycolsep}{3pt}\renewcommand{\arraystretch}{0.7}
\begin{pmatrix}
1 & 2 & 3 \\
2 & 1 & 3
\end{pmatrix}$ for $S$ and the identity permutation $\tau$ for $V$. We perform $(\pi,\tau)$ to $\Omega$ and then see that $\Omega\circ(\pi,\tau)$ is a connectivity state in \textbf{Case 1B} with $\big|\Gamma_{\{\sigma_1,\sigma_3\}}\big|=1$. By the discussion in the last paragraph, we prove that
\[
\mathcal{C}\big(3,\,3,\,\Omega,\,T_2\big)
=\mathcal{C}\big(3,\,3,\,\Omega\circ(\pi,\tau),\,T_2\circ\pi\big)
=\mathcal{C}\big(3,\,3,\,\Omega\circ(\pi,\tau),\,T_2\big)=2,
\]
where the second equality follows from $T_2\circ\pi=T_2$.

\medskip

Next, we consider connectivity states in \textbf{Class 2} (cf. the beginning of Section \ref{subsec-T2-m=3}).

\textbf{Case 2A:} Connectivity states $\Omega$ in \textbf{Class 2} such that one of the three conditions below is satisfied:\newline
\rmnum{1}) $\big|\Gamma_{\{\sigma_1,\sigma_3\}}\big|=2$, \rmnum{2}) $\big|\Gamma_{\{\sigma_2,\sigma_3\}}\big|=2$, and \rmnum{3}) $\big|\Gamma_{\sigma_i}\big|=1$ for some $i\in[3]$.

First, by \eqref{def-R2} we can compute the lower bound $R_2=1$. Consider the connectivity state
\[
\Omega^*=\big(\Gamma_{\sigma_1}=\{v_1\},\Gamma_{\sigma_2}=\{v_2\},\Gamma_{\sigma_3}=\{v_3\}\big)
\]
as depicted in Fig.\,\ref{graph-T2-m=3-xi=1}. Clearly, $\Omega^*$ is a connectivity state in \textbf{Case 2A}. For any connectivity state in \textbf{Case~2A}, we can see that there always exists a connectivity state $\Omega'$ with $\big|\Gamma_{\sigma_i}\big|=1$ for all $i\in[3]$ such that $\Omega'\preceq\Omega$. Together with the fact that $\Gamma_S=V$ in $\Omega'$ by the definition of connectivity states, $\Omega'$ is isomorphic to $\Omega^*$. To be specific, we can take the identity permutation $\pi$ for $S$ and a permutation $\tau$ for~$V$ such that $\Omega'=\Omega^*\circ(\pi,\tau)$ and $T_2\circ\pi=T_2$, which implies
\[
\mathcal{C}(3,3,\Omega',T_2)=\mathcal{C}\big(3,3,\Omega^*\circ(\pi,\tau),T_2\circ\pi\big)
=\mathcal{C}(3,3,\Omega^*,T_2).
\]
We further note that $\mathcal{C}(3,3,\Omega',T_2)\geq\mathcal{C}(3,3,\Omega,T_2)$ by Lemma \ref{lem-cap-omega'-geq-cap-omega}, which immediately implies that
\begin{align}\label{m=3-Case-2A}
\mathcal{C}(3,3,\Omega,T_2)\leq\mathcal{C}(3,3,\Omega^*,T_2).
\end{align}
In fact, $\Omega^*$ is the ``minimum'' connectivity state in \textbf{Case 2A} under the isomorphism of models.

In Fig.\,\ref{graph-T2-m=3-xi=1}, we also present an admissible $1$-shot source code $\mathbf{C}$ for $(3,3,\Omega^*,T_2)$ where the coding rate is~$1$. This implies that $\mathcal{C}(3,3,\Omega^*,T_2)\leq1$ and thus $\mathcal{C}(3,3,\Omega,T_2)\leq1$ by \eqref{m=3-Case-2A}. Together with the lower bound $R_2=1$ for this subcase, we have thus proved that $\mathcal{C}(3,3,\Omega,T_2)=1$ for any connectivity state $\Omega$ in \textbf{Case 2A}.

\medskip

\textbf{Case 2B:} Connectivity states $\Omega$ in \textbf{Class 2} such that $\Gamma_{\{\sigma_1,\sigma_3\}}=\Gamma_{\{\sigma_2,\sigma_3\}}=V$ and $\big|\Gamma_{\sigma_i}\big|\geq2,~\forall~i\in[3]$.

We first note that all the connectivity states in \textbf{Class 2} but not in \textbf{Case 2A} are contained in \textbf{Case 2B}.
By \eqref{def-R2} we compute the lower bound for this subcase as
\begin{align}\label{T2-m=3-R2=2/3}
R_2=\frac{~\textup{Rank}\big(T_2\big)~}{~|V|~}=\frac{~2~}{~3~}.
\end{align}
We now consider an arbitrary connectivity state $\Omega$ that satisfies the condition of \textbf{Case 2B}. Clearly, $\Omega$ is in \textbf{Class 2} and in $\Omega$, we have $\big|\Gamma_{\sigma_1}\cap\Gamma_{\sigma_2}\big|\geq1$ as $\big|\Gamma_{\sigma_i}\big|\geq2,\,i=1,2,3$.

\begin{itemize}
\item  \textbf{Case 2B-1:} $\big|\Gamma_{\sigma_1}\cap\Gamma_{\sigma_2}\big|\geq2$ for $\Omega$.

 \quad For the connectivity state $\Omega$, by the conditions of \textbf{Case 2B} and \textbf{Case 2B-1}, there always exists a connectivity state $\Omega^*$ such that $\Omega^*\preceq\Omega$ in which $\Omega^*$ satisfies that 
 $\Gamma_{\{\sigma_1,\sigma_3\}}=\Gamma_{\{\sigma_2,\sigma_3\}}=V$, $\big|\Gamma_{\sigma_i}\big|=2$ for all $i\in[3]$ and $\big|\Gamma_{\sigma_1}\cap\Gamma_{\sigma_2}\big|=2$. By Lemma \ref{lem-cap-omega'-geq-cap-omega}, we have $\mathcal{C}(3,3,\Omega^*,T_2)\geq\mathcal{C}(3,3,\Omega,T_2)$ and thus we only need to prove that $\mathcal{C}(3,3,\Omega^*,T_2)\leq R_2=2/3$.

    \quad For the connectivity state $\Omega^*$, since $\big|\Gamma_{\sigma_1}\cap\Gamma_{\sigma_2}\big|=2$, we assume without loss of generality that $\Gamma_{\sigma_1}\cap\Gamma_{\sigma_2}=\{v_1,v_2\}$ by the isomorphism of models. Together with $\big|\Gamma_{\sigma_1}\big|=\big|\Gamma_{\sigma_2}\big|=2$ for $\Omega^*$, we have $\Gamma_{\sigma_1}=\Gamma_{\sigma_2}=\{v_1,v_2\}$. Further, it follows from $\Gamma_{\{\sigma_1,\sigma_3\}}=V$ that $v_3\in\Gamma_{\sigma_3}$. Together with $\big|\Gamma_{\sigma_3}\big|=2$, we obtain that either $\Gamma_{\sigma_3}=\{v_1,v_3\}$ or $\Gamma_{\sigma_3}=\{v_2,v_3\}$. By the isomorphism of models, we only consider one of both cases, say $\Gamma_{\sigma_3}=\{v_2,v_3\}$. Now, we can write the connectivity state~$\Omega^*$ as follows (see Fig.\,\ref{graph-T2-m=3-xi=2/3}):
    \[
    \Omega^*=\big(\Gamma_{\sigma_1}=\{v_1,v_2\},\,\Gamma_{\sigma_2}=\{v_1,v_2\},
    \Gamma_{\sigma_3}=\{v_2,v_3\}\big).
    \]
    We construct the following admissible $3$-shot source code $\mathbf{C}=\{\varphi_1,\varphi_2,\varphi_3;\,\psi\}$ for $(3,3,\Omega^*,T_2)$ of $n(\mathbf{C})=2$, so that coding rate $2/3$. Let $\boldsymbol{x}_i=(x_{i,1},\,x_{i,2},\,x_{i,3})^\top,\,i=1,2,3$, and
\begin{align*}
&\varphi_1(\boldsymbol{x}_1,\boldsymbol{x}_2)=(x_{1,1}+x_{2,1},~ x_{1,2}+x_{2,2}),\\ &\varphi_2(\boldsymbol{x}_1,\boldsymbol{x}_2,\boldsymbol{x}_3)=(x_{1,3}+x_{2,3},~ x_{3,1}), \\ &\varphi_3(\boldsymbol{x}_3)=(x_{3,2},~ x_{3,3}).
\end{align*}
With the received messages $\varphi_1(\boldsymbol{x}_1,\boldsymbol{x}_2)$, $\varphi_2(\boldsymbol{x}_1,\boldsymbol{x}_2,\boldsymbol{x}_3)$ and $\varphi_3(\boldsymbol{x}_3)$ as above, we can compute at the decoder $\rho$
\[
\boldsymbol{x}_S\cdot T_2=
\begin{bmatrix}
\hspace{-0.1mm}x_{1,1}+x_{2,1} &x_{3,1}\\
\hspace{-0.1mm}x_{1,2}+x_{2,2} &x_{3,2}\\
\hspace{-0.1mm}x_{1,3}+x_{2,3} &x_{3,3}
\end{bmatrix}.
\]
This shows that $\mathcal{C}(3,3,\Omega^*,T_2)\leq2/3$ and so $\mathcal{C}(3,3,\Omega,T_2)\leq2/3$ for any connectivity state $\Omega$ in \textbf{Case 2B-1}. Together with the lower bound $R_2=2/3$ in \eqref{T2-m=3-R2=2/3}, we have obtained that $\mathcal{C}(3,3,\Omega,T_2)=2/3$ for any $\Omega$ in the subcase.

\item \textbf{Case 2B-2:} $\big|\Gamma_{\sigma_1}\cap\Gamma_{\sigma_2}\big|=1$ for $\Omega$.

 \quad For the connectivity state $\Omega$, we assume without loss of generality that $\Gamma_{\sigma_1}\cap\Gamma_{\sigma_2}=\{v_1\}$ by the isomorphism of models. First, we claim that $\big|\Gamma_{\sigma_1}\big|=\big|\Gamma_{\sigma_2}\big|=2$. To see this, we first note that $\big|\Gamma_{\sigma_1}\big|\geq2$ and $\big|\Gamma_{\sigma_2}\big|\geq2$ by the condition of \textbf{Case 2B}. If $\big|\Gamma_{\sigma_1}\big|=3$, i.e., $\Gamma_{\sigma_1}=V$, we have $\big|\Gamma_{\sigma_1}\cap\Gamma_{\sigma_2}\big|\geq2$ by $\big|\Gamma_{\sigma_2}\big|\geq2$, a contradiction to the condition $\big|\Gamma_{\sigma_1}\cap\Gamma_{\sigma_2}\big|=1$. Similarly, we also have $\big|\Gamma_{\sigma_2}\big|=2$. Further, with $\big|\Gamma_{\sigma_1}\cap\Gamma_{\sigma_2}\big|=1$ from the condition of \textbf{Case 2B-2}, either $\sigma_1\rightarrow v_2$ and $\sigma_2\rightarrow v_3$ or $\sigma_1\rightarrow v_3$ and $\sigma_2\rightarrow v_2$. By the isomorphism of models, we only need to consider one of both cases, say $\sigma_1\rightarrow v_2$ and $\sigma_2\rightarrow v_3$, which thus implies that $\Gamma_{\sigma_1}=\{v_1,v_2\}$ and $\Gamma_{\sigma_2}=\{v_1,v_3\}$. Furthermore, since $\Gamma_{\{\sigma_1,\sigma_3\}}=\Gamma_{\{\sigma_2,\sigma_3\}}=V$ by the condition of \textbf{Case 2B}, we obtain that $\{v_2,v_3\}\subseteq\Gamma_{\sigma_3}$. With $\Gamma_{\sigma_1}=\{v_1,v_2\}$ and $\Gamma_{\sigma_2}=\{v_1,v_3\}$ we obtained above, we have either
\begin{align}\label{Omega1}
\Omega=\big(\Gamma_{\sigma_1}=\{v_1,v_2\},~ \Gamma_{\sigma_2}=\{v_1,v_3\},~
\Gamma_{\sigma_3}=\{v_2,v_3\}\big)
\end{align}
or
\begin{align}\label{Omega2}
\Omega=\big(\Gamma_{\sigma_1}=\{v_1,v_2\},~ \Gamma_{\sigma_2}=\{v_1,v_3\},~
\Gamma_{\sigma_3}=V\big).
\end{align}
We recall in Section \ref{section-not-tight} that for the connectivity state $\Omega$ regardless of \eqref{Omega1} or \eqref{Omega2}, we have $\mathcal{C}(3,3,\Omega,T_2)=3/4$. This thus implies that the lower bound $R_2=2/3$ is not tight for \textbf{Case~2B-2}.
\end{itemize}

\begin{figure*}[t]
\tikzstyle{vertex}=[draw,circle,fill=gray!30,minimum size=6pt, inner sep=0pt]
\tikzstyle{vertex1}=[draw,circle,fill=gray!80,minimum size=6pt, inner sep=0pt]
\centering
\begin{minipage}[b]{0.5\textwidth}
\centering
 \begin{tikzpicture}[x=0.6cm]
   \node[draw,circle,fill=gray!30,minimum size=6pt, inner sep=0pt](a1)at(1,4){};
        \node at (1,4.4) {$\sigma_1$};
        \node[draw,circle,fill=gray!30,minimum size=6pt, inner sep=0pt](a2)at(4,4){};
        \node at (4,4.4) {$\sigma_2$};
        \node[draw,circle,fill=gray!30,minimum size=6pt, inner sep=0pt](a3)at(7,4){};
        \node at (7,4.4) {$\sigma_3$};

        \node[draw,circle,fill=gray!30,minimum size=6pt, inner sep=0pt](r1)at(1,2.5){};
        \node[draw,circle,fill=gray!30,minimum size=6pt, inner sep=0pt](r2)at(4,2.5){};
         \node[draw,circle,fill=gray!30,minimum size=6pt, inner sep=0pt](r3)at(7,2.5){};
        \node at (0.4,2.4) {$v_1$};
        \node at (3.4,2.4) {$v_2$};
        \node at (7.6,2.4) {$v_3$};

        \node[draw,circle,fill=gray!30,minimum size=6pt, inner sep=0pt](r)at(4,1){};
        \node at (4,0.6) {$\rho$};

        \draw[->,>=latex](a1)--(r1) node[midway, auto,swap, left=0mm] {$x_1$};
        \draw[->,>=latex](a2)--(r2) node[midway, auto,swap, left=0mm] {$x_2$};
        \draw[->,>=latex](a3)--(r3) node[midway, auto,swap, right=0mm] {$x_3$};
        \draw[->,>=latex](r1)--(r) node[midway, auto,swap, left=0mm] {$x_1$};
        \draw[->,>=latex](r2)--(r) node[midway, auto,swap, left=0mm] {$x_2$};
        \draw[->,>=latex](r3)--(r) node[midway, auto,swap, right=0mm] {$x_3$};
    \end{tikzpicture}
    \vspace{-1em}
      \caption{The model $(3,3,\Omega^*,T_2)$\\ \qquad~~  with $\Omega^*$ in \textbf{Case 2A}.}
   \label{graph-T2-m=3-xi=1}
\end{minipage}%
\centering
\begin{minipage}[b]{0.5\textwidth}
\centering
{
 \begin{tikzpicture}[x=0.6cm]
     \node[draw,circle,fill=gray!30,minimum size=6pt, inner sep=0pt](a1)at(1,4){};
        \node at (1,4.4) {$\sigma_1$};
        \node[draw,circle,fill=gray!30,minimum size=6pt, inner sep=0pt](a2)at(4,4){};
        \node at (4,4.4) {$\sigma_2$};
        \node[draw,circle,fill=gray!30,minimum size=6pt, inner sep=0pt](a3)at(7,4){};
        \node at (7,4.4) {$\sigma_3$};

        \node[draw,circle,fill=gray!30,minimum size=6pt, inner sep=0pt](r1)at(1,2.5){};
        \node[draw,circle,fill=gray!30,minimum size=6pt, inner sep=0pt](r2)at(4,2.5){};
         \node[draw,circle,fill=gray!30,minimum size=6pt, inner sep=0pt](r3)at(7,2.5){};
        \node at (0.4,2.4) {$v_1$};
        \node at (3.4,2.4) {$v_2$};
        \node at (7.6,2.4) {$v_3$};

        \node[draw,circle,fill=gray!30,minimum size=6pt, inner sep=0pt](r)at(4,1){};
        \node at (4,0.6) {$\rho$};

        \draw[->,>=latex](a1)--(r1);
        \draw[->,>=latex](a1)--(r2);
        \draw[->,>=latex](a2)--(r1);
        \draw[->,>=latex](a2)--(r2);
        \draw[->,>=latex](a3)--(r2);
        \draw[->,>=latex](a3)--(r3);
        \draw[->,>=latex](r1)--(r) node[midway, auto,swap, left=0mm] {$\varphi_1$};
        \draw[->,>=latex](r2)--(r) node[midway, auto,swap, left=0mm] {$\varphi_2$};
        \draw[->,>=latex](r3)--(r) node[midway, auto,swap, right=0mm] {$\varphi_3$};
    \end{tikzpicture}
}
\vspace{-1em}
  \caption{The model $(3,3,\Omega^*,T_2)$\\ \qquad~~ with $\Omega^*$ in \textbf{Case 2B-1}.}
        \label{graph-T2-m=3-xi=2/3}
\end{minipage}
\vspace{-4em}
\end{figure*}

\medskip

To end this section, we summarize all the capacities for the model $(3,m,\Omega,T_2)$ as follows.

\begin{theorem}\label{cap-T2-rewrite}
Consider a model $\big(3,m,\Omega,T_2\big)$, where $1\leq m\leq 3$ and $\Omega=\big(\Gamma_{\sigma_1},\Gamma_{\sigma_2},\Gamma_{\sigma_3}\big)$ is an arbitrary connectivity state.

\begin{itemize}
  \item For $m=1$, $\mathcal{C}(3,1,\Omega,T_2)=2$.
  \item For $m=2$,
  \begin{enumerate}
\item if $\big|\Gamma_{\{\sigma_1,\sigma_3\}}\big|=1$ or $\big|\Gamma_{\{\sigma_2,\sigma_3\}}\big|=1$ in $\Omega$ \textup{(cf.~\textbf{Cases 1B} and \textbf{1C} in Section~\ref{subsec-T2-m=2})}, then
$\mathcal{C}\big(3,2,\Omega,T_2\big)=2$;
\item if $\Gamma_{\{\sigma_i,\sigma_j\}}=V$ for all two-index sets $\{i,j\}\subseteq[3]$ and $\Gamma_{\sigma_1}\cap\Gamma_{\sigma_2}=\emptyset$ in $\Omega$  \textup{(cf. \textbf{Case~2A} in Section \ref{subsec-T2-m=2})}, then
$\mathcal{C}\big(3,2,\Omega,T_2\big)=3/2$;
\item otherwise, namely that in $\Omega$, either $\big|\Gamma_{\{\sigma_1,\sigma_2\}}\big|=1$ or $\Gamma_{\{\sigma_i,\sigma_j\}}=V$ for all two-index sets $\{i,j\}\subseteq[3]$ with $\Gamma_{\sigma_1}\cap\Gamma_{\sigma_2}\neq\emptyset$ \textup{(cf. \textbf{Cases 1A} and \textbf{2B} in Section \ref{subsec-T2-m=2})}, then
$\mathcal{C}\big(3,2,\Omega,T_2\big)=1$.
\end{enumerate}
  \item For $m=3$,
\begin{enumerate}
\item if $\big|\Gamma_{\{\sigma_1,\sigma_3\}}\big|=1$ or $\big|\Gamma_{\{\sigma_2,\sigma_3\}}\big|=1$ in $\Omega$ \textup{(cf.~\textbf{Case 1B} in Section~\ref{subsec-T2-m=3})}, then
 $\mathcal{C}\big(3,3,\Omega,T_2\big)=2$;
\item if in $\Omega$, either $\big|\Gamma_{\{\sigma_1,\sigma_2\}}\big|=1$ or $\big|\Gamma_{\{\sigma_i,\sigma_j\}}\big|\geq2$ for all two-index sets $\{i,j\}\subseteq[3]$ such that one of the following three conditions is satisfied: \textup{\rmnum{1})} $\big|\Gamma_{\{\sigma_1,\sigma_3\}}\big|=2$, \textup{\rmnum{2})}  $\big|\Gamma_{\{\sigma_2,\sigma_3\}}\big|=2$, and \textup{\rmnum{3})} $\big|\Gamma_{\sigma_i}\big|=1$ for some $i\in[3]$ \textup{(cf.~\textbf{Cases 1A} and \textbf{2A} in Section \ref{subsec-T2-m=3})}, then
$\mathcal{C}\big(3,3,\Omega,T_2\big)=1$;
\item if in $\Omega$, $\big|\Gamma_{\sigma_1}\cap\Gamma_{\sigma_2}\big|\geq2$, $\Gamma_{\{\sigma_1,\sigma_3\}}\!=\!\Gamma_{\{\sigma_2,\sigma_3\}}\!=\!V$ and $\big|\Gamma_{\sigma_3}\big|\geq2$ \textup{(cf.~\textbf{Case 2B-1} in Section~\ref{subsec-T2-m=3})}, then
    $\mathcal{C}\big(3,3,\Omega,T_2\big)=2/3$;
\item otherwise, namely that in $\Omega$,  $\big|\Gamma_{\sigma_1}\cap\Gamma_{\sigma_2}\big|=1$, $\big|\Gamma_{\sigma_1}\big|=\big|\Gamma_{\sigma_2}\big|=2$ and $\Gamma_{\{\sigma_1,\sigma_3\}}=\Gamma_{\{\sigma_2,\sigma_3\}}=V$ \textup{(cf.~\textbf{Case 2B-2} in Section \ref{subsec-T2-m=3})}, then
    $\mathcal{C}\big(3,3,\Omega,T_2\big)=3/4$.
\end{enumerate}
\end{itemize}
\end{theorem}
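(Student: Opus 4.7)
The plan is to establish Theorem \ref{cap-T2-rewrite} as a consolidation of the case-by-case capacity computations already carried out in Sections \ref{subsec-T2-m=2} and \ref{subsec-T2-m=3}, combined with Theorem \ref{cap-T2-m=3-3/4} from Section \ref{section-not-tight}. Since each capacity value in the statement has been obtained (or claimed) during the preceding analysis, the main tasks are (i) to verify that the combinatorial descriptions of $\Omega$ listed in the theorem form an \emph{exhaustive and disjoint} partition of all admissible connectivity states, and (ii) to match each resulting class with the correct value through Lemma \ref{lemma-cap=permutation} (isomorphism) and Lemma \ref{lem-cap-omega'-geq-cap-omega} (monotonicity under $\preceq$).

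First I would dispatch the trivial case $m=1$, where $\Omega$ is unique and the value $2$ has been directly computed. For $m=2$, I would show that every $\Omega$ satisfies exactly one of (1)--(3) in the theorem: this is purely combinatorial, partitioning by whether $|\Gamma_{\{\sigma_i,\sigma_j\}}|=1$ for some pair, and, in the complementary case $\Gamma_{\{\sigma_i,\sigma_j\}}=V$ for all $\{i,j\}$, by whether $\Gamma_{\sigma_1}\cap\Gamma_{\sigma_2}$ is empty. I would then cite the corresponding \textbf{Cases 1A--1C, 2A, 2B} of Section \ref{subsec-T2-m=2} for the capacity values $1,2,2,3/2,1$, noting that \textbf{Cases 1B} and \textbf{1C} yield the same capacity $2$ and are grouped into (1), while \textbf{Cases 1A} and \textbf{2B} both yield $1$ and form (3).

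For $m=3$, I would similarly verify that the four classes in the theorem form a partition of all connectivity states. Class (1) is exactly \textbf{Case 1B} of Section \ref{subsec-T2-m=3}. Class (2) gathers \textbf{Case 1A} (handled by comparison with the $m=2$ subgraph $\widehat{\Omega}$ of capacity $1$) together with \textbf{Case 2A} (where an explicit $1$-shot code matches the lower bound $R_2=1$). Class (3) is \textbf{Case 2B-1}, for which the lower bound $R_2=2/3$ is matched by the explicit $3$-shot code constructed in that subsection. Class (4) is \textbf{Case 2B-2}, and this is where I would invoke Theorem \ref{cap-T2-m=3-3/4}: after using isomorphism (Corollary \ref{cor-T-S=3-cap=permutation}) to assume $\Gamma_{\sigma_1}\cap\Gamma_{\sigma_2}=\{v_1\}$ and $\sigma_1\to v_2,~\sigma_2\to v_3$, the two admissible sub-subcases $\Gamma_{\sigma_3}=\{v_2,v_3\}$ and $\Gamma_{\sigma_3}=V$ are precisely $\Omega_1$ and $\Omega_2$, whose common capacity $3/4$ has already been established.

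The main obstacle is bookkeeping rather than mathematics: making sure the listed conditions truly partition all connectivity states (in particular, that \textbf{Class 2} of Section \ref{subsec-T2-m=3} minus \textbf{Case 2A} reduces exactly to \textbf{Case 2B}, and that \textbf{Case 2B} further bifurcates cleanly into 2B-1 and 2B-2 according to $|\Gamma_{\sigma_1}\cap\Gamma_{\sigma_2}|\geq 2$ versus $=1$). A subtle point requiring attention is the use of the isomorphism in 2B-1: we must confirm that, up to the pair $(\pi,\tau)$ preserving $T_2$ (i.e., permutations of $\{\sigma_1,\sigma_2\}$ for sources and arbitrary permutations of $V$), every $\Omega$ in the stated class is dominated by the canonical $\Omega^*$ of Fig.\,\ref{graph-T2-m=3-xi=2/3}, so that the explicit code upper bound transports correctly. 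Once these combinatorial checks are in place, the theorem follows by directly reading off the capacities proved earlier.
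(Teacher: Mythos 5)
Your proposal is correct and follows essentially the same route as the paper: Theorem \ref{cap-T2-rewrite} is indeed proved there by consolidating the exhaustive case analysis of Sections \ref{subsec-T2-m=2} and \ref{subsec-T2-m=3} (with the lower bounds from Lemma \ref{prop-lower-bound-for-T2}, matching explicit codes, and the monotonicity and isomorphism lemmas) together with Theorem \ref{cap-T2-m=3-3/4} for \textbf{Case 2B-2}. Your attention to the exhaustiveness/disjointness of the partition and to the requirement that the source permutation preserve $T_2$ (i.e., only swaps of $\sigma_1,\sigma_2$) matches the paper's treatment.
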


\section{Conclusion}\label{conclusion}

We put forward the distributed source coding problem for function compression in this paper. We explicitly characterized the function-compression capacities of all the distributed source coding models $(s,m,\Omega,T)$ for compressing vector-linear functions with three sources and no more than three encoders (i.e., $s=3$ and $1\leq m\leq 3$), where the connectivity state $\Omega$ and the vector-linear function $T$ are arbitrary. In particular, in characterizing the function-compression capacities for the two most nontrivial models, we developed a novel approach by not only upper bounding but also lower bounding the size of image sets of encoding functions, which thus implies an improved converse proof. The results thus obtained can be applied in network function computation to show that the best known general upper bound proved by Guang \emph{et. al.}~\cite{Guang_Improved_upper_bound} on computing capacity is in general not tight for computing vector-linear functions. This thus answers the open problem that whether this bound is always tight or not.

For the model considered in the current paper, several interesting problems still remain open, such as how to generalize the developed approach for the two most nontrivial models to improve the general upper bounds previously obtained on the computing capacity in network function computation; and whether the results obtained and the techniques developed can be applied to characterize the function-compression capacities for more general models with $s\geq3$ and $m\geq4$.


\begin{thebibliography}{99}

\bibitem{Yeung_distributed_source_coding}
R. W. Yeung and Z. Zhang, ``Distributed source coding for satellite communications,''
\textit{IEEE Trans. Inf. Theory}, vol.~45, no.~4, pp.~3100--3117, May~1999.

%
%
%
%
%




%
%
%
%







\bibitem{Roche_symmetrical}
J. R. Roche, R. W. Yeung, and K. P. Hau, ``Symmetrical multilevel
diversity coding,''
\textit{IEEE Trans. Inf. Theory}, vol.~43, no.~3, pp.~1059-1064, May~1997.

\bibitem{Yeung_distortion}
R. W. Yeung, ``Multilevel diversity coding with distortion,''
\textit{IEEE Trans. Inf. Theory}, vol.~41, no.~2, pp.~412-422, May~1995.

\bibitem{Yeung_symmetrical}
R. W. Yeung and Z. Zhang, ``On symmetrical multilevel diversity coding,''
\textit{IEEE Trans. Inf. Theory}, vol.~45, no.~2, pp.~609-621, Mar.~1999.




\bibitem{Network_information_flow}
R. Ahlswede, N. Cai, S.-Y. R. Li, and R. W. Yeung, ``Network information flow,''
\textit{IEEE Trans. Inf. Theory}, vol. 46, no. 4, pp. 1204-1216, July 2000.


\bibitem{linear}
S.-Y.~R. Li, R.~W. Yeung, and N.~Cai, ``Linear network coding,'' \textit{IEEE
Trans. Inf. Theory}, vol.~49, no.~2, pp. 371--381, Feb. 2003.

\bibitem{alg}
R.~Koetter and M.~M\'{e}dard, ``An algebraic approach to network coding,'' \textit{IEEE/ACM Trans. Networking}, vol.~11, no.~5, pp. 782--795, Oct. 2003.

\bibitem{Zhang-book}
R. W. Yeung, S.-Y. R. Li, N. Cai, and Z. Zhang, ``Network coding theory,''
\textit{Foundations and Trends in Communications and Information Theory}, vol. 2, nos.4 and 5, pp. 241-381, 2005.

\bibitem{yeung08b}
R.~W.~Yeung, \emph{Information Theory and Network Coding}.\hskip 1em plus 0.5em
  minus 0.4em\relax Springer, 2008.





%
%
%
%
%
%




%
%
%
\bibitem{Korner-Marton-IT73}
J.~K\"{o}rner and K.~Marton, ``How to encode the modulo-two sum of binary
sources (Corresp.),'' \textit{IEEE Trans. Inf. Theory}, vol.~25, no.~2, pp.~219--221,
Mar.~1979.
%
%
\bibitem{Doshi_fun_comp_graph_color_sch}
V. Doshi, D. Shah, M. M{\'e}dard, and M. Effros, ``Functional compression through graph coloring,''
\textit{IEEE Trans. Inf. Theory}, vol.~56, no.~8, pp.~3901-3917, Aug.~2010.
%
%
\bibitem{Feizi-Medard}
S. Feizi and M. M{\'e}dard, ``On network functional compression,''
\textit{IEEE Trans. Inf. Theory}, vol.~60, no.~9, pp.~5387-5401, Sept.~2014.
%
%
\bibitem{Orlitsky-Roche_general_side_inf_model_rat_reg}
A. Orlitsky and J. R. Roche, ``Coding for computing,''
     \textit{IEEE Trans. Inf. Theory}, vol.~47, no.~3, pp.~903-917, Mar.~2001.
%
%
\bibitem{Witsenhausen-IT-76}
H.~Witsenhausen, ``The zero-error side information problem and chromatic numbers (corresp.),'' \textit{IEEE Trans. Inf. Theory}, vol.~22, no.~5, pp.~592-593, Sept. 1976.
%
%
\bibitem{Alon-Orlitsky_source_cod_graph_entropies}
N. Alon and A. Orlitsky, ``Source coding and graphs entropies,''
    \textit{IEEE Trans. Inf. Theory}, vol.~42, no.~5, pp.~1329-1339, Sept.~1996.
%
%
%
%
%


%


\bibitem{Guang_Zhang_Arithmetic_sum_TIT}
X. Guang and R. Zhang, ``Zero-error distributed compression of binary arithmetic sum,''
\textit{IEEE Trans. Inf. Theory}, vol.~70, no.~5, pp.~1111-1120, May~2024.




%
%
%
%
%
%
%
%
%
%





\bibitem{Appuswamy11}
R.~Appuswamy, M.~Franceschetti, N.~Karamchandani, and K.~Zeger, ``Network coding for computing: cut-set bounds,'' \textit{IEEE Trans. Inf. Theory}, vol.~57, no.~2, pp.~1015--1030, Feb.~2011.

%



\bibitem{Huang_Comment_cut_set_bound}
C. Huang, Z. Tan, S. Yang, and X. Guang, ``Comments on cut-set bounds on network function computation,''
 \textit{IEEE Trans. Inf. Theory}, vol.~64, no.~9, pp.~6454--6459, Sept.~2018.




\bibitem{Guang_Improved_upper_bound}
X. Guang, R. W. Yeung, S. Yang and C. Li, ``Improved upper bound on the network function computing capacity,''
\textit{IEEE Trans. Inf. Theory}, vol.~65, no.~6, pp.~3790--3811, June~2019.




\bibitem{Appuswamy-lin-func-lin-code}
R.~Appuswamy and M.~Franceschetti, ``Computing linear functions by linear coding over networks,'' \textit{IEEE Trans. Inf. Theory}, vol.~60, no.~1, pp.~422--431, Jan.~2014.


\bibitem{Li_Xu_vector_linear_diamond}
D. Li and Y. Xu, ``Computing vector-linear functions on diamond network,''
\textit{IEEE Commun. Lett.}, vol.~26, no.~7, pp.~1519-1523, July~2022.






\bibitem{Guang_Zhang_Arithmetic_sum_Sel_Areas}
R. Zhang, X. Guang, S. Yang, X. Niu, and B. Bai, ``Computation of binary arithmetic sum over an asymmetric diamond network,''
\textit{IEEE J. Sel. Areas Inf. Theory}, vol.~5, pp.~585--596, 2024.






\bibitem{Slepian-Wolf-IT73}
D.~Slepian and J.~K.~Wolf, ``Noiseless coding of correlated information sources,'' \textit{IEEE Trans. Inf. Theory}, vol.~19, no.~4, pp.~471--480, July~1973.




\bibitem{Koulgi_zero-error-cod_cor_inf_sour}
P. Koulgi, E. Tuncel, S. Regunathan, and K. Rose, ``On zero-error coding of correlated sources,''
    \textit{IEEE Trans. Inf. Theory}, vol.~49, no.~11, pp.~2856-2873, Nove.~2003.






\bibitem{Wyner-Ziv_rat_distortion_fun_sid_inf}
A. Wyner and J. Ziv, ``The rate-distortion function for source coding
with side information at the decoder,''
\textit{IEEE Trans. Inf. Theory}, vol.~22, no.~1, pp.~1-10, Jan.~1976.


\bibitem{Yamamoto_rat_distortion_gener_fun_sid_inf}
H. Yamamoto, ``Wyner-Ziv theory for a general function of the
correlated sources (corresp.),''
 \textit{IEEE Trans. Inf. Theory}, vol.~28, no.~5,
pp.~803-807, Sept.~1982.


\bibitem{Berger-Yeung_multi_source_cod_dist_cri}
T. Berger and R. W. Yeung, ``Multiterminal source encoding with
one distortion criterion,''
 \textit{IEEE Trans. Inf. Theory}, vol.~35, no.~2, pp.~228-236, Mar.~1989.





\end{thebibliography}
\end{document}